\definecolor{hypercolor}{rgb}{0,0.2,0.7}
\def\cA{{\mathcal A}}
\def\cB{{\mathcal B}}
\def\cC{{\mathcal C}}
\def\cH{{\mathcal H}}
\def\cM{{\mathcal M}}
\def\mI{\mathscr{I}}
\def\RR{{\mathbb R}}
\def\beq{\begin{eqnarray}}
\def\eeq{\end{eqnarray}}
\def\pa{\partial}
\def\at{\mleft(}               
\def\aq{\mleft[}               
\def\ct{\mright)}              
\def\cq{\mright]}              
\newtheorem{theorem}{Theorem}[section]
\newtheorem{proposition}[theorem]{Proposition}
\newtheorem{lemma}[theorem]{Lemma}
\theoremstyle{definition}
\newtheorem{definition}{Definition}[section]
\newtheorem{remark}{Remark}[section]
\def\operatorT{\mathcal{T}}
\newcommand{\wick}[1]{{:}#1{:}}
\newcommand{\expvalom}[1]{\expval{\wick{#1}}_\omega}
\newcommand{\dif}{\mathop{}\!\mathrm{d}}
\begin{document}


\par
\bigskip
\LARGE
\noindent
{\bf Existence and uniqueness of solutions of the semiclassical Einstein equation in cosmological models}
\bigskip
\par
\rm
\normalsize


\large
\noindent
{\bf Paolo Meda$^{1,2,a}$}, {\bf Nicola Pinamonti$^{3,2,b}$}, {\bf Daniel Siemssen$^{4,c}$}  \\
\par
\small
\noindent$^1$ Dipartimento di Fisica, Universit\`a di Genova, Italy.
\smallskip

\noindent$^2$ Istituto Nazionale di Fisica Nucleare - Sezione di Genova, Italy.

\smallskip

\noindent$^3$ Dipartimento di Matematica, Universit\`a di Genova, Italy.

\smallskip

\noindent$^4$ Department of Mathematics, University of York, UK.
\smallskip

\noindent E-mail:
$^a$paolo.meda@ge.infn.it,
$^b$pinamont@dima.unige.it,
$^c$daniel.siemssen@york.ac.uk
\normalsize

\par

\rm\normalsize

\rm\normalsize


\par
\bigskip

\rm\normalsize
\noindent {\small Version of \today}

\par
\bigskip

\rm\normalsize

\bigskip

\noindent
\small
{\bf Abstract.}
We prove existence and uniqueness of solutions of the semiclassical Einstein equation in flat  cosmological spacetimes driven by a quantum massive scalar field with arbitrary coupling to the scalar curvature. In the semiclassical approximation, the backreaction of matter to curvature is taken into account by equating the Einstein tensor to the expectation values of the stress-energy tensor in a suitable state. We impose initial conditions for the scale factor at finite time and we show that a regular state for the quantum matter compatible with these initial conditions can be chosen.
Contributions with derivative of the coefficient of the metric higher than the second are present in the expectation values of the stress-energy tensor and the term with the highest derivative appears in a non-local form.
This fact forbids a direct analysis of the semiclassical equation, and in particular, standard recursive approaches to approximate the solution fail to converge. In this paper we show that, after partial integration of the semiclassical Einstein equation in cosmology, the non-local highest derivative appears in the expectation values of the stress-energy tensor through the application of a linear unbounded operator which does not depend on the details of the chosen state. We prove that an inversion formula for this operator can be found, furthermore, the inverse happens to be more regular than the direct operator and it has the form of a retarded product, hence causality is respected.
The found inversion formula applied to the traced Einstein equation has thus the form of a fixed point equation. The proof of local existence and uniqueness of the solution of the semiclassical Einstein equation is then obtained applying the Banach fixed point theorem.

\section{Introduction}

The analysis of the backreaction of linear quantum fields in the context of cosmological spacetimes has been developed in several recent works \cite{Dappiaggi:2008mm, Pinamonti:2010is, Pinamonti:2013wya, Hack:2015zwa, Gottschalk:2018kqt}.
In those works, the quantization of linear fields on curved backgrounds is performed using the algebraic approach (see e.g.~\cite{Haag:1992hx, LibroAQFT}).
According to that paradigm, the first step is the construction of the algebra of observables.
Actually, even if there is no preferred state to choose on a generic curved spacetime, on globally hyperbolic spacetimes the canonical commutation relations of linear fields can be given prescribing the form of the product among the generators of this algebra \cite{Brunetti:2001dx, brunetti1996microlocal, Hollands:2001nf, Hollands:2001fb, Hollands:2004yh}.
The backreaction of a quantum field on the curvature is taken into account by the semiclassical Einstein equation (SCE for shorts)
\begin{equation}
  \label{SCE}
  G_{a b} + \Lambda g_{ab} = 8\pi G \expvalom{T_{ab}},
\end{equation}
where $G_{ab}$ is the Einstein tensor, $\Lambda$ the cosmological constant, $g_{ab}$ the spacetime metric, $G$ the Newton constant, $\expvalom{T_{ab}}$ the expectation value of the quantum stress-energy tensor in a suitable state $\omega$ and we set $c = \hbar = 1$.
We observe that, in the algebraic approach, the requirements given by Wald \cite{Wald1977back, Wald:1978ce, Wald:1978pj} for any normal ordering prescription necessary to give a meaningful stress-energy tensor are satisfied \cite{Hollands:2001nf, Hollands:2004yh} and, if the chosen state $\omega$ is sufficiently regular, we also obtain finite expectation values \cite{brunetti1996microlocal}.

The quantum matter we shall consider in this paper is described by a real linear field whose classical equation of motion is 
\begin{equation}\label{eq:matter}
  -\square \phi + m^2\phi + \xi R\phi = 0,
\end{equation}
where $\square$ is the d'Alembert operator of the background metric, $m$ is the mass and $\xi$ describes the coupling to the scalar curvature $R$.
It is difficult to directly solve \eqref{SCE} for arbitrary values of the coupling constant $\xi$ because some contributions involving derivatives of the coefficients of the metric up to the fourth order are present in the expectation values of the stress-energy tensor.
This peculiar feature makes the semiclassical equation very different from its classical counterpart, which contains only second order derivatives of the metric.
Furthermore, these terms with higher order derivatives cannot be completely reabsorbed in a choice of the renormalization freedom present in the construction of local Wick polynomials (such as the stress-energy tensor) of the theory \cite{Hollands:2001nf, Hollands:2004yh}.
In fact, a careful analysis of the expectation value of the stress-energy tensor reveals that the term with the highest derivative appear in some non-local contributions (see the contribution to $\langle\wick{\phi^2}\rangle_\omega$ given in \eqref{T-Vk} of Proposition \ref{prop:decomposition}).
Actually, even if the normal ordered stress-energy tensor is a local quantum observable, its expectation values in a suitable state $\omega$ may involve some non-localities.
In particular, to fix the state one has to prescribe it on each point of a Cauchy surface, hence non-local contributions may arise in this way.
For this reasons, equation \eqref{SCE} cannot be written in normal form and thus a direct analysis of solution to \eqref{SCE} is problematic.

The problems with these higher order derivatives can be avoided only in special cases like the case of the massless conformally coupled scalar fields \cite{Wald:1978ce} and the case of massive conformally coupled fields \cite{Pinamonti:2010is, Pinamonti:2013wya}.
In these cases, the existence of a local and global solution can be obtained directly from \eqref{SCE}.
More recently, the problem of the existence of a solution of the semiclassical Einstein equation in cosmological spacetimes in the case of generic coupling has been addressed in \cite{Gottschalk:2018kqt}.
In that work, the semiclassical equation has been written as a dynamical system for the germ (the set of moments) of the finite part of the two-point function evaluated at coinciding points.
This dynamical system admits unique solutions when the chosen matter two-point function has good analytic properties like the two-point function of an equilibrium state.
The price to pay is that the dimension of that dynamical system is infinite and furthermore, for the case of a generic state, it is not clear if all these moments correspond to those obtained for a meaningful quantum state.
Another recent study of the initial value problem associated to semiclassical equations can be found in \cite{Juarez-Aubry:2019jon}.
Furthermore, numerical analyses have been performed in the past by Anderson in the study of effects of particle creation in the early universe \cite{Anderson:1983nq, Anderson:1984jf, Anderson:1985cw, Anderson:1985ds}.
A more recent numerical analysis of the semiclassical problem in cosmology performed in \cite{Nicolai} on the basis of the theoretical work in \cite{Gottschalk:2018kqt} obtained solutions which do not show an unphysical blowup.

In this paper we shall follow the approach presented in \cite{Pinamonti:2013wya} to prove the existence and uniqueness of local solutions of the full semiclassical Einstein equation in the case of a cosmological background without writing the system as an infinite dimensional dynamical system. Before discussing the details of the methods we shall use we recall some facts about the application in cosmology. In particular,we recall that some models of inflation are based on the analysis of the semiclassical version of the Einstein equations.
This is the case for the Starobinski model \cite{Starobinsky:1980te, Kofman:1985aw} where the higher derivative terms drive the expansion close to the Big Bang.
Solutions of this model have physical meaning in the regime where $R_{abcd}R^{abcd} \ll m_P^4$, here $m_P = \sqrt{\hbar c/ G}$ is the Planck mass, i.e., when quantum gravity effect can be assumed to be negligible.
It is furthermore claimed that this approximation holds only when the fluctuations inside the quantum stress-energy tensor are small \cite{Kuo:1993if}.
In this perspective, the validity of the semiclassical regime has been reformulated more recently in the framework of the so-called stochastic semiclassical gravity (or stochastic gravity), where the fluctuations of the stress-energy tensor are viewed as a stochastic source for the semiclassical equations \cite{Hu:1994dx, Hu:2008rga, Roura:1999qr, Roura:2007jj} (see also \cite{Pinamonti:2013zba}).
Moreover, a semiclassical analysis may help to study the formation of structures and galaxies.
It is often argued that these processes at large scales arise from small density perturbations at the early stages of the Universe, which are produced by the quantum fluctuations of a scalar field (the inflaton, the Higgs field for instance)
\cite{Starobinsky1982, Mukhanov:1981xt, PhysRevLett.49.1110, Hawking1982}.
For a summary of the discussions about the applications of semiclassical gravity we refer to \cite{Ford:2005qz}, furthermore, for some recent analyses of the $\Lambda$CDM-model in semiclassical gravity see \cite{Hack:2013uyu} and \cite{Matsui:2019tlf}.

As discussed above, in this paper we shall show that solutions of \eqref{SCE} exist for short finite intervals of time.
The main steps of the construction we shall present are the following.
We fix our attention to flat cosmological spacetimes, see \eqref{FLRW-metric} for the precise form of the metric. Since these spacetime are conformally flat, we shall use the conformal time \eqref{conformal-time} to describe the time evolution.
These spacetimes posses a single dynamical degree of freedom which is the scale factor $a$, and we use as dynamical equations the conservation of the stress-energy tensor and the trace of the semiclassical Einstein equation.
This system of equations is equivalent to the first Friedmann equation up to a constraint on initial conditions fixed at some initial time.
Hence, the system of equations we have to solve to determine the scale factor $a(\tau)$ for the conformal time $\tau$ contained in some interval $[\tau_0,\tau_1]$ is
\begin{subequations}
  \label{SCE-trace-intro}
  \begin{align}[left=\empheqlbrace]
    & \nabla_a \expvalom{T^a{}_b} = 0, \label{eq:SCE-conservation} \\
    & {-R} + 4\Lambda = 8\pi G \expvalom{T}, \label{eq:SCE-trace} \\
    & G_{00}(\tau_0) - a^2 \Lambda = 8\pi G \expvalom{T_{00}}(\tau_0), \label{eq:SCE-constraint}
  \end{align}
\end{subequations}
with the evolution of the state $\omega$ determined by the Klein--Gordon equation \eqref{eq:matter}. We prove that this system of equations can be solved once suitable initial conditions for $a$ and for the quantum state $\omega$ are fixed at $\tau=\tau_0$. In particular, since $\expvalom{T}$ contains fourth order derivatives of the scale factor, the initial conditions for $a$ fixes its derivative up to the third order. In order to fulfil the constraint at initial time described by \eqref{eq:SCE-constraint}, we do not put restrictions on the initial values of the scale factor but we look at this constraint as a limitation on the possible states for the quantum matter.

The state that we use needs to be homogeneous and isotropic, furthermore, for simplicity, we shall restrict our attention to the case of pure quasifree states. Notice that, in this paper, the state is used to obtain expectation values of the stress-energy tensor and of the Wick square. In general, since the field we are considering is linear, only the one-point function and the two-point function enter in the evaluation of these expectation values. The contribution of the one-point function can be understood as the contribution of the classical part of the field. The contribution of the two-point function can be analyzed as in this paper. For this reason, the request of being quasifree could be easily dropped. The request of being pure could also be dropped admitting two-point functions which are convex combinations of two-point functions of pure states. We recall in section \ref{se:pointsplitting} that it is possible choose a renormalization prescription for $T$ which ensures that the expectation values is conserved \cite{Hollands:2004yh}. Hence \eqref{eq:SCE-conservation} is always fulfilled by definition. Firstly, we prove in Proposition \ref{prop:Friedman-constraint} that the state for the quantum matter can be chosen to be regular enough to give finite expectation values of the stress-energy tensor and, furthermore, it can be chosen in such a way to solve the constraint mentioned above \eqref{eq:SCE-constraint} for every scale factor compatible with the chosen initial conditions. Secondly, we analyze the trace of the semiclassical Einstein equation \eqref{eq:SCE-trace} as the system of equations given in Proposition \ref{pr:SCE-wave}.
\begin{equation}
  \label{SCE-wave-intro}
    \begin{cases}
      (-\square + M_c ) F = S, \\
      \expvalom{\phi^2} - c_\xi R = F,
    \end{cases}
\end{equation}
where $c_\xi$ and $M_c$ are suitable constants, $S$ is a function of $a,a'$ and $a^{(2)}$ and where $\langle\wick{\phi^2}\rangle_\omega$ is the expectation of the normal ordered Wick square in the state $\omega$.

With a partial integration of this system of equation, namely solving the first equation in \eqref{SCE-wave-intro} for $F$ as discussed in Theorem \ref{theo:sol-SCE-wave}, we reduce the problem to the analysis of the single equation \eqref{SCE-state} or \eqref{eq:time-derivative-phi2}, which is equivalent to the second equation in \eqref{SCE-wave-intro}. This latter equation has the form 
\begin{equation}\label{eq:semiclassical-trace-T}
\partial_\tau\langle\wick{\phi^2}\rangle_\omega = \mathcal{S},
\end{equation} 
where $\langle\wick{\phi^2}\rangle_\omega$ is the expectation of the normal ordered Wick square in the state $\omega$ and where $\mathcal{S}$ is some source term depending on the curvature and on the various initial conditions for the scale factor and for the state (only third order derivatives of $a$ enters $\mathcal{S}$).
We then identify the term with the highest derivative which appears in the expectation value $\langle\wick{\phi^2}\rangle_\omega$ through the application of an unbounded linear operator (retarded) $\operatorT$, see Proposition \ref{prop:SCE-V}.
More precisely
\[
\partial_\tau\langle\wick{\phi^2}\rangle_\omega = \operatorT_{\tau_0}[f] + \mathcal{R},
\]
where $f =  \left(  m^2 a^2  + \at \xi - \frac{1}{6} \ct  R a^2 \right) '$ is the time derivative of the potential $V$ given in \eqref{potential} with respect to conformal time (this derivative is denoted by $'$). Thus, $f$ depends of $a$ and its derivative up to the third order, while $\mathcal{R}$ depends on the chosen state, on $a$ and its derivative up to the third order. Furthermore
\[
\operatorT_{\tau_0}[f](\tau)= -\frac{1}{8\pi^2}\int_{\tau_0}^{\tau} {f'}(\eta) \log(\tau - \eta) \dif \eta.
\]
This contribution does not depend on the state so it is not an artefact of the initial condition for the state.
We observe in Proposition \ref{prop:T} that this is the source of the loss of derivatives.
Actually, on small intervals of time $\operatorT$ satisfies the following inequality:
$\|\operatorT_{\tau_0}[f]\|_\infty \leq C \|f'\|_\infty$ for a fixed constant $C$. However, it is not continuous with respect to the uniform norm, so to control $\|\operatorT_{\tau_0}[f]\|_\infty$ we need fourth order derivatives of $a$.
However, we find an inversion formula for this operator in Proposition \ref{prop:T-inverse} and we show that we can reconstruct $f$ from $h=\operatorT_{\tau_0}[f]$ as $f=f(\tau_0) + \operatorT_{\tau_0}^{-1}[h]$. Furthermore, the inverse operator $\operatorT_{\tau_0}^{-1}$ appearing in the inversion formula \eqref{eq:inversion-formula} is more regular than $\operatorT_{\tau_0}$ and in particular we prove in equation \eqref{eq:T-continuity} of Proposition \ref{prop:T-inverse}
that it happens to be continuous with respect to the uniform norm. Hence, no loss of derivatives is introduced applying this inversion formula to \eqref{eq:semiclassical-trace-T} and the equation we get is 
\[
f = f(\tau_0) +  \operatorT_{\tau_0}^{-1}[\mathcal{S} - \mathcal{R}].
\]
Finally, in \eqref{SCE-fixed-point}, we rewrite this equation as a fixed point equation
\[
X'= \mathcal{C}[X'] 
\]
for $X'$, where $X$ given in \eqref{X} is related to the scale factor by $X = \frac{1}{6}a^2R = a''/a$. Furthermore, that fixed point equation is constructed with a map $\mathcal{C}$ introduced in Lemma \ref{lem:cT} which acts on a suitable compact subset of $C[\tau_0,\tau_1]$. Notice that the initial conditions fix the derivative of the scale factor $a$ up to the third order at $\tau_0$ and hence we can associate to every of $X'$ a unique scale factor $a$ integrating the equation $a'' = X a$ once these initial conditions are known. In Proposition \ref{prop:contraction} we prove that the map $\mathcal{C}$ is a contraction map if the time interval $[\tau_0,\tau_1]$ on which it is analyzed is sufficiently small.
In Theorem \ref{theo:contractio}, existence and uniqueness is then obtained by applying the Banach fixed point theorem.

The structure of the paper is the following: in the next section we give a brief description of the classical cosmological scenario and we present the basic tools necessary to discuss the quantization of the real quantum scalar field on flat Friedmann-Lema\^itre-Robertson-Walker (FLRW) spacetimes.
In particular we discuss the Hadamard point-splitting procedure to regularize composite fields like the stress-energy tensor, the energy density or $\phi^2$.
In section \ref{se:states} we discuss the regularity conditions on the quantum state which are necessary to obtain finite expectation values of the energy density and of $\phi^2$ and we give an estimate for the expectation value of $\phi^2$ and its first time derivative in these states.
In section \ref{se:SEE-integration} we discuss the semiclassical Einstein equation as the system of equations formed by the traced semiclassical equation and a constraint which needs to be fixed at initial time.
We show that the initial constraint can always be fulfilled and we partially integrate the trace equation.
The problem of finding solutions of the SCE is thus reduced to the problem of finding solutions of a single equation.
In section \ref{se:SEE-fixed-point} we discuss the properties of that equation.
In particular, we isolate the contribution with the highest derivative and we show how to write this equation as a fixed point equation inverting a certain unbounded operator.
We finally discuss the existence and uniqueness of the solutions of that equation.
The last section contains an outlook on possible future developments.
Some technical propositions and lemmas are collected in the appendix.

\section{Quantum field theory on cosmological spacetimes}\label{section:qft}

\subsection{Friedmann-Lema\^itre-Robertson-Walker spacetime}

According to the cosmological principle and recent observations, our universe is homogeneous and isotropic at large scales and it is essentially spatially flat, hence it can be accurately described by a flat FLRW spacetime $(\mathcal{M},g)$ where $\cM = I_t \times \Sigma$, $I_t \subset \mathbb{R}$ is an interval of time and $\Sigma$ is a three dimensional Euclidean space.
The metric is
\begin{equation}
  \label{FLRW-metric}
  g = -\dif t \otimes \dif t+a(t)^{2} \sum_{i=1}^3\dif x^{i} \otimes \dif x^{i},
\end{equation}
where the Euclidean coordinates $\vec{x} = (x^1,x^2,x^3)$ are the comoving coordinates of an isotropic observer while $t$ denotes \emph{cosmological time}.
The strictly positive function $a(t)$ is the \emph{scale factor} which is the unique degree of freedom of the spacetime.
It describes the ``history'' of our universe and is determined by solving the Einstein equations.

Every flat FLRW spacetime is conformally flat as can be seen writing the metric \eqref{FLRW-metric} with respect to \emph{conformal time}
\begin{equation}
  \label{conformal-time}
  \tau \doteq \tau_0 + \int_{t_0}^{t} \frac{\dif\eta}{a(\eta)}.
\end{equation}
In local conformal coordinates $(\tau,\vec{x})$, the metric is
\begin{equation}
  \label{flat-FLRW}
  g = a(\tau)^2 \Bigl( -\dif\tau \otimes \dif\tau + \sum_{i=1}^3 \dif x^i \otimes \dif x^i \Bigr),
\end{equation}
viz., FLRW spacetimes are conformally related to the Minkowski spacetime by a conformal transformation whose conformal factor is $a(\tau)$.
In the following we shall consider the scale factor $a(\tau)$ as a function of the conformal time.
Derivatives with respect to conformal time will be denoted by primes and derivatives with respect to cosmological times by dots, i.e., for the first derivatives of a time-dependent function $f$ we write $f'$ and $\dot{f}$, respectively.

\begin{remark}
  \label{rem:scale-factor}
  As already pointed out for instance in \cite{Agullo:2014ica}, the semiclassical Einstein equation for $\xi \neq \frac16$ involves always up to four time derivatives of the scale factor $a(\tau)$, due to the mass dimension of the stress-energy tensor as composite operator, which is equal to four.
  Thus, in the case of strong solutions, $a(\tau)$ has to be at least a $C^4$ function.
\end{remark}

The request of having an homogeneous and isotropic solution imposes constraints on the stress-energy tensor $T_{ab}$ which sources the Einstein equation.
Both in comoving and in conformal coordinates it must have the form $T_{a}{}^b = \operatorname{diag}(-\varrho, p, p, p)$, where $\varrho$ is the matter's \emph{energy density} and $p$ its \emph{pressure}.
Since the stress-energy tensor is covariantly conserved, i.e., $\nabla_a {T^a}_b = 0$, the Einstein equation reduces to the first Friedmann equation \begin{equation*} H^2 = \frac{8 \pi G}{3} \varrho + \frac{\Lambda}{3}, \end{equation*} where $H \doteq \frac{\dif}{\dif t} \log{a}$ is the Hubble function.
Analogously, if the stress-energy tensor is conserved, the dynamics of $a$ is determined by the traced Einstein equation,
\begin{equation}
  \label{trace-equation}
  - R + 4 \Lambda = 8\pi G T,
\end{equation}
together with an initial condition which corresponds to the validity of the first Friedmann equation at an initial time $\tau = \tau_0$, i.e.,
\begin{equation}
  \label{00-equation}
  H(\tau_0)^2 = \frac{8\pi G}{3}\varrho(\tau_0) + \frac{\Lambda}{3}.
\end{equation}
We shall adopt this second set of equations in the semiclassical analysis.

\subsection{Scalar quantum field}

In this work we consider a very simple kind of quantum matter: a real linear quantum massive Klein-Gordon field coupled to curvature with a generic coupling, the corresponding classical equation of motion is \eqref{eq:matter}.
In order to deal with the semiclassical Einstein equation \eqref{SCE} we have to analyze the expectation value of the stress-energy tensor of this system, hence, we have to discuss the quantization of the system and we have to select a quantum state.

The quantization of this scalar field can be performed constructing the algebra of observables generated by the quantum field $\phi$ \cite{Haag:1992hx,LibroAQFT} implementing the canonical commutation relations (CCR).
In particular, on every smooth globally hyperbolic spacetime $(\cM,g)$ one can construct the CCR algebra of quantum fields $\cA(\cM,g)$ as the $*$-algebra generated by $\{\phi(f),f\in\mathcal{D}(\mathcal{M})\}$ which is the set of linear fields smeared with compactly supported smooth functions satisfying the following relations \[ \phi({P} f)=0, \qquad \phi(f)^{*}=\phi(\bar{f}), \qquad[\phi(f), \phi(h)]=\mathrm{i} \Delta(f, h), \] where $f,h$ are compactly supported smooth functions, namely elements of $\mathcal{D}(\cM)=C^\infty_0(\cM)$, $P$ is the Klein-Gordon operator and $\Delta=\Delta_R-\Delta_A$ is the \emph{causal propagator} on $(\cM,g)$ defined as the difference of the unique retarded and advanced fundamental solution of $P\phi = 0$.

Thanks to the conformal flatness of the metric, the Klein-Gordon operator on FLRW spacetimes can be written in conformal time $\tau$ as
\begin{equation}
  \label{KG-operator}
  P= -\square + \xi R + m^{2} = \frac{1}{a^{3}}\left(\partial_\tau^{2} - \vec{\nabla}^2 + a^2 \left(\xi-\frac{1}{6}\right) R + a^{2} m^{2}\right) a,
\end{equation}
where $\square \doteq g^{ab}\nabla_a\nabla_b$ and $\vec{\nabla}^2 \doteq \sum_i \partial_i^2$ denotes the spatial Laplacian operator with respect to the (comoving) spatial coordinates, $m$ is the mass and $\xi$ is the coupling constant to the scalar curvature.

In the algebraic language, a quantum state $\omega$ is a positive, normalized, linear functionals over $\mathcal{A}(\mathcal{M},g)$.
Since $\mathcal{A}(\mathcal{M},g)$ is generated by linear fields, the state is determined once the $n$-point functions $\omega_n\in\mathcal{D}'(\mathcal{M}^n)$ are given \[ \omega_n(f_1,\dots, f_n) \doteq \omega(\phi(f_1)\dots \phi(f_n)).
\]
On curved spacetime there is no preferred vacuum to be used as reference state.
Here we shall choose a state which is at least quasifree (Gaussian) and pure.
The $n$-point functions of quasifree states are completely determined once the two-point function is given.
On $(\mathcal{M},g)$ the two-point function of a pure state which is homogeneous and isotropic is of the form \cite{luders1990local, Hack:2015zwa}
\begin{equation}
  \label{Gaussian-state}
  \omega_{2}(x,y)=\lim _{\varepsilon \rightarrow 0^{+}} \frac{1}{(2 \pi)^{3}} \int_{\mathbb{R}^{3}} \frac{\overline{\zeta}_k\left(\tau_x\right)}{a\left(\tau_x\right)} \frac{\zeta_k\left(\tau_y\right)}{a\left(\tau_y\right)} \mathrm{e}^{\mathrm{i} \vec{k} \cdot(\vec{x}-\vec{y})} \mathrm{e}^{-\varepsilon k} \dif \vec{k},
\end{equation}
where $k \doteq |\vec{k}|$ and where the \emph{temporal modes} $\zeta_k$ fulfil the equation
\begin{equation}
  \label{mode-eq}
  \zeta''_k(\tau) + \Omega_k^2(\tau) \zeta_k(\tau)=0, \qquad \Omega_k^2(\tau) \doteq k^2 + a^{2}m^{2} + \at \xi-\frac{1}{6} \ct Ra^2,
\end{equation}
and satisfy the normalization condition
\begin{equation}
  \label{normalization}
  \zeta'_k \overline{\zeta}_k-\zeta_k \overline{\zeta}'_k=\mathrm{i}.
\end{equation}

In this paper we shall consider only cases where $\Omega_k^2(\tau) > 0$ for every $k$.
Once $m>0$ and $\xi$ are fixed, this will be done assuming suitable initial conditions for $R$ and restricting the time interval accordingly. Equation \eqref{mode-eq} and the condition \eqref{normalization} do not fix the modes uniquely and, as already said, on a generic FLRW spacetime there is no preferred choice. However, here we are interested in computing expectation values of the stress-energy tensor. For this reason, the modes we shall select need to give a state which is regular enough to have a finite expectation value of the normal ordered stress-energy tensor.

\subsection{Point splitting regularization}\label{se:pointsplitting}

Local fields like $\phi^2$ or $T_{ab}$, necessary for the analysis of the semiclassical equation, are not elements of $\cA(\cM,g)$, furthermore, their expectation values on generic states diverge. That is because these fields are products of fields at the same point and $\omega_n$ are distributions which have singularities in the coinciding point limits.
This problem is usually overcome considering normal ordered fields. Actually, physically relevant states are those for which normal ordered fields have finite expectation values and are called Hadamard states. Furthermore, all such states have a universal divergence \cite{KAY199149,Wald1977back}.
Hence, the idea beyond the normal ordering prescription is to subtract these universal divergences before taking the coinciding point limit.
The fields obtained in this way are covariant because only local geometry enters in the construction of the subtraction \cite{brunetti1996microlocal, Hollands:2001nf}.
In this procedure there is a freedom which has been classified by Hollands and Wald in \cite{Hollands:2004yh} and for every normal ordered fields it amounts to fixing a finite number of renormalization constants.

To be more precise, we recall that the singularity is universal for Hadamard states.
Moreover, thanks to the work of Radzikowski \cite{radzikowski1996micro}, a quasifree state is Hadamard if and only if its two-point function fulfils the microlocal spectrum condition, see also \cite{brunetti1996microlocal}.
The two-point function of a Hadamard state in a convex geodesic neighbourhood $\mathcal{O}$ is always given by 
\[ 
\omega_2(x,y) \doteq \cH(x,y) + w(x,y) = \lim _{\epsilon \rightarrow 0^{+}} \cH_{\varepsilon}(x,y)+ w(x,y),
\] 
where the limit is taken in the distributional sense and where
\begin{equation}
  \label{Hadamard-singularity}
  \cH_{\varepsilon}(x,y) \doteq \frac{u(x,y)}{\sigma_{\varepsilon}}+v(x,y) \log \left(\frac{\sigma_{\varepsilon}}{\lambda^{2}}\right)
\end{equation}
is the Hadamard singularity, with $\sigma_{\varepsilon}(x,y)= \sigma(x,y) + \mathrm{i}\epsilon( t(x)-t(y)) $.
Here $\sigma(x,y)$ is one half of the geodesic distance between $x$ and $y$ taken with sign and $t$ is any time function.
Furthermore, $\lambda$ is a length scale.
The so-called Hadamard coefficients $u(x,y), v(x,y) = \sum_n v_n(x,y) \sigma(x,y)^n$ and $w(x,y)$ are smooth functions on $\mathcal{O} \times \mathcal{O}$: $u$ and $v$ are real-valued bi-scalars fixed by the metric and the equation of motion $P\phi = 0$, while $w(x,y)$ characterizes the state and must be chosen in such a way that $\omega_{2}$ is a bi-solution of the Klein-Gordon equation.
On Hadamard states, normal ordered fields can be obtained by a point-splitting regularization \cite{Brunetti:1999jn, Hollands:2001nf} which consists of subtracting the divergences contained in $\cH$ before computing the coinciding point limits.

The classical form of the stress-energy tensor is
\begin{align*}
  T_{ab} & = \nabla_a \phi \nabla_b \phi - \frac{1}{2} g_{ab} (\nabla_c\phi \nabla^c \phi +m^2 \phi^2) + \xi \left(G_{ab} \phi^2 -\nabla_a\nabla_b\phi^2 + g_{ab} \nabla_c\nabla^c \phi^2 \right).
\end{align*}
Notice that in terms like $\nabla_a\phi^2$ and $\nabla_a\nabla_b\phi^2$ the normal ordering prescription is implemented before applying the covariant derivatives.
Furthermore, since $\nabla_a \phi \nabla_b \phi = \nabla_a \nabla_b \phi^2 - \phi \nabla_a \nabla_b \phi$, we just need to discuss the normal ordering of
$\Psi_{ab} \doteq \phi \nabla_a \nabla_c \phi$ and $\Psi \doteq \phi^2$, see e.g. \cite{Hollands:2001nf, Hollands:2004yh, moretti2003comments}.
Their expectation values are thus obtained as
\begin{equation}
  \label{phi2-Hadamard}
  \begin{aligned}
    \expval{\wick{\phi^2}(x)}_\omega & = \lim_{y \rightarrow x} \left( \omega_2(y,x) - \mathcal{H}(y,x)\right)  = \lim_{y \rightarrow x} w(y,x),                                                           \\
    \expval{\wick{\phi \nabla_a \nabla_b \phi}(x)}_\omega & = \lim_{y \rightarrow x} \nabla^{(x)}_a \nabla^{(x)}_b\left( \omega_2(y,x) - \mathcal{H}(y,x)\right)= \lim_{y \rightarrow x}  \nabla^{(x)}_a \nabla^{(x)}_b w(y,x).
  \end{aligned}
\end{equation}
However, the normal ordering prescription of defining local Wick polynomials fixes the fields only up to certain combinations of local curvature terms and the mass.
Imposing some fundamental constraints like locality, scaling behaviour, covariance, this freedom can be classified \cite{Hollands:2001nf} and it reduces to the freedom of fixing a finite number of renormalization constants.
In the case of $\phi^2(x)$, one can define a new equivalent Wick monomial \[ \wick{\tilde{\phi}^2}(x) = \wick{\phi^2}(x) + \tilde{\alpha}_1 R(x) + \tilde{\alpha}_2 m^2 \] for arbitrary real renormalization constants $\tilde{\alpha}_1,\tilde{\alpha}_2$.
The freedom in the construction of $\phi \nabla_a \nabla_b \phi$ is further constrained by the requirement that $\nabla_a{T^a}_{b}=0$.
The resulting renormalization freedom of the stress-energy tensor is thus \[ \wick{\tilde{T}_{ab}}(x) = \wick{T_{ab}}(x) + \tilde{\beta}_1 m^4 g_{ab} +  \tilde{\beta}_2 m^2 G_{ab} + \tilde{\beta}_3' I_{ab} + \tilde{\beta}_4' J_{ab}, \] where $\tilde{\beta}_i$ are renormalization constants, where the tensors $I$ and $J$ are obtained as functional derivatives of $\sqrt{g} R^2$ and $\sqrt{g} R^{ab}R_{ab}$ and contain up to fourth order derivatives of the metric \cite{Wald:1978pj}.
For conformally flat spacetimes like FLRW, $I_{ab} = 3 J_{ab}$ and, furthermore, their traces are both proportional to $\square R$, namely ${I^a}_a = 3{J^a}_a = 6\square R$.

However, since $\mathcal{H}_\epsilon$ is only a bisolution of the equation of motion up to a smooth term, imposing the constraint $\nabla_a{T^a}_{b}=0$ results in an anomalous contribution to the trace of $T$ known as trace anomaly \cite{Hollands:2004yh, moretti2003comments}.
In particular
\begin{equation}
  \label{traceT}
  \expvalom{T} = \left(3\left(\xi - \frac{1}{6}\right) \square - m^{2}\right)\expvalom{\phi^2} + \frac{1}{4 \pi^{2}}\left[v_{1}\right] +4c_1 m^{4} - c_2 m^{2}
  R - c_3 \square R,
\end{equation}
where $c_1,c_2$ and $c_3$ are the renormalization constants, $[v_1]$ is the coinciding point limit of the Hadamard coefficient $v_1$ and it is of the form
\begin{equation*}
  \begin{split}[v_1] &= \frac{m^4}{8} + \frac{(6\xi - 1)m^2R}{24} + \frac{(6\xi - 1)^2 R^2}{288} + \frac{(5\xi -1)\square R}{120} + \frac{R_{abcd}
    R^{abcd} - R_{ab}R^{ab}}{720} \\ &= \frac{m^4}{8} + \frac{(6\xi - 1)m^2}{4} \frac{{a''}}{a^3} + \frac{(6\xi - 1)^2}{8}\frac{{a''}^2}{a^6} + \frac{(5\xi -1)\square R}{120} + \frac{1}{60} \left( \frac{{a'}^4}{a^8} - \frac{{a''}{a'}^2}{a^7} \right).
  \end{split}
\end{equation*}

We finally observe that, for our purposes, we just need to implement the point splitting regularization for $\Psi$ and $\Psi_{ab}$.
Hence, we might subtract from the two-point function the truncated Hadamard parametrix at order $n$ with $n=1$, where 
\[ 
\cH_{n}(x,y) \doteq \lim _{\epsilon \rightarrow 0^{+}}  \frac{u(x,y)}{\sigma_{\varepsilon}}+ \sum_{k=0}^{n}v_k(x,y)\sigma^k(x,y) \log \left(\frac{\sigma_{\varepsilon}}{\lambda^{2}}\right), 
\] 
see e.g. \cite{Christensen:1976vb, Christensen:1978yd}.
If we regularize $\omega_2$ with the truncated Hadamard parametrix at order $1$, we get that $\omega_2-\cH_{1}$ is only a $C^2$ function \cite{KAY199149} (see also \cite{Gottschalk:2018kqt}).
We furthermore observe that the regularization procedure for $T_{ab}$ can be extended beyond Hadamard states to states whose finite part $w$ is only $C^2$.

We finally notice that, in the expectation values of the trace of the stress-energy tensor, derivatives of the coefficients of the metric up to the fourth order appear.

\section{Sufficiently regular states}\label{se:states}

In this paper we consider a state $\omega$ which is quasifree, pure, homogenous and isotropic and it is described by the two-point function $\omega_2$ of the form given in \eqref{Gaussian-state}.
Such a state is thus completely characterized by the initial conditions for the modes $\zeta_k$ used to define \eqref{Gaussian-state}. We notice that two-point functions \eqref{Gaussian-state} constructed with modes which differ by a global phase which is constant in time coincide.
Furthermore, the normalization condition \eqref{normalization} is a constraint on the initial initial conditions for the modes.
Actually, if we decompose $\zeta_k = \rho \mathrm{e}^{i\theta}$ with $\rho$ and $\theta$ real, the normalization condition \eqref{normalization} implies that ${\theta}'=(2\rho^2)^{-1}$.
Hence, $\theta$ can be obtained from $\rho$ because the modes needs to be fixed up to a global phase and we may assume $\theta(\tau_0)=0$.
So, the initial conditions at $\tau_0$ for the modes, and hence for the state, are fully specified by $\rho(\tau_0)$ and $\rho'(\tau_0)$.
We shall equivalently characterize the state by $\Phi(k)$ and $E(k)$, two real functions of $k$, and a sign $s\in\{-1,+1\}$ which fix the \emph{initial conditions} of the modes $\zeta_k$
\begin{equation}
  \label{eq:init-cond-state}
  \Phi(k) = |{\zeta_k}(\tau_0)|^2 = |\rho(\tau_0)|^2 , \qquad
  E(k)=|{{\zeta}'_k}(\tau_0)|^2 =  |{\rho}'(\tau_0)|^2 + \frac{1}{4\Phi(k)}, \qquad \text{sign} \left(\text{Re} \left({\zeta'_k}(\tau_0)\right)\right) =s.
\end{equation}
The functions $\Phi(k)$ and $E(k)$ must satisfy the following inequalities 
\[ 
E(k) \geq \frac{1}{4\Phi(k)} \geq 0 
\] 
necessary to give origin to meaningful initial conditions for the modes
\[ 
\rho_k(\tau_0) = \sqrt{\Phi(k)}, \qquad \rho_k'{(\tau_0)} = s\sqrt{ E(k)-\frac{1}{4\Phi(k)}}.
\]

To keep some generality, we shall not make any particular choice for the functions $\Phi, E$ and $s$.
However, we shall always assume that these functions are chosen in such a way that the corresponding state is sufficiently regular, namely that it gives finite expectation values for observable like $\wick{\phi^2}$ and the energy density $\wick{\varrho} \doteq \wick{T_{00}} = -\wick{{T_0}^0}$ involved in the semiclassical equations 
\eqref{SCE-trace-intro}, or in \eqref{trace-equation} and \eqref{00-equation}.
Furthermore, in order to have a well-defined semiclassical Einstein equation, these functions may depend on the derivative of $a$ up to the third order.
These conditions are met by adiabatic states of fourth order \cite{Parker:1969au, luders1990local, Junker:2001gx} or by the instantaneous vacuum states considered by Agullo et all.
in \cite{Agullo:2014ica}.
More precisely, in view of \eqref{traceT}, the relevant observables that we need to control are the Wick square $\wick{\phi^2}$ and the energy density $\wick{\varrho}$.
Their expectation values can be obtained following the analyses performed in \cite{Schlemmer:2010lte, Eltzner:2011dyn, Degner:2013ota, Hack:2013uyu, Siemssen:2015owa} and in the state \eqref{Gaussian-state} they take the form
\begin{equation}
  \label{eq:rho}
  \begin{aligned}
    \langle\wick{\phi^2}\rangle_{\omega} & = \frac{1}{(2\pi)^3 a^2} \int_{\mathbb{R}^{3}}\left( |\zeta_k|^2 -C^\cH_{\phi^2}(\tau, k) \right) \dif \vec{k} +
    \frac{w(\tau)^2}{8\pi^2a^2}\log(\frac{w(\tau_0)}{a(\tau)})
    -
    \frac{w(\tau_0)^2}{16\pi^2 a^2}+{\alpha}_1 m^2 + {\alpha}_2 R,       \\
    \langle\wick{\varrho}\rangle_{\omega}
    & =
    \frac{1}{(2\pi)^3 a^4} \int_{\mathbb{R}^{3}} \left( \frac{|{\zeta}'_k|^2}{2}+\left(k^2 +a^2m^2
    -\left(6\xi-1\right)a^2H^2
    \right) \frac{|\zeta_k|^2}{2}
    +a H \left(6\xi-1\right) 2\text{Re}(\overline{\zeta}_k\zeta'_k) \right. \\
   &  -C^\cH_{\varrho}(\tau, k) \bigg) \dif \vec{k}
     \quad
    -\frac{H^4}{960\pi^2}
    +\left(\xi-\frac{1}{6} \right)^2\frac{3 H^2R}{8\pi^2} +\tilde{\beta}_1 m^4 -\tilde{\beta}_2 m^2 {G_{0}}^0+(\tilde{\beta}_3-\frac{\tilde{\beta}_4}{3}) {I_{0}}^0,
  \end{aligned}
\end{equation}
where, $k=|\vec{k}|$ and, recalling \eqref{mode-eq}, $w(\tau) = \sqrt{\Omega(\tau)^2-k^2}= a\sqrt{m^2 + \left(\xi-\frac{1}{6}\right) R}$. Here, ${\alpha_i}$ and $\tilde{\beta}_i$ are (redefinitions of the) renormalization constants of the theory.
Furthermore, the functions $C^\cH_{\phi^2}(\tau, k)$ and $C^\cH_{\varrho}(\tau, k)$ are subtracted before the $k$-integration to implement the point splitting regularization mode-wise.
{After introducing the initial frequency $k_0^2 \doteq \Omega_k^2(\tau=\tau_{0})$ and the perturbative potential
\begin{equation}
    \label{potential}
    V(\tau) \doteq \Omega_k^2(\tau) - k^2_0 = m^2(a^2 - a_0^2) + \at \xi - \frac{1}{6} \ct (R a^2 - R_0 a_0^2),	\qquad 	\qquad a_0 = a(\tau_{0}), \quad R_0=R(\tau_{0}),
  \end{equation} we define}
\begin{equation}
  \label{eq:C-norm}
  \begin{aligned}
    C^\cH_{\phi^2}(\tau, k) & \doteq \frac{1}{2 k_{0}} - \frac{V(\tau)}{4k_0^3}, \\  C^\cH_{\varrho}(\tau, k) & \doteq \frac{k}{2} + \frac{a^2m^{2}-a^2H^{2}(6 \xi-1)}{4 k} - \frac{a^4 m^{4} + 12\left(\xi-\frac{1}{6}\right)m^{2}a^4H^2 + a^4\left(\xi - \frac{1}{6}\right)^2 2{I_{0}}^0(\tau)}{16 k (k^2+  \frac{a^2}{\lambda^2})},
  \end{aligned}
\end{equation}
where  $2{I_{0}}^0 = 216 H^{2} \dot{H}-36 \dot{H}^{2}+72 H \ddot{H}$ and it corresponds to the $00$-component of the local curvature tensor ${I_{ab}}$ which encompasses part of the renormalization freedom of $T_{ab}$. $\lambda$ is the length scale present in the Hadamard singularity. Notice that we are in the case where $\Omega_k(\tau_0)^2$ is strictly positive thanks to the choice of initial conditions for the spacetime we are considering.
\begin{definition}\label{def:sufficiently-regular}
  We say that a pure homogeneous and isotropic quasi-free state whose two-point function is constructed as in \eqref{Gaussian-state} with  modes $\zeta_k$ is \emph{sufficiently regular} if
  \begin{equation}
    \label{eq:init-regular1}
    |\zeta_k^2(\tau_0)| - C^\cH_{\phi^2}(\tau_0, k) \in L^1(k^2 \dif k),
    \quad
    \left.\frac{\dif}{\dif\tau} \left[|\zeta_k^2|(\tau) - C^\cH_{\phi^2}(\tau, k)  \right]\right|_{\tau=\tau_0} \in L^1(k^2 \dif k)
  \end{equation}
  and
  \begin{equation}
    \label{eq:init-regular2}
    \left.\at \frac{|\zeta'_k|^2}{2}+\left(k^2 +a^2m^2-\left(6\xi-1\right)a^2H^2\right) \frac{|\zeta_k|^2}{2}+a H \left(6\xi-1\right) 2\text{Re}(\overline\zeta_k\zeta'_k)-C^\cH_{\varrho}(\tau, k) \ct\right|_{\tau=\tau_0} \in L^1(k^2 \dif k).
  \end{equation}
\end{definition}
As we will see in Proposition \ref{prop:FV}, it is just sufficient to demand the regularity stated in Definition \ref{def:sufficiently-regular} at initial time in order to ensure the finiteness of the expectation values of $\wick{\phi^2}$ and $\wick{\varrho}$, namely the observables appearing the semiclassical Einstein equation for cosmological spacetimes.
\begin{remark}
 We observe that in order to check if a state given in \eqref{Gaussian-state} is sufficiently regular we need to have control on the derivatives of the scale factor up to the third order because no fourth order derivative of the metric appear in $C^\cH_{\varrho}(\tau_0, k)$, in $ C^\cH_{\phi^2}(\tau_0, k)$ and in $\partial_\tau C^\cH_{\phi^2}(\tau_0, k)$ and the same holds for the corresponding finite contributions. Furthermore, adiabatic states of fourth order are sufficiently regular in the sense of Definition \ref{def:sufficiently-regular} (see for instance \cite{PhysRevD.9.341, PhysRevD.36.2963}).
\end{remark}

\subsection{Expectation values of \texorpdfstring{$\phi^2$}{\textphi?} and its time derivative}

For later purposes we need to control the expectation values of $\phi^2$ in a quasifree state $\omega$ whose two-point function \eqref{Gaussian-state} is constructed with modes $\zeta_k$ which satisfy the initial conditions \eqref{eq:init-cond-state} chosen in such a way that point splitting regularization works, namely \eqref{eq:init-regular1} and \eqref{eq:init-regular2} hold.
In particular, we need to know how the state depends on the scale factor $a$ and on the initial conditions $a_0,a'_0,a''_0$ and $a^{(3)}_0$.
To control how $\langle\wick{\phi^2}\rangle_\omega$ evolves in time, we compare the modes $\zeta_k$ with some reference modes $\chi_k$ which are solutions of \eqref{mode-eq}, and which satisfy the normalization condition \eqref{normalization} and are characterized by the following initial values
\begin{equation}
  \label{eq:init-value}
  \chi_k\left(\tau_0\right) = \frac{1}{\sqrt{2 k_{0}}} \mathrm{e}^{\mathrm{i} k_{0} \tau_0}, \qquad
  {\chi}'_k\left(\tau_0\right) = \frac{\mathrm{i} k_{0}}{\sqrt{2 k_{0}}} \mathrm{e}^{\mathrm{i} k_{0} \tau_0},
\end{equation}
where we recall that $k_0 = \sqrt{\Omega_k(\tau_0)}$, with $\Omega_k$ given in \eqref{mode-eq}.
Furthermore, the parameter of the theory are fixed in such a way that $k_0$ is strictly positive for every $k$.
Notice that since the mass $m$ is strictly positive and $\xi$ is fixed, we have the room of making this choice by restricting the possible initial conditions $a_0$ and $a_0''$.
The quasifree state $\omega^c$ constructed with the modes $\chi$ is called \emph{conformal vacuum}, its two-point function is \[ \omega^c_{2}(x,y)=\lim _{\varepsilon \rightarrow 0^{+}} \frac{1}{(2 \pi)^{3}} \int_{\mathbb{R}^{3}} \frac{\overline{\chi}_k\left(\tau_x\right)}{a\left(\tau_x\right)} \frac{\chi_k\left(\tau_y\right)}{a\left(\tau_y\right)} \mathrm{e}^{\mathrm{i} \vec{k} \cdot(\vec{x}-\vec{y})} \mathrm{e}^{-\varepsilon k} \dif \vec{k}.
\]
Notice that, after subtracting the Hadamard singularity \eqref{Hadamard-singularity}, we obtain a function which is only continuous.
It is nevertheless useful to compute the expectation value of $\phi^2$ in this state and to compare it with the one in the state $\omega$.
To this end we observe that the modes $\chi_k$ can be constructed with a convergent Dyson series.
Actually, we have the following proposition taken from \cite{Pinamonti:2010is, Pinamonti:2013wya} 

\begin{proposition}
  \label{prop:modes}
  Consider the FLRW spacetime $(\cM,g)$ with $a\in C^2(\cM)$, constructed in such a way that $\Omega_k^2(\tau_0)$ in \eqref{mode-eq} is strictly positive, a solution $\chi_k$ of \eqref{mode-eq} which satisfies the initial conditions \eqref{eq:init-value}  can be obtained explicitly on $[\tau_0,\tau_1]$ as
  \begin{equation}
    \label{eq:series}
    \chi_k=\sum_{n \geq 0} \chi_k^{n},
  \end{equation}
  where $\chi_k^{n}$ for $\tau>\tau_0$ are obtained recursively.
  The recursive step is for $n>0$
  \begin{equation}
    \label{chi-n}
    \chi_k^n(\tau) = - \int_{\tau_0}^\tau \frac{\sin(k_0(\tau-\eta))}{k_0}
    V(\eta) \chi_k^{n-1}(\eta) \dif\eta, \qquad \chi^0_k(\tau) = \frac{1}{\sqrt{2 k_{0}}} \mathrm{e}^{\mathrm{i} k_{0} \tau},
  \end{equation}
  where $k_0=\Omega(\tau_0)$ and $V(\tau)$ is the perturbation potential \eqref{potential}.
  The following bound holds
  \begin{equation}
    \label{chi-n-estimate}
    |\chi_k^n| \leq \frac{1}{\sqrt{2 k_{0}} n !}\left(\frac{1}{k_{0}} \int_{\tau_{0}}^{\tau}|V(\eta)| \dif \eta\right)^{n} \leq \frac{1}{\sqrt{2k_0} n!} \frac{(\tau-\tau_0)^n}{k_0^{n}} \| V \|_\infty^n.
  \end{equation}
  Hence, the series \eqref{eq:series} converges absolutely and
  \begin{equation}
    \label{modes-estimate}
    |\chi_k(\tau)| \leq \frac{1}{\sqrt{2 k_{0}}} \exp( \frac{\| V\|_{1,[\tau_0,\tau]}}{k_0}  ),
    \qquad
    |\chi_k'(\tau)| \leq \left(\sqrt{\frac{k_{0}}{2}} + \frac{\| V'\|_{1,[\tau_0,\tau]}}{\sqrt{2} k_0^{3/2}}  \right)
    \exp(  \frac{2\| V\|_{1,[\tau_0,\tau]}}{k_0}  ),
  \end{equation}
  where the norm $\|\cdot\|_{1,[\tau_0,\tau]}$ is the ordinary $L^1$ norm on the interval $[\tau_0,\tau]$. Furthermore, 
    \begin{equation}
    \label{eq:chi-chi0}
    \begin{aligned}
    |(\chi_k-\chi^0_k)|&\leq \frac{\|V\|_{1,[\tau_0,\tau]}}{\sqrt{2}k_0^{3/2}} \exp \left(\frac{ \|V\|_{1,[\tau_0,\tau]}}{k_0}  \right), \qquad
	\\
    |(\chi_k-\chi^0_k)'| &\leq \left( \frac{\|V\|_{1,[\tau_0,\tau]}}{\sqrt{2k_0}} + \frac{\|V'\|_{1,[\tau_0,\tau]}}{\sqrt{2}{k_0}^{3/2}} \right) \exp \left(\frac{2\|V\|_{1,[\tau_0,\tau]}}{k_0}   \right).
    \end{aligned}
  \end{equation}
\end{proposition}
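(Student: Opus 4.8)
The plan is to treat \eqref{mode-eq} as the harmonic oscillator $\chi_k'' + k_0^2\chi_k = -V\chi_k$ with frequency $k_0 = \Omega_k(\tau_0)>0$ and perturbation $V = \Omega_k^2(\cdot)-k_0^2$ the potential of \eqref{potential}; note that $V(\tau_0)=0$ and that, since $a\in C^2$, $V$ is continuous and hence bounded on the compact interval $[\tau_0,\tau_1]$. First I would rewrite the initial value problem \eqref{mode-eq}--\eqref{eq:init-value} as the Volterra integral equation
\[
  \chi_k(\tau) = \chi_k^0(\tau) - \int_{\tau_0}^\tau \frac{\sin(k_0(\tau-\eta))}{k_0}\, V(\eta)\chi_k(\eta)\dif\eta ,
\]
with $\chi_k^0(\tau)=\tfrac{1}{\sqrt{2k_0}}\mathrm{e}^{\mathrm{i}k_0\tau}$ the unperturbed solution and $\tfrac{\sin(k_0(\tau-\eta))}{k_0}$ the retarded Green's function of $\partial_\tau^2+k_0^2$. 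Equivalence with the IVP is checked by noting that the integral and its first $\tau$-derivative vanish at $\tau=\tau_0$ (so the Cauchy data coincide with those of $\chi_k^0$, which are exactly \eqref{eq:init-value}) and by differentiating twice, using $(\partial_\tau^2+k_0^2)\tfrac{\sin(k_0(\tau-\eta))}{k_0}=0$ together with the unit jump of its $\tau$-derivative at $\eta=\tau$, which produces the source term $-V\chi_k$.

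Next I would introduce the Neumann/Dyson iterates \eqref{chi-n} and prove \eqref{chi-n-estimate} by induction on $n$. The base case $n=0$ is immediate; in the inductive step one bounds $|\sin|\le 1$ and uses the primitive $\int_{\tau_0}^\tau g'(\eta)g(\eta)^{n-1}\dif\eta=\tfrac1n g(\tau)^n$ with $g(\eta)=\tfrac1{k_0}\int_{\tau_0}^\eta|V(s)|\dif s$ (so $g'=|V|/k_0$ and $g(\tau_0)=0$), while the last inequality in \eqref{chi-n-estimate} follows from $\int_{\tau_0}^\tau|V|\le(\tau-\tau_0)\|V\|_\infty$. Summing the bound gives absolute and uniform convergence of \eqref{eq:series} on $[\tau_0,\tau_1]$, dominated by $\tfrac{1}{\sqrt{2k_0}}\exp(\|V\|_{1,[\tau_0,\tau]}/k_0)$ — which is already the first estimate in \eqref{modes-estimate} — and, exchanging sum and integral, the limit $\chi_k$ solves the Volterra equation, hence the IVP; since the Volterra equation writes $\chi_k$ as $\chi_k^0$ plus the integral of a continuous function against a smooth kernel, $\chi_k$ is automatically $C^2$.

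For the derivative estimates I would differentiate the Volterra equation and integrate by parts once, writing $\cos(k_0(\tau-\eta))=\partial_\eta\tfrac{\sin(k_0(\tau-\eta))}{k_0}$; the boundary term at $\eta=\tau$ vanishes because $\sin 0=0$ and the one at $\eta=\tau_0$ vanishes because $V(\tau_0)=0$, leaving
\[
  \chi_k'(\tau) = (\chi_k^0)'(\tau) + \int_{\tau_0}^\tau \frac{\sin(k_0(\tau-\eta))}{k_0}\bigl(V'(\eta)\chi_k(\eta)+V(\eta)\chi_k'(\eta)\bigr)\dif\eta .
\]
Inserting the bound on $|\chi_k|$ already obtained and applying Gr\"onwall's inequality to $|\chi_k'|$ gives the second estimate in \eqref{modes-estimate}. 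For \eqref{eq:chi-chi0}, the bound on $|\chi_k-\chi_k^0|$ is read off directly from the Volterra equation together with $\|\chi_k\|_\infty$, while for $|(\chi_k-\chi_k^0)'|$ I would substitute $\chi_k'=(\chi_k^0)'+(\chi_k'-(\chi_k^0)')$ into the last display and run Gr\"onwall on the difference $\chi_k'-(\chi_k^0)'$ itself; this is what produces exactly the constant in \eqref{eq:chi-chi0}, which the cruder route (inserting the bound \eqref{modes-estimate} for $\|\chi_k'\|_\infty$) would not give.

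The main obstacle is precisely these derivative estimates: naively differentiating the Volterra equation yields a $\cos$ kernel and a self-referential term $\int V\chi_k'$, and there is no way to close the estimate in terms of $\|V\|_1$ and $\|V'\|_1$ alone without first integrating by parts — the step that genuinely uses $V(\tau_0)=0$ to kill the boundary contribution and transfers the derivative from the kernel onto the potential, after which Gr\"onwall closes the loop. (Strictly this requires $V'\in L^1$, i.e.\ one more derivative of $a$ than the $C^2$ needed for existence and for the first bound in \eqref{modes-estimate}, which is harmless in all applications of the proposition.) Everything else is bookkeeping with the explicit kernel $\tfrac{\sin(k_0(\tau-\eta))}{k_0}$ and with the $L^1$ norms of $V$ and $V'$ on $[\tau_0,\tau]$.
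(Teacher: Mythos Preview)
Your proposal is correct and follows essentially the same route as the paper: recast the IVP as the Volterra equation with kernel $\sin(k_0(\tau-\eta))/k_0$, build the Dyson/Neumann series \eqref{chi-n}, bound the iterates to get absolute convergence and the first estimate in \eqref{modes-estimate}, and then obtain the derivative identity $\chi_k' = (\chi_k^0)' - \Delta_R^{k_0}*(V'\chi_k) - \Delta_R^{k_0}*(V\chi_k')$ before closing with Gr\"onwall. Two small remarks: your identity $\cos(k_0(\tau-\eta))=\partial_\eta\tfrac{\sin(k_0(\tau-\eta))}{k_0}$ is off by a sign (hence so is the sign in front of your displayed integral for $\chi_k'$), though this is irrelevant once you pass to absolute values; and your observation that the $\|V'\|_1$ bounds tacitly require one more derivative of $a$ than the $C^2$ in the hypothesis is well taken and equally applies to the paper's statement.
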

\begin{proof}
Equation \eqref{mode-eq} equipped with initial conditions  \eqref{eq:init-value} form a well posed Cauchy problem hence an unique $\chi$ solution exists. Furthermore, since \eqref{mode-eq} is of the form \eqref{eq:second-order-ode}, we may apply the results of Lemma \ref{le:second-order-ODE}. In particular, \eqref{eq:rewert-equation} implies that
\begin{equation}\label{eq:first-chi}
\chi_k(\tau) =   - \Delta_R^{k_0}*V\chi_k +\chi_k^0,  
\end{equation}
where $\chi_k^0 = \frac{1}{\sqrt{2k_0}}e^{\mathrm{i}k_0 \tau}$. Hence 
\begin{equation}\label{eq:rec}
(1 - \mathcal{R}) \chi_k =   \chi_k^0,  
\end{equation}
where the linear operator $\mathcal{R}$ is such that $\mathcal{R}\chi_k = - \Delta_R^{k_0}*V\chi_k$. Applying the inverse of $(1 - \mathcal{R})$ on both side of \eqref{eq:rec} we obtain $\chi_k$ in terms of $\chi_k^0$. 
Actually, 
\[
\chi_k = \sum_{n\geq0} \mathcal{R}^n \chi_k^0 =\sum_{n\geq 0} \chi_k^n
\]
and since $\chi_k^n = \mathcal{R}^n \chi_k^0$ we obtain \eqref{chi-n}. In particular, expanding $\mathcal{R}^n\chi_k^0$, we have
\[ 
\chi_k^n(\tau_{n+1})  =(-1)^n\int_{\tau_{0} \leq \tau_{1} \leq \cdots \leq \tau_{n+1}} \prod_{j=1}^{n} \left( \frac{\sin(k_0(\tau_{j+1}-\tau_j))}{k_0} V(\tau_j)   \right)   \chi_k^0(\tau_1) \dif\tau_1 \dots \dif\tau_n 
\] 
from which we obtain \eqref{chi-n-estimate}. Absolute convergences of the series $\sum_{n\geq 0} \chi_k^n$ to $\chi_k$ together with its first and second derivatives can now be obtained analyzing the explicit form of $\mathcal{R}$ and using \eqref{eq:first-chi}.
The first estimates in \eqref{modes-estimate} and in \eqref{eq:chi-chi0} can be obtained by an application of Gr\"onwall lemma as in Lemma \ref{le:second-order-ODE} from the inequalities
\[
|\chi_k(\tau)|\leq \frac{1}{\sqrt{2k_0}} + \int_{\tau_0}^\tau \frac{|V(\eta)|}{{k_0}}|\chi_k(\eta)| \dif \eta, 
\]

\[
|(\chi_k - \chi^0_k)(\tau) |\leq  
\frac{1}{\sqrt{2}}\int_{\tau_0}^\tau  \frac{|V|}{k_0^{3/2}} \dif \eta + \int_{\tau_0}^\tau \frac{|V(\eta)|}{{k_0}}|(\chi_k-\chi_k^0)(\eta)| \dif \eta 
\]
which are obtained directly from \eqref{eq:first-chi}. 
The second estimates in \eqref{modes-estimate} and in \eqref{eq:chi-chi0} descend from the first estimates and applying Gr\"onwall lemma  to the inequalities
\[
|\chi'_k(\tau)|\leq \sqrt{\frac{k_0}{2}} +  \int_{\tau_0}^\tau \frac{|V'(\eta)|}{{k_0}}|\chi_k(\eta)|\dif \eta   + \int_{\tau_0}^\tau \frac{|V(\eta)|}{{k_0}}|\chi'_k(\eta)| \dif \eta, 
\]

\[
|\chi_k-\chi_k^0|' =  \int_{\tau_0}^\tau  \frac{|V|}{k_0} |\chi_k-\chi_k^0|'\dif\eta + \int_{\tau_0}^\tau\frac{V'}{k_0} |\chi_k|  \dif \eta,
+
\int_{\tau_0}^\tau\frac{V}{\sqrt{2k_0}} |\chi_k|  \dif \eta.
\]
These inequalities are obtained directly from 
\[
\chi_k' =  -\Delta_R^{k_0}* V \chi_k' - \Delta_R^{k_0}* V'\chi_k  + {\chi^0_k}',
\qquad
(\chi_k-\chi_k^0)' =  -\Delta_R^{k_0}* V (\chi_k-\chi_k^0)' - \Delta_R^{k_0}* V' \chi_k   - \Delta_R^{k_0}* V (\chi^0_k)'
\]
which is the first derivative of \eqref{eq:first-chi}.
\end{proof}

We now decompose the expectation value of $\phi^2$ and of its time derivative in the state $\omega$ which is regular, namely it is quasifree and its two-point function is constructed as in \eqref{Gaussian-state} with modes $\zeta_k$ satisfying \eqref{eq:init-regular1} \eqref{eq:init-regular2}.
We have
\begin{equation}
  \label{eq:Dphi2-decomposition}
  \langle\wick{\phi^2}\rangle_{\omega} =  \frac{Q_s}{a^2}+ \frac{Q_c}{a^2}+\frac{Q_0}{a^2},\qquad \partial_\tau \left( a^2\langle\wick{\phi^2}\rangle_{\omega}\right) =  {Q_s^d}+ {Q_c^d}+{Q_0^d},
\end{equation}
where the state dependent contribution is contained in the following
\begin{align*}{Q_s}
   & \doteq
  a^2\langle\wick{\phi^2}\rangle_{\omega}  -a^2
  \langle\wick{\phi^2}\rangle_{\omega^c}
  =
  \frac{1}{(2\pi)^3} \int_{\mathbb{R}^{3}}\left( |\zeta_k|^2 - |\chi_k|^2\right) \dif \vec{k},  \\
  {Q_s^d} & \doteq \frac{1}{(2\pi)^3} \int_{\mathbb{R}^{3}}\left( \partial_\tau |\zeta_k|^2 - \partial_\tau |\chi_k|^2 +\frac{{V}'(\tau_0)}{4k_0^3} \cos(2k_0(\tau-\tau_0))\right) \dif \vec{k}
\end{align*}
and the subtraction of $C^\mathcal{H}_{\phi^2}$ in \eqref{eq:C-norm} taken before the $k$-integration is visible in the following contributions
\begin{align*}{Q_c}   & \doteq \lim_{\epsilon \rightarrow 0^{+}} \frac{1}{(2\pi)^3} \int_{\mathbb{R}^{3}}\left[ |\chi_k|^2 - \left(\frac{1}{2 k_{0}} - \frac{V(\tau)}{4 k_0^3}\right)\right] \mathrm{e}^{-\epsilon k} \dif \vec{k},
  \\
  {Q^d_c} & \doteq \lim_{\epsilon \rightarrow 0^{+}} \frac{1}{(2\pi)^3} \int_{\mathbb{R}^{3}}\left[ \partial_\tau|\chi_k|^2 + \left( \frac{V'(\tau)}{4 k_0^3}
    -\frac{{V}'(\tau_0)}{4k_0^3} \cos(2k_0(\tau-\tau_0))
    \right)\right] \mathrm{e}^{-\epsilon k} \dif \vec{k}.
\end{align*}
\begin{remark}
  \label{re:subtraction}
  We observe that the subtraction considered in $Q^d_c$ differs from $\partial_\tau C^\mathcal{H}_{\phi^2}$ by a contribution which is compensated in $Q^d_s$.
  This extra subtraction is necessary because the conformal vacuum $\omega^c$, namely the Gaussian state constructed with the modes $\chi_k$, is not regular enough to give finite time derivatives of $\langle \wick{\phi^2} \rangle_{\omega^c}$.
\end{remark}
Finally, the other two contributions $Q_0$ and $Q_0^d$ are obtained from \eqref{eq:Dphi2-decomposition} as the reminder.
Both are functions of $a$ and its derivatives and contain the finite reminder of the $C^\mathcal{H}_{\phi^2}$ subtraction discussed in \eqref{eq:rho}:
\begin{align}
  {Q_0}   & \doteq a^2\langle\wick{\phi^2}\rangle_{\omega^c}  - {Q_c}
  =
  \frac{w(\tau)^2}{8\pi^2}\log(\frac{w(\tau_0)}{a(\tau)})
  -
  \frac{w(\tau_0)^2}{16\pi^2}
  +    \alpha_1 m^2 a^2+  \alpha_2 a^2R
  \notag,
  \\
  \label{eq:Qd0}
  {Q^d_0} & \doteq
  \frac{\partial_\tau w(\tau)^2}{8\pi^2}\log(\frac{w(\tau_0)}{a(\tau)})
  - \frac{aH w(\tau)^2}{8\pi^2}
  +  \alpha_1 m^2 \partial_\tau (a^2)  +  \alpha_2 \partial_\tau (a^2R),
\end{align}
where $\alpha_1$ and $\alpha_2$ are renormalization constants and where we recall that $w(\tau)=a\sqrt{m^2 + \left(\xi-\frac{1}{6}\right) R}$.

We shall now analyze these contributions separately. In particular, we need to know how they depend on $V$ through the scale factor $a$. Hence in the next, we shall bound them and their Gateaux differential to get the Lipschitz continuity of these quantities. To this end we recall some definitions and some facts in the following Remark. 
\begin{remark}\label{re:functionalderivative}
  Consider a functional $F:\mathcal{D}\to\mathbb{R}$ where $\mathcal{D}$ is some Banach space.
  The functional derivative or Gateaux differential of $F$ at $V\in\mathcal{D}$ in the direction $W\in\mathcal{D}$ is defined as the following limit
  \[
    \delta F[V,W] \doteq \lim_{\epsilon\to 0} \frac{ F[V+\epsilon W] - F[V]}{\epsilon},
  \]
  where the limit $\epsilon\to0$ is taken with respect to the norm topology of $\mathcal{D}$.
  If the functional derivative at $V$ exists for every direction $W\in\mathcal{D}$ and if $\delta F[V,W]$ is linear and continuous in $W$ we say that $F$ is Gateaux differentiable in $V$ and in this case $\delta F[V,W]$ is called Gateaux derivative of $F$ in $V$.
  To get Lipschitz continuity, we observe that, if the functional derivative $\delta F[V,W]$ depends continuously on $W$ uniformly in $V$, namely if
  \[
    |\delta F[V,W]| \leq C \|W\|
  \]
  for some constant $C$ which does not depend on $V$, we have
  \[
    F[V_1]-F[V_2]
    = \int_0^1 \frac{\dif}{\dif\epsilon} F[V_2+\epsilon(V_1-V_2)] \dif\epsilon
    = \int_0^1 \delta F[\epsilon V_1+(1-\epsilon)V_2,V_1-V_2] \dif\epsilon
  \]
  hence the Lipschitz continuity is obtained:
  \[
    |F[V_1]-F[V_2]| \leq C \| V_1-V_2\|.
  \]
  Furthermore, later, we shall consider composition of functionals, in that case we shall evaluate Lipschitz continuity in the following way.
  Notice that if $F$ depends on $V$ through a function $A[V]$ with a functional dependance on $V$, namely $F[V] = \tilde{F}[A[V]]$ and if both $\tilde{F}$ and $A$ are Gateaux differentiable and if they are bounded
  \[
    |\delta\tilde{F}[A,B]| \leq C_1 \|B\|,
    \quad
    \|\delta A[V,W]\| \leq C_2 \|W\|,
  \]
  with $C_1$ and $C_2$ which do not depend on $A$ and $V$, we have that
  \[
    \delta F[V,\delta V] = \delta \tilde{F}[A,\delta A[V,\delta V]]
  \]
  and in this case
  \[
    |\delta F[V,\delta V]| \leq C_1 C_2 \|W\|.
  \]
  hence in this case
  \[
    |F[V_1]-F[V_2]| \leq C_1C_2 \| V_1-V_2\|,
  \]
  thus obtaining the desired Lipschitz continuity.
\end{remark}

\begin{proposition}
  \label{prop:decomposition}
  Consider a cosmological spacetime and an interval of time $[\tau_0,\tau_1]$ over which $\Omega_k^2$ given in \eqref{mode-eq} is positive.
  Consider the following non-linear operators acting on $C^2$-functions which vanish at $\tau_0$, namely on $D^2\doteq \{V\in C^2[\tau_0,\tau_1]\; |\; V(\tau_0)=0\}$
  \begin{align*}
    Q_c[V](\tau) & =\lim_{\epsilon \rightarrow 0^{+}} \frac{1}{(2\pi)^3} \int_{\mathbb{R}^{3}}\left[ |\chi_k|^2 - \left(\frac{1}{2 k_{0}} - \frac{V(\tau)}{4 k_0^3}\right)\right] \mathrm{e}^{-\epsilon k} \dif \vec{k}, \\ {Q^d_c}[V](\tau) & = \lim_{\epsilon \rightarrow 0^{+}} \frac{1}{(2\pi)^3} \int_{\mathbb{R}^{3}}\left[ \partial_\tau|\chi_k|^2 + \left( \frac{V'(\tau)}{4 k_0^3} -\frac{{V}'(\tau_0)}{4k_0^3} \cos(2k_0(\tau-\tau_0)) \right)\right] \mathrm{e}^{-\epsilon k} \dif \vec{k},
  \end{align*}
  where $\chi_k$ is the solution of \eqref{mode-eq} with initial data \eqref{eq:init-value} and thus it implicitly depends on $V$.
  Consider also the following operator
  \begin{equation}
    \label{T-Vk}
    \begin{aligned}
      \operatorT_{\tau_0}[f] & \doteq -\frac{1}{8\pi^2} \int_{\tau_0}^{\tau} {f}'(\eta) \log(\tau - \eta) \dif\eta,\qquad f\in D^2.
    \end{aligned}
  \end{equation}
  It holds that the functionals \[ Q_f[V] \doteq Q_c[V] - \operatorT_{\tau_0}[V],  \qquad Q_f^d[V] \doteq Q^d_c[V] - \operatorT_{\tau_0}[V']  \] admit the Gateaux differential at $V$ or $V'$. Furthermore, $Q_f$ is continuous with respect to the uniform norm on the interval $[\tau_0,\tau]$ and the same holds for its first functional derivative, hence $Q_f$ can be extended to continuous functions which vanish at $\tau_0$, namely to $D^0\doteq\{V\in C[\tau_0,\tau_1] \mid V(\tau_0)=0\}$.
  If $V$ is contained in $B_\delta(0)$, a ball of radius $\delta$ centred at $0$ in $C[\tau_0,\tau_1]$, then \[ \|Q_f[V]\|_\infty \leq C_{\delta}\| V \|_\infty, \qquad \|\delta Q_f[V,W]\|_\infty \leq C_\delta'\| W' \|_\infty, \qquad V\in B_\delta(0) \cap D^0 \subset C[\tau_0,\tau]. \]
  Similarly, $Q^d_f$ is continuous with respect to the uniform norm of the derivative on the interval $[\tau_0,\tau_1]$ and it can then be extended to $D\doteq \{ V\in C^{1}[\tau_0,\tau_1] \mid V(\tau_0)=0\}$. For $V\in D$ and if $V'$ is contained in $B_\delta(0) \subset C[\tau_0,\tau_1]$ then \[ \|Q^d_f[V]\|_\infty \leq C_{\delta}\| V' \|_\infty, \qquad \|\delta Q^d_f[V,W]\|_\infty \leq C_\delta'\| W' \|_\infty, \qquad V\in D,\; V'\in B_\delta(0), \] where the constants $C_{\delta}, C_{\delta}'$ depend smoothly on $\delta$ and are bounded uniformly in time for $\tau-\tau_0 <\epsilon$ for some $\epsilon>0$.
\end{proposition}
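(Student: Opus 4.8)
\textit{Plan of proof.} The engine is the Dyson expansion of the conformal--vacuum modes from Proposition \ref{prop:modes}: write $\chi_k = \chi_k^0 + \chi_k^1 + r_k$, with $\chi_k^0(\tau)=(2k_0)^{-1/2}\mathrm{e}^{\mathrm{i}k_0\tau}$ exact, $\chi_k^1$ the first iterate (linear in $V$) and $r_k=\sum_{n\geq 2}\chi_k^n$ the tail, controlled in sup norm by \eqref{chi-n-estimate} and in first derivative by \eqref{eq:chi-chi0}. Expanding $|\chi_k|^2$ (resp.\ $\partial_\tau|\chi_k|^2$) into contributions of order $0$, $1$ and $\geq 2$ in $V$, the order-$0$ piece $|\chi_k^0|^2=(2k_0)^{-1}$ cancels identically the leading subtraction in $C^\cH_{\phi^2}$ (resp.\ its $\tau$-derivative vanishes, so only the extra $\cos$-subtraction of Remark \ref{re:subtraction} survives at this order), and the whole point is that $\operatorT_{\tau_0}$ will absorb exactly the logarithmically non-integrable remnant of the order-$1$ and order-$\geq 2$ pieces.

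For the order-$1$ piece one computes $2\operatorname{Re}(\overline{\chi_k^0}\chi_k^1)=-\frac{1}{2k_0^2}\int_{\tau_0}^\tau\sin(2k_0(\tau-\eta))V(\eta)\dif\eta$; one integration by parts in $\eta$, using $V(\tau_0)=0$, turns this into $-\frac{V(\tau)}{4k_0^3}+\frac{1}{4k_0^3}\int_{\tau_0}^\tau\cos(2k_0(\tau-\eta))V'(\eta)\dif\eta$, so that the first term cancels the remaining subtraction $V(\tau)/(4k_0^3)$ in $C^\cH_{\phi^2}$ exactly, and the $\vec{k}$-integral of the order-$1$ contribution to $Q_c$ equals $\frac{1}{8\pi^2}\int_{\tau_0}^\tau V'(\eta)\,G_\mu(\tau-\eta)\dif\eta$, with $G_\mu(s)=\lim_{\epsilon\to 0^+}\int_0^\infty\frac{k^2}{k_0^3}\cos(2k_0 s)\mathrm{e}^{-\epsilon k}\dif k$ and $k_0^2=k^2+\mu^2$, $\mu^2\doteq\Omega_k(\tau_0)^2-k^2>0$ ($k$-independent). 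Splitting $k^2/k_0^3=1/k_0-\mu^2/k_0^3$ one identifies $G_\mu(s)=K_0(2\mu s)-2\mu s\,K_1(2\mu s)$ and, from the small-argument asymptotics of $K_0,K_1$, $G_\mu(s)=-\log s+g_\mu(s)$ with $g_\mu$ of class $C^1$ and bounded on $[0,\tau_1-\tau_0]$. The $-\log s$ part reproduces precisely $\operatorT_{\tau_0}[V]$, while the contribution of $g_\mu$, after a further integration by parts in $\eta$, is $V(\tau)g_\mu(0)$ plus an integral of $V$ against a bounded kernel, hence bounded by $\|V\|_\infty$ with a Gateaux differential bounded by $\|W\|_\infty$. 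For $Q_c^d$ one differentiates the same identity in $\tau$ (before removing the Abel cut-off): the boundary term produced at $\eta=\tau_0$ is exactly what the extra subtraction $\frac{V'(\tau_0)}{4k_0^3}\cos(2k_0(\tau-\tau_0))$ in the definition of $Q_c^d$ removes, and what is left is $\operatorT_{\tau_0}[V']$ plus a bounded-kernel integral of $V'$; now the extra integration by parts uses $V''$, which is why the bounds are stated through $\|V'\|_\infty$ and $\|W'\|_\infty$, consistently with the derivative estimates \eqref{eq:chi-chi0}.

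The order-$\geq 2$ piece is where the real work lies. The crude bound \eqref{chi-n-estimate} only yields $\sum_{n\geq 2}|\chi_k^n|^2=O(k_0^{-3})$ and $\sum_{n\geq 2}2\operatorname{Re}(\overline{\chi_k^0}\chi_k^n)=O(k_0^{-3})$, which is logarithmically non-integrable against $k^2\dif k$, so one is one power of $k_0$ short. The resolution is a unitarity/WKB cancellation: to leading adiabatic order $|\chi_k|^2\sim(2\Omega_k(\tau))^{-1}=\frac{1}{2k_0}-\frac{V(\tau)}{4k_0^3}+O(k_0^{-5})$, so the genuine $k_0^{-3}$ content of $|\chi_k|^2$ is the already-subtracted term. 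Concretely one checks that the two leading pieces $|\chi_k^1|^2$ and $2\operatorname{Re}(\overline{\chi_k^0}\chi_k^2)$ equal $\pm\bigl(\int_{\tau_0}^\tau V\bigr)^2/(8k_0^3)$ and cancel, and, re-summing the Dyson series with the help of the stationary-phase/integration-by-parts gains in the time integrals, that the full order-$\geq 2$ integrand, once the already-subtracted terms are removed, is $O(k_0^{-4})$ with an oscillatory factor, hence in $L^1(k^2\dif k)$ with $\vec k$-integral bounded by a function of $\|V\|_\infty$ which is smooth in $\delta$ and stays bounded as $\tau\to\tau_0$ (because the tail sums carry positive powers of $\tau-\tau_0$ and the exponentials in \eqref{chi-n-estimate}--\eqref{eq:chi-chi0} are controlled by $k_0\geq\mu$). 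Its Gateaux differential brings one factor $W$ (resp.\ $W'$ for $Q^d_f$, via \eqref{eq:chi-chi0}) into each Dyson iterate, with the required bounds obtained the same way.

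Assembling, $Q_f$ is the sum of the $g_\mu$-kernel part of the order-$1$ piece and the order-$\geq 2$ piece, and likewise $Q^d_f$; collecting the pieces gives the stated estimates on $\|Q_f[V]\|_\infty$, $\|\delta Q_f[V,W]\|_\infty$ (and their $Q^d_f$ analogues), with the dependence of $\chi_k$ on $V$ handled by the composition rule of Remark \ref{re:functionalderivative}, and from the resulting $\|\cdot\|_\infty$-Lipschitz bounds (resp.\ $\|\cdot'\|_\infty$-Lipschitz for $Q^d_f$) one extends $Q_f$ by density from $D^2$ to $D^0$ and $Q^d_f$ to $D$. I expect the main obstacle to be precisely the order-$\geq 2$ term: the crude bounds of Proposition \ref{prop:modes} are a power of $k_0$ short of $k^2\dif k$-integrability, so one must exhibit the cancellation of the leading $k_0^{-3}$ coefficient and then re-sum the Dyson series while tracking the oscillatory gains from integrating by parts in the time variables, uniformly in the iterate index $n$ and uniformly as $\tau\to\tau_0$. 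A secondary technical point is the legitimacy of interchanging the Abel limit $\epsilon\to 0^+$ with the $\eta$-integration in the order-$1$ piece, which works precisely because the worst singularity of $G_\mu(\tau-\eta)$ is the integrable $\log(\tau-\eta)$.
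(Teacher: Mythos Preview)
Your treatment of the order-$0$ and order-$1$ pieces is essentially the paper's argument: the same integration by parts in $\eta$ produces $-V(\tau)/(4k_0^3)$ (cancelling the remaining subtraction) plus an oscillatory integral whose $\vec{k}$-integral has a $-\log(\tau-\eta)$ singularity and a $C^1$ remainder. The paper phrases this through the cosine integral $\mathrm{Ci}$ rather than Bessel functions (your identification with $K_0,K_1$ is not quite the right special function since the cosine carries $k_0$, not $k$, but the conclusion $G_\mu(s)=-\log s+g_\mu(s)$ with $g_\mu\in C^1$ is correct either way).

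The real divergence from the paper, and the place where your sketch has a gap, is the order-$\geq 2$ piece. You are right that the crude bounds \eqref{chi-n-estimate} leave you one power of $k_0$ short at $n=2$, but your proposed fix---a cancellation of the ``leading'' $k_0^{-3}$ coefficients of $|\chi_k^1|^2$ and $2\operatorname{Re}(\overline{\chi_k^0}\chi_k^2)$ followed by a stationary-phase argument to reach $O(k_0^{-4})$---does not hold pointwise. The explicit sum at order $2$ is
\[
\chi_k^0\overline{\chi_k^2}+|\chi_k^1|^2+\overline{\chi_k^0}\chi_k^2
=\frac{1}{2k_0^3}\int_{\tau_0}^\tau\!\!\dif\eta\,V(\eta)\int_{\tau_0}^\eta\!\!\dif\xi\,V(\xi)\bigl(\cos(2k_0(\eta-\xi))-\cos(2k_0(\tau-\xi))\bigr),
\]
which is genuinely $k_0^{-3}$ with oscillations, not $k_0^{-4}$. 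The paper does not attempt to improve the pointwise decay; instead it repeats the $n=1$ mechanism: integrate over $\vec{k}$ first, so that each $\int k^2k_0^{-3}\cos(2k_0 s)\dif k$ produces a $\log s$; the resulting $\log(\eta-\xi)$ and $\log(\tau-\xi)$ are integrable over the time simplex, giving $\|L_2[V]\|_\infty\leq C\|V\|_\infty^2$ directly. For $n\geq 3$ the crude bound already yields $k_0^{-n-1}\leq k_0^{-4}$, which is $L^1(k^2\dif k)$ with no further work. So the paper's split is $n=1$ (extract $\operatorT_{\tau_0}$), $n=2$ (explicit, oscillation $\Rightarrow$ integrable log in time), $n\geq 3$ (crude bounds suffice); this sidesteps the uniform-in-$n$ WKB resummation you flag as the main obstacle, because that obstacle only exists at the single order $n=2$ and is dissolved by one explicit computation rather than by a cancellation.
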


\begin{proof}
  We recall the results of Proposition \ref{prop:modes}, hence $\chi= \sum_n \chi^n$. We then observe that $Q_c[V]$ and $Q^d_c[V]$ can be decomponsed in contributions which are homogenous in $V$ of various degrees
  \[
    Q_c[V]= \sum_{n\geq 0}{L_n}[V],\qquad Q^d_c[V]= \sum_{n\geq 0}{L^d_n}[V].
  \]
 
  We observed that both zeroth order contributions vanish because of the form of $\chi_k^0$ given in \eqref{chi-n}. Furthermore,
  \begin{align*}
    L_1 & = \lim_{\epsilon \rightarrow 0^{+}} \frac{1}{(2\pi)^3} \int_{\mathbb{R}^{3}}\left[  (\overline{\chi}^1_k \chi^0_k+ {\chi}^1_k \overline{\chi}^0_k) + \frac{V(\tau)}{4 k_0^3}\right] \mathrm{e}^{-\epsilon k} \dif \vec{k}, \\ L^d_1 & = \lim_{\epsilon \rightarrow 0^{+}} \frac{1}{(2\pi)^3} \int_{\mathbb{R}^{3}}\left[  \partial_\tau(\overline{\chi}^1_k \chi^0_k+ {\chi}^1_k \overline{\chi}^0_k) + \left( \frac{V'(\tau)}{4 k_0^3} -\frac{{V}'(\tau_0)}{4k_0^3} \cos(2k_0(\tau-\tau_0)) \right) \right] \mathrm{e}^{-\epsilon k} \dif \vec{k},
  \end{align*}
  while for $n \geq 2$ the limit $\epsilon\to0$ can be taken before the $k$-integration, hence \[ L_n[V] =  \sum_{l=0}^{n}  \frac{1}{(2\pi)^3}\int_{\mathbb{R}^3} \overline{\chi_k^{n-l}} \chi_k^{l} \dif\vec{k}, \qquad L^d_n[V]  = \partial_\tau L_n[V].
  \]
  To study the form of $L_1$ and of $L^d_1$, we recall the definition of $\chi^1$, integrating by parts and using the condition $V(\tau_0)=0$, we obtain
  \begin{align*}
    \overline{\chi}_k^{1}\chi_k^{0} +\overline{\chi}_k^0\chi_k^{1} & = -\frac{V(\tau)}{4k_0^3}+ \frac{1}{4 k_{0}^3} \int_{\tau_0}^{\tau} \cos \left(2 k_0(\tau-\eta)\right) {V'}(\eta) \dif\eta, \\
    \partial_\tau \at \overline{\chi}_k^{1}\chi_k^{0} +\overline{\chi}_k^0\chi_k^{1} \ct & = -\frac{V'(\tau)}{4k_0^3}+\frac{V'(\tau_0) \cos(2k_0(\tau-\tau_0))}{4k_0^3}+ \frac{1}{4 k_{0}^3} \int_{\tau_0}^{\tau} \cos \left(2 k_0(\tau-\eta)\right) {V''}(\eta) \dif\eta.
  \end{align*}
  We discuss in details the construction of $L_1^d[V]$, $L_1$ can then be obtained in a similar way,
  \begin{align*}
    L^d_1[V](\tau)    = & \lim_{\epsilon\to0}\frac{1}{ 8\pi^2} \int_{\tau_0}^\tau \dif\eta \, {V''}(\eta) \int_{w_0}^\infty \dif k_0 \, \aq \cos\left(2k_0 (\tau-\eta)\right)  \frac{1}{k_0} - \cos\left(2k_0 (\tau-\eta)\right)  \frac{k_0-k}{k_0^2} \cq \mathrm{e}^{-2 \epsilon k_0},
  \end{align*}
  where $w_0 \doteq \sqrt{k_0^2-k^2} = \sqrt{a^{2}(\tau_0)m^{2}+\left(\xi-\frac{1}{6}\right) a^2R(\tau_0)}$ is the $k$-independent part of $\Omega_k(\tau_0)$. The $k_0$-integration in the second contribution gives \[ f_1(w_0 (\tau-\eta)) \doteq \int_{w_0}^\infty \cos (2k_0 (\tau-\eta))   \frac{k_0-k}{k_0^2} \dif k_0.
  \]
  Hence $f_1\in C^1(\mathbb{R})$ thus, on compact intervals, both $f_1,\partial_\tau f_1$ are bounded because $\frac{(k_0-k)}{k_0^2} = \frac{w_0^2}{k_0^2(k_0+k)}$.
  The $k_0$-integration in the first contribution, can be performed and in the limit $\epsilon\to0$ it gives \[ \mathrm{Ci}(2w_0(\tau-\eta)) = -\int_{w_0}^{\infty} \frac{\cos(2k_0(\tau-\eta))}{k_{0}}   \dif k_0.
  \]
  Here, $\mathrm{Ci}(z)$ is the cosine integral function which can be expanded as \cite{abramowitz1965handbook}
  \[
    \mathrm{Ci}(z) =  \gamma + \log(z)  + \int_0^z \frac{\cos(t)-1}{t}\dif t,
  \]
  where $\gamma$ is the Euler-Mascheroni constant. Then,
  \begin{align*}
    L^d_1[V](\tau) & = -\frac{1}{ 8 \pi^2} \int_{\tau_0}^\tau \dif\eta {V''}(\eta) \mathrm{Ci}(2 w_0 (\tau -\eta))   - \frac{1}{ 8 \pi^2} \int_{\tau_0}^{\tau} \dif\eta {V''} (\eta) f_1(w_0(\tau-\eta)).
  \end{align*}
  Integrating by parts and recalling the definition of $\operatorT$,  we get
  \begin{align*}
    L^d_1[V]-\operatorT_{\tau_0}[V'] = & -\frac{1}{8\pi^2} (\gamma + \log(2w_0) + f_3(0)) V'(\tau) +\frac{1}{8\pi^2} (\gamma + \log(2w_0) + f_3(w_0(\tau-\tau_0))) V'(\tau_0) \\  & - \frac{w_0}{8\pi^2}\int_{\tau_0}^\tau \dif\eta V'(\eta) f_3'(w_0 (\tau-\eta)),
  \end{align*}
  where $f_3(x)\doteq f_1(x)+f_2(x) \in C^{1}(\mathbb{R})$ because the function \[ f_2\left(z\right)  \doteq \mathrm{Ci}(2z) - \gamma - \log(2z) \] is of class $C^{1}(\mathbb{R})$ and it is thus bounded on finite interval of times.
  We thus have that, on the interval $[\tau_0,\tau_1]$, \[ \|L^d_1[V]-\operatorT_{\tau_0}[V']\|_\infty \leq C \|V'\|_\infty,  \] where the constant $C$ depends continuously on $\tau_1$ and vanishes in the limit $\tau_1\to \tau_0$.
  Since both $L^d_1$ and $\operatorT_{\tau_0}$ are linear in $V$ this proves also the Gateaux differentiability and its corresponding bounds.
  Similar results holds also for $L_1[V]-\operatorT_{\tau_0}[V]$.
  
  To analyze the order $n=2$ we observe that
  \begin{align*}
    \left(\chi_k^{0} \overline{\chi}_k^{2}+\chi_k^{1} \overline{\chi}_k^{1}+\chi_k^{2} \overline{\chi}_k^{0}\right)(\tau) & = \frac{1}{k_0^3} \int_{\tau_0}^\tau \dif\eta \, V(\eta)  		\sin(k_0(\tau-\eta)) \int_{\tau_0}^\eta \dif\xi \, V(\xi) \sin(k_0(\tau+\eta-2\xi )) \\  & = \frac{1}{2k_0^3} \int_{\tau_0}^\tau \dif\eta \, V(\eta) \int_{\tau_0}^\eta \dif\xi V(\xi)  \at \cos(2k_0(\eta-\xi)) - \cos(2k_0(\tau-\xi))\ct,
  \end{align*}
  hence, for the second order, we have
  \begin{align*}
    L_2[V](\tau) = & \frac{1}{8\pi^3} \int_{\mathbb{R}^3} \frac{\dif \vec{k}}{2k_0^3} \int_{\tau_0}^\tau \dif\eta \, V(\eta) \int_{\tau_0}^\eta \dif \xi V(\xi)  \at\cos(2k_0(\eta-\xi)) - \cos(2k_0(\tau-\xi))\ct, \\ L^d_2[V](\tau) = & \frac{1}{8\pi^3} \int_{\mathbb{R}^3} \frac{\dif \vec{k}}{2k_0^3} \int_{\tau_0}^\tau \dif\eta \, V(\eta)^2 \cos(2k_0(\tau-\eta)) - \frac{1}{8\pi^3} \int_{\mathbb{R}^3} \frac{\dif \vec{k}}{2k_0^3} \int_{\tau_0}^\tau \dif\eta \, V(\eta) \int_{\tau_0}^\eta \dif \xi V'(\xi) \cos(2k_0(\tau-\xi)),
  \end{align*}
  where we used the initial condition $V(\tau_0)=0$.
  Notice that the integral in $k_0$ can be computed just as in the linear case.
  However, now logarithmic divergences in $\xi-\eta$ and $\xi-\tau$ for $L_2[V]$ and in $\xi-\tau$ and $\eta-\tau$ for $L_2^d[V]$ are absolutely integrable.
  These logarithmic divergences can be identified before taking the $\epsilon$ to $0$ limit switching the order of $\eta$ and $\vec{k}$ integration.
  Hence, on the interval $[\tau_0,\tau_1]$, we have \[ \|L_2[V]\|_\infty \leq C \|V\|_\infty^2, \qquad |\delta L_2[V,W]| \leq C \|V\|_\infty\|W\|_\infty \] and, since $V\in D$,  \[ \|L^d_2[V]\|_\infty \leq C \|V\|_\infty\|V'\|_\infty, \qquad |\delta L^d_2[V,W]| \leq C \|V'\|_\infty\|W'\|_\infty, \] where $C$ is a suitable constant which depends continuously on $\tau_1$ and vanishes in the limit $\tau_1\to\tau_0$.

  Furthermore for $n>2$ it holds that \[ L_c[V] = \sum_{n\geq 3} L_n{[V]} =\lim_{\epsilon \rightarrow 0^{+}} \sum_{n\geq 3} \frac{1}{(2\pi)^3} \int_{\mathbb{R}^{3}} \sum_{l=0}^n \left(\overline{\chi_k^{n-l}}\chi_k^{l} \right) \mathrm{e}^{-\epsilon k} \dif \vec{k} \] and $L_c^d[V] = \partial_\tau L_c[V]$.
  From the inequality \eqref{chi-n-estimate},
  \[
    |L_c[V]| \leq \sum_{n\geq 3} \frac{1}{(2\pi)^3} \int_{\mathbb{R}^{3}}
    \frac{1}{2k_0} \frac{2^n(\tau-\tau_0)^n}{k_0^{n}} \frac{1}{n!} \|V\|_\infty^n \dif \vec{k} \leq C \frac{2^2}{w_0^4} \|V\|_\infty^3 \exp\left(2\frac{(\tau-\tau_0)}{w_0} \|V\|_\infty\right),
  \]
  where we used the fact that $\sum_{l=0}^n\frac{1}{l!} \frac{1}{(n-l)!} = \frac{2^n}{n!} $.
  To bound $L_c^d[V]$, we rewrite \eqref{chi-n} in the following compact form \[ \chi_k^n = \mathcal{R}(\chi_k^{n-1}) = \mathcal{R}^n ( \chi_k^0), \] where the operator $\mathcal{R}$ acts on a function $f$ as $\mathcal{R}(f)=-\Delta_R (Vf)$ and $\Delta_R$ is the retarded operator defined in \eqref{chi-n}. In view of the fact that $V(\tau_0)=0$, we have
  \[ 
  {\chi_k^n}' =  -\Delta_R V' \chi_k^{n-1} -\Delta_R V {\chi_k^{n-1}}'.
  \]
  Hence, using recursively the previous identity, we get an expression which depends linearly on $V'$.
  This expression has the form
  \[
    {\chi_k^n}' =  \sum_{j = 0}^{n-1}  \underbrace{ \mathcal{R}\circ \dots \circ \mathcal{R}}_{j} \circ \tilde{\mathcal{R}}  \circ \underbrace{ \mathcal{R}\circ \dots \circ \mathcal{R}}_{n-1-j} (\chi_k^0)
    + \underbrace{ \mathcal{R}\circ \dots \circ \mathcal{R}}_{n}({\chi_k^0}'),
  \]
  where $\tilde{\mathcal{R}}(f)=-\Delta_R (V'f)$
  and can be written in a compact form in the following way with the help of a functional derivative which transforms $\mathcal{R}$ to $\tilde{\mathcal{R}}$
  \[
    {\chi_k^n}' =  \int \dif\eta V'(\eta)\frac{\delta \chi_k^n}{\delta V(\eta)}
    + \mathcal{R}^n({\chi_k^0}').
  \]
  Furthermore $\overline{\chi_k^0}'=  - \mathrm{i} k_0 \overline{\chi_k^0}$ and ${\chi_k^0}'=  \mathrm{i} k_0 {\chi_k^0}$, hence
  \begin{equation}
    \label{eq:step-func}
    (\overline{\chi}_k^l {\chi}_k^{n-l})' =  \int \dif\eta V'(\eta)\frac{\delta}{\delta V(\eta)} \overline{\chi}_k^l {\chi}_k^{n-l}.
  \end{equation}
  In the estimate we have derived above for $L_c[V]$ we have bounded $|\sin(k_0(\eta))/k_0| \leq 1/k_0$ without touching $V$, hence, to obtain an estimate for $L_c^d[V]$ we may just apply the operator with the functional derivative introduced on the right hand side of \eqref{eq:step-func} to derive an estimate for $L^d_c[V]$. On an interval $[\tau_0,\tau_1]$ we have \[ \|L_c[V]\|_\infty \leq C \|V\|_\infty^3 \exp(C \|V\|_\infty), \qquad \|L^d_c[V]\|_\infty \leq C \|V\|_\infty^2 \|V'\|_\infty \exp(C \|V\|_\infty), \] where the constant $C$ depends continuously on $\tau_1$ and vanishes in the limit of $\tau_1\to\tau_0$.
  A similar analysis permits to get analogous estimates for the first functional derivatives \[ \|\delta L_c[V,W]\|_\infty \leq C \exp(C \|V\|_\infty) \|W\|_\infty, \] where again the constant $C$ depends continuously on $\tau_1$ and vanishes in the limit of $\tau_1\to\tau_0$.
  The statements of the proposition, namely the Gateaux differentiability in $D$ with respect to the uniform norm and its bounds, can be obtained combining the obtained estimate for $L_1$, $L_2$,  $L_c$ and assuming $V\in B_\delta(0)$ or combining the estimates for $L^d_1$, $L^d_2$, $L^d_c$ and assuming $V'\in B_\delta(0)$.
\end{proof}

\begin{proposition}
  \label{prop:FV}
  Consider a FLRW spacetime $(\cM,g)$, whose scale factor is $a(\tau) \in C^{3}[\tau_0,\tau]$ with $a(\tau)>0$, and the quasifree state $\omega$ given in \eqref{Gaussian-state} with respect to modes $\zeta_k$ whose initial conditions \eqref{eq:init-cond-state} satisfy \eqref{eq:init-regular1} and \eqref{eq:init-regular2}.
  We furthermore assume that on the interval $[\tau_0,\tau_1]$, $\Omega_k^2$ given in \eqref{mode-eq} is strictly positive.
  The non-linear operator \[ Q_s[V] = a^2 \left(\expval{\wick{\phi^2}}_{\omega} - \expval{\wick{\phi^2}}_{\omega^c} \right)=\frac{1}{(2\pi)^3} \int_{\mathbb{R}^{3}}\left( |\zeta_k|^2 - |\chi_k|^2\right) \dif \vec{k} \] is Gateaux differentiable at $V$ in $D^0=\{V\in C[\tau_0,\tau_1]\mid V(\tau_0)=0\}$, where the initial conditions for the modes $\chi_k$ are given in \eqref{eq:init-value}.
  Similarly, the non-linear operator \[ Q^d_s[V]  = \frac{1}{(2\pi)^3}  \int_{\mathbb{R}^{3}}\left( \partial_\tau |\zeta_k|^2 - \partial_\tau |\chi_k|^2 +\frac{{V}'(\tau_0)}{4k_0^3} \cos(2k_0(\tau-\tau_0))\right) \dif \vec{k} \] is Gateaux differentiable at $V'\in C[\tau_0,\tau_1]$.
\end{proposition}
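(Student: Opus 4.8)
The plan is to isolate the dependence of $Q_s$ and $Q^d_s$ on the potential $V$ into the fundamental solutions of the mode equation \eqref{mode-eq} and then differentiate under the $k$-integral. Both $\zeta_k$ and $\chi_k$ solve \eqref{mode-eq}; let $\tilde\psi_k,\psi_k$ be its real fundamental system normalised by $\tilde\psi_k(\tau_0)=1$, $\tilde\psi'_k(\tau_0)=0$, $\psi_k(\tau_0)=0$, $\psi'_k(\tau_0)=1$. This pair depends on $V$ but not on the state, whereas the initial data \eqref{eq:init-cond-state}, \eqref{eq:init-value} of $\zeta_k,\chi_k$ at $\tau_0$ depend on the state but, since $V(\tau_0)=0$, not on $V$. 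Expanding $\zeta_k$ and $\chi_k$ in this basis and using $\mathrm{Re}(\overline{\chi_k(\tau_0)}\chi'_k(\tau_0))=0$ yields the algebraic identity
\[
  |\zeta_k|^2-|\chi_k|^2 = \Delta\Phi_k\,\tilde\psi_k^2 + \Delta E_k\,\psi_k^2 + 2\Delta C_k\,\tilde\psi_k\psi_k ,
\]
with $\Delta\Phi_k = \Phi(k)-\tfrac{1}{2k_0}$, $\Delta E_k = E(k)-\tfrac{k_0}{2}$ and $\Delta C_k = \mathrm{Re}(\overline{\zeta_k(\tau_0)}\zeta'_k(\tau_0))$, all independent of $V$. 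Differentiating in $\tau$ gives the analogous identity for the integrand of $Q^d_s$, the additional term $\tfrac{V'(\tau_0)}{4k_0^3}\cos(2k_0(\tau-\tau_0))$ being an explicit elementary function of $V'$. Thus the whole $V$-dependence of $Q_s$ and $Q^d_s$ sits in the bilinears $\tilde\psi_k^2,\psi_k^2,\tilde\psi_k\psi_k$ (and their $\tau$-derivatives).

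First I would apply Proposition \ref{prop:modes}, with $\tilde\psi_k$ or $\psi_k$ now in the role of $\chi_k$, to obtain convergent Dyson series for $\tilde\psi_k$ and $\psi_k$ whose $n$-th terms are bounded $n$-linear functionals of $V$; consequently $V\mapsto\tilde\psi_k[V](\tau)$, $\psi_k[V](\tau)$ and their first $\tau$-derivatives are analytic, hence Gateaux differentiable, and on $[\tau_0,\tau_1]$ and on a neighbourhood of a fixed $V\in D^0$ one has, uniformly in $k$, $|\tilde\psi_k|\le C$, $|\psi_k|\le C/k_0$, $|\delta\tilde\psi_k[V,W]|\le C\|W\|_\infty/k_0$, $|\delta\psi_k[V,W]|\le C\|W\|_\infty/k_0^2$, with a single extra factor of $k_0$ multiplying only the oscillatory leading term after $\partial_\tau$. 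The products $\tilde\psi_k^2$, $\psi_k^2$, $\tilde\psi_k\psi_k$ and their $V$-differentials inherit the obvious estimates.

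Second I would run dominated convergence on the difference quotient $\varepsilon^{-1}\bigl(Q_s[V+\varepsilon W]-Q_s[V]\bigr)$: pointwise in $k$ it converges, by the analyticity in $V$, to $\Delta\Phi_k\,\delta(\tilde\psi_k^2)[V,W]+\Delta E_k\,\delta(\psi_k^2)[V,W]+2\Delta C_k\,\delta(\tilde\psi_k\psi_k)[V,W]$, and an $L^1(k^2\dif k)$ dominating function is obtained by combining the mode estimates above with the hypotheses \eqref{eq:init-regular1}--\eqref{eq:init-regular2}, these being precisely the $L^1(k^2\dif k)$ conditions that make $\langle\wick{\phi^2}\rangle_\omega$ and $\langle\wick\varrho\rangle_\omega$ finite. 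Dominated convergence then yields the Gateaux differential $\delta Q_s[V,W]$, obtained by differentiating under the integral; linearity in $W$ is manifest and the same bound gives $|\delta Q_s[V,W]|\le C\|W\|_\infty$ on $[\tau_0,\tau_1]$, so $Q_s$ is Gateaux differentiable at $V\in D^0$. For $Q^d_s$ one repeats the scheme with $\partial_\tau$ on the bilinears, first reorganising $\partial_\tau(\tilde\psi_k^2)$, $\partial_\tau(\psi_k^2)$, $\partial_\tau(\tilde\psi_k\psi_k)$ with the Wronskian relation $\tilde\psi_k\psi'_k-\tilde\psi'_k\psi_k=1$, so that the $k_0$-enhanced leading parts of the first two combine into $\tfrac{1}{k_0}\bigl(E(k)-k_0^2\Phi(k)\bigr)\sin\bigl(2k_0(\tau-\tau_0)\bigr)$, controlled through \eqref{eq:init-regular2}, while the leading part of the third combines with the explicit counterterm into $\bigl(2\Delta C_k+\tfrac{V'(\tau_0)}{4k_0^3}\bigr)\cos\bigl(2k_0(\tau-\tau_0)\bigr)$, which lies in $L^1(k^2\dif k)$ by the second condition of \eqref{eq:init-regular1} — exactly the rôle of the counterterm anticipated in Remark \ref{re:subtraction}.

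The hard part is this last reorganisation together with the attendant integrability. A naive $\partial_\tau$ applied to the oscillatory leading parts of $\tilde\psi_k^2$ and $\psi_k^2$ produces amplitudes growing like $k_0$, which would spoil convergence of the $k$-integral; one must exhibit the cancellations bringing these down to the adiabatic-order combinations $E(k)-k_0^2\Phi(k)$ and $2\Delta C_k+\tfrac{V'(\tau_0)}{4k_0^3}$ that \eqref{eq:init-regular1}--\eqref{eq:init-regular2} actually control, using also the detailed form of the subtraction $C^\cH_\varrho$ in \eqref{eq:C-norm}, and do so uniformly in $V$ and in the direction $W$, so that the dominating function is genuinely in $L^1(k^2\dif k)$ on the whole interval. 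This same analysis also shows \emph{en passant} that the integrals defining $Q_s$ and $Q^d_s$ remain finite for every $\tau\in[\tau_0,\tau_1]$, not merely at $\tau_0$, which is the finiteness statement announced after Definition \ref{def:sufficiently-regular}.
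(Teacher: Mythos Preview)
Your approach is correct and closely parallel to the paper's, but it uses a different (equivalent) basis for the two-dimensional solution space of \eqref{mode-eq}. The paper writes $\zeta_k=A\chi_k+B\overline{\chi_k}$ with time-independent Bogoliubov coefficients and then extracts from \eqref{eq:init-regular1}--\eqref{eq:init-regular2} the integrability of $|B|^2/k_0$, $|AB|/k_0$, $|B|^2$, $\text{Re}(A\overline{B}e^{2ik_0\tau_0})$, and of $\text{Im}(A\overline{B}e^{2ik_0\tau_0})-V'(\tau_0)/8k_0^3$; you instead use the real fundamental system $(\tilde\psi_k,\psi_k)$ and the coefficients $\Delta\Phi_k,\Delta E_k,\Delta C_k$. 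The two are related by the linear change $\Delta\Phi_k=(|B|^2+\text{Re}(A\overline{B}e^{2ik_0\tau_0}))/k_0$, $\Delta E_k=k_0(|B|^2-\text{Re}(A\overline{B}e^{2ik_0\tau_0}))$, $\Delta C_k=-\text{Im}(A\overline{B}e^{2ik_0\tau_0})$, so the integrability statements you need are exactly the ones the paper derives. Your formulation has the mild advantage that the coefficients are directly the differences of initial data, making the link to Definition~\ref{def:sufficiently-regular} more transparent; the paper's has the advantage that the estimates on $\chi_k$ and $\chi_k-\chi_k^0$ are already available from Proposition~\ref{prop:modes}, whereas you must rederive analogous bounds for $\tilde\psi_k,\psi_k$.

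One point to tighten: your claim that $\tfrac{1}{k_0}(E-k_0^2\Phi)\in L^1(k^2\dif k)$ is ``controlled through \eqref{eq:init-regular2}'' is too quick. In the Bogoliubov language this quantity equals $-2\text{Re}(\overline{A}Be^{2ik_0\tau_0})$, and its integrability does \emph{not} follow from \eqref{eq:init-regular2} alone: the paper obtains it only after combining \eqref{eq:init-regular2} (to get $|B|^2/k_0\in L^1$ via $f_1$) with the first condition of \eqref{eq:init-regular1} (to get $|B|^2\in L^1$ via $f_4$ and hence separate the real part). Your attribution of $2\Delta C_k+\tfrac{V'(\tau_0)}{4k_0^3}\in L^1$ to the second condition of \eqref{eq:init-regular1} is correct and immediate. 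Once these attributions are fixed, your dominated-convergence scheme goes through as stated.
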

\begin{proof}
    We observe that the modes $\zeta_k$ can be written as a linear combination of the modes $\chi_k$, namely, $\zeta_k = A \chi + B\overline{\chi_k}$, where $A=A(k)$ and $B=B(k)$ are the Bogoliubov coefficients and they can depend on $k=|\vec{k}|$ but not on $\tau$.
  Furthermore since both $\chi_k$ and $\zeta_k$ satisfy the normalization condition \eqref{normalization} we have that $|A|^2 - |B|^2=1$. Hence,
  \begin{equation}
    \label{eq:QSV}
    Q_s[V]  = \frac{1}{(2\pi)^3} \int_{\mathbb{R}^{3}}\left( 2 |B|^2 |\chi_k|^2 + A\overline{B} \chi_k\chi_k +    \overline{A}B \overline{\chi_k\chi_k} \right) \dif \vec{k}.
  \end{equation}

  We shall now control how this expression depends on $V$. In particular, we get the properties and the form of $A$ and $B$ from the requirements  \eqref{eq:init-regular1} and \eqref{eq:init-regular2} and then at a later time we control the evolution of $Q_s$ from the known evolution of the modes $\chi_k$ discussed in Proposition \ref{prop:modes}.
  To analyze the form of $A$ and $B$ for large values of $k$ we notice that the initial conditions for $\zeta_k$ are chosen in such a way that \eqref{eq:init-regular1} and \eqref{eq:init-regular2} holds hence we have that $|\zeta_k|^2-C^\mathcal{H}_{\phi^2}$ is in $L^1(\mathbb{R}^3,\dif\vec{k})$ at $\tau_0$.
  The same holds for $\partial_\tau |\zeta_k|^2-\partial_\tau C^\mathcal{H}_{\phi^2}$ at $\tau_0$, hence, also $\partial_\tau|\zeta_k|^2/ \Omega_k$ is absolutely $\vec{k}$-integrable because $1/\Omega_k$ is a bounded function at times larger or equal than $\tau_0$ and because $\partial_\tau C^\mathcal{H}_{\phi^2}/\Omega_k$ is absolutely $\vec{k}$-integrable at any time larger or equal than $\tau_0$ as can be seen from the definition of $C^\mathcal{H}_{\phi^2}$ and of $\Omega_k$.
  Furthermore, from \eqref{eq:init-regular2} we have that it must exists a function $C^\mathcal{H}_E$ which depends on $\tau$ and $k$ which makes $|\zeta_k'|^2-C^\mathcal{H}_E$ absolutely $\vec{k}$-integrable at $\tau_0$.
  Hence at $\tau_0$ it is absolutely integrable also the quantity $(|\zeta_k'|^2-C^\mathcal{H}_E)/\Omega_k^2$.
  Notice that, $C^\mathcal{H}_E$ can be obtained arguing as for $C^\mathcal{H}_\varrho$, $C^\mathcal{H}_{\phi^2}$ and $\partial_\tau C^\mathcal{H}_{\phi^2}$ and has the form
  \[
    C^\cH_{E}(\tau, k) =
    \frac{k_0}{2} + \frac{V(\tau)}{4 k_0} + O\left(\frac{1}{k_0^3}\right).
  \]
  Consider now the functions
  \[
    f_1= \frac{|\zeta'_k|^2}{\Omega_k^2}-|\zeta_k|^2, \qquad f_2=  |\zeta_k|^2-|\chi_k|^2;
  \]
  both are absolutely $\vec{k}$-integrable at $\tau_0$ because $\frac{C^\mathcal{H}_E}{\Omega_k^2}-C^\mathcal{H}_{\phi^2}$ is in $L^1(\mathbb{R},\dif\vec{k})$ and the same holds for
  $|\chi_k|^2-C^\mathcal{H}_{\phi^2}$ and because we already know that $(|\zeta_k'|^2-C^\mathcal{H}_E)/\Omega_k^2$ and $|\zeta_k|^2-C^\mathcal{H}_{\phi^2}$ are absolutely $\vec{k}$-integrable at $\tau_0$.
  Notice that $\frac{\partial_\tau f_2}{\Omega_k}$ is also absolutely $\vec{k}$-integrable at $\tau_0$ because $\partial_\tau C^\mathcal{H}_{\phi^2} /\Omega_k$ is in $L^1(\mathbb{R}^3,\dif\vec{k})$.
  If we now evaluate $f_1$ and $\partial_\tau f_2/\Omega_k$ at time $\tau_0$ we get 
  \[ 
  	f_1(\tau_0) = -\frac{2\text{Re}\left(\overline{A}B\mathrm{e}^{\mathrm{i}2k_0\tau_0}\right)}{k_0}, \qquad \frac{\partial_\tau f_2(\tau_0)}{\Omega_k(\tau_0)} = \frac{2\mathrm{i}\text{Im}\left(\overline{A}B\mathrm{e}^{\mathrm{i}2k_0\tau_0}\right)}{k_0}
  \]
  which must be both elements of $L^1(\mathbb{R},\dif\vec{k})$. Hence $|AB|/k_0$ is also absolutely integrable.
  The same holds also for $\frac{|B|^2}{k_0}$ because it can be obtained adding $f_1(\tau_0)/2$ to the absolute integrable function 
  \[ 
  	f_2(\tau_0)= \frac{|B|^2}{k_0} +\frac{\text{Re}\left(\overline{A}B\mathrm{e}^{\mathrm{i}2k_0\tau_0}\right)}{k_0}.
  \]
  These estimates involve only initial conditions for the modes and for the scale factor and thus they are independent of $V$ at $\tau>\tau_0$.
  Furthermore, they are sufficient to obtain a bound for $Q_s[V]$ using the bounds obtained in Proposition \ref{prop:modes},  \[ |\chi_k|^2\leq \frac{1}{2k_0} \mathrm{e}^{\frac{2(\tau-\tau_0)\|V\|_\infty}{k_0}}, \qquad | \delta \chi_k[V,W] | \leq \frac{(\tau-\tau_0)}{\sqrt{2k_0^3}} \mathrm{e}^{\frac{(\tau-\tau_0)\|V\|_\infty}{k_0}} \| W\|_\infty, \] hence, on any interval of time $[\tau_0,\tau]$, from \eqref{eq:QSV} we have that \[ \| Q_s[V] \|_\infty \leq C \mathrm{e}^{\frac{2(\tau-\tau_0)\|V\|_\infty}{w_0}}, \qquad \|\delta Q_s[V,W]\|_\infty \leq C \mathrm{e}^{\frac{2(\tau-\tau_0)\|V\|_\infty}{w_0}}  \| W\|_\infty, \] where $w_0$ equals $\Omega_k(\tau_0)$ evaluated at $k=0$ and where a suitable constant $C$ is chosen.
 
  The estimates obtained above for $A$ and $B$ are not sufficient to get the desired bounds for $Q_s^d$. To get further control on these coefficients, we observe that \[ f_3\doteq \partial_\tau |\zeta_k|^2 - \partial_\tau |\chi_k|^2 +\frac{{V}'(\tau_0)}{4k_0^3} \cos(2k_0(\tau-\tau_0)) \] is $L^1(\mathbb{R}^3,\dif\overline{k})$ at $\tau_0$, because the initial conditions for the state are chosen in such a way that $\partial_\tau |\zeta_k|^2-\partial_\tau C^{\mathcal{H}}_{\phi^2}$ is absolutely $k$-integrable (see e.g. \eqref{eq:init-regular1}) and because, in the proof of Proposition \ref{prop:decomposition}, we proved that $\partial_\tau |\chi_k|^2 -\frac{{V}'(\tau_0)}{4k_0^3} \cos(2k_0(\tau-\tau_0)) - \partial_\tau C^\mathcal{H}_{\phi^2}$ is $L^1$ too. See also the Remark \ref{re:subtraction}. Notice that $f_3$ can be further expanded as
  \begin{equation}
    \label{eq:AB1}
    \begin{aligned}
      f_3 = & 2|B|^2\partial_\tau \left(\overline{\chi}_k(\chi_k-\chi_k^0)+ (\overline{\chi}_k - \overline{\chi}_k^0) \chi_k^0 \right) \\
            & +
      4\text{Re}(A\overline{B}(\chi_k-\chi_k^0){\chi_k}')+ 4\text{Re}(A\overline{B}\chi_k^0{(\chi_k-\chi_k^0)}')
      \\
            & +
      4\text{Re}(A\overline{B}\chi_k^0{\chi_k^0}')
      +\frac{{V}'(\tau_0)}{4k_0^3}\cos(2k_0 (\tau-\tau_0)).
    \end{aligned}
  \end{equation}
  From the previous discussion we know that 
  \begin{equation}
    \label{eq:AB1-0}
    \begin{aligned}
      f_3(\tau_0) = 2\text{Re}(A\overline{B}\mathrm{i}\mathrm{e}^{\mathrm{i}2k_0\tau_0})
      +\frac{{V}'(\tau_0)}{4k_0^3}
    \end{aligned}
  \end{equation}  
  is absolutely $\vec{k}-$integrable. Hence, the imaginary part of $A\overline{B}\mathrm{e}^{\mathrm{i}2k_0\tau_0}$ equals $V'(\tau_0)/8k_0^3$ up to an absolutely $\vec{k}-$integrale  function.  Arguing as before we also have that 
  $|\zeta_k|^2-C^\mathcal{H}_{\phi^2}$ is absolutely $\vec{k}$-integrable at $\tau_0$ and the same holds for $f_4\doteq (|\zeta_k|^2-C^\mathcal{H}_{\phi^2})/\Omega_k$. Hence at $\tau_0$ we get the absolutely $\vec{k}-$integrable function
  \[
  f_4(\tau_0) = 
  2 \left( |B|^2 \frac{|{\chi_k^0}'|^2}{k_0} + \text{Re} \left( A\overline{B}\frac{{{\chi_k^0}'}^2}{k_0} \right)\right)
  =
   \left( |B|^2  - \text{Re} \left( A\overline{B}\mathrm{e}^{\mathrm{i}2k_0\tau_0} \right)\right).
  \]
  Since $|AB|/k_0$ is absolutely $\vec{k}-$integrable and $|A|^2=1+|B|^2$ we also have that $|B|^2$ is absolutely $\vec{k}-$integrable, and thus the same holds for $\text{Re} \left( A\overline{B}\mathrm{e}^{\mathrm{i}2k_0\tau_0}\right)$.
  We conclude that $A\overline{B}\mathrm{e}^{\mathrm{i}2k_0\tau_0}$ equals $\mathrm{i} V'(\tau_0)/8k_0^3$ up to an absolutely integrable $\vec{k}-$function. This also means that $A$ and $B$ are such that
  \begin{equation}
    \label{eq:f4}
    f_5\doteq 4\text{Re}(A\overline{B}\chi^0_k{\chi^0_k}') +\frac{{V}'(\tau_0)}{4k_0^3}\cos(2k_0 (\tau-\tau_0))
  \end{equation}
  is absolutely $\vec{k}-$integrable at any time. To control the integrability of $f_3$ at later time we now evaluate the time evolutions of the modes $\chi_k$. From the \eqref{eq:chi-chi0} of Proposition \ref{prop:modes} we have that the first contribution in \eqref{eq:AB1} is in $L^1$ at any time because ${|B|^2}/{k_0}$ is $L^1$ and
   the second and the third contributions of $f_3$ in \eqref{eq:AB1} are also elements of $L^1$ because ${|{A}
    B|}/{k_0}$ is $L^1$, $|\chi'| \leq \sqrt{k_0} C \mathrm{e}^{\|V\|_1/w_0}$ and $|\chi| \leq C \mathrm{e}^{\|V\|_1/w_0}/\sqrt{k_0}$. Finally, also the contribution $f_5$ in \eqref{eq:f4} of $f_3$ in \eqref{eq:AB1} is integrable. We can now rewrite $Q_s^d[V]$ 
  \begin{align*}
    Q_s^d[V] & = \frac{1}{(2\pi)^3} \int_{\mathbb{R}^{3}}\left( 2 |B|^2 \partial_\tau |\chi_k|^2 + 4\text{Re} \left( A\overline{B} \chi_k\partial_\tau \chi_k\right) +\frac{{V}'(\tau_0)}{4k_0^3} \cos(2k_0(\tau-\tau_0)) \right) \dif \vec{k} \\  & = \frac{1}{(2\pi)^3} \int_{\mathbb{R}^{3}}\left( 2 |B|^2 \partial_\tau \left(\overline{\chi_k} (\chi_k-\chi_k^0) \right) + 2 |B|^2 \partial_\tau \left( (\overline{\chi_k} -\overline{\chi_k^0})\chi_k^0 \right) + 4\text{Re} \left( A\overline{B} \chi_k (\chi_k-\chi_k^0)' \right) \right.
    \\
             & \quad\left.+ 4\text{Re} \left( A\overline{B} (\chi_k-\chi_k^0) {\chi_k^0}' \right)
    + 4\text{Re} \left( A\overline{B} \chi_k^0{\chi_k^0}' \right)
    +\frac{{V}'(\tau_0)}{4k_0^3} \cos(2k_0(\tau-\tau_0)) \right) \dif \vec{k},
  \end{align*}
  where we have used the relation $|\chi^0_k|'=0$.
  Recalling the estimates \eqref{eq:chi-chi0}, having shown the integrability of $f_3$ at any time in \eqref{eq:AB1} and knowing how to control  $\chi_k$ and its functional derivatives, the statements of the Proposition follow.
\end{proof}

Notice that the coefficients $A$ and $B$ introduced in the proof of the previous proposition are constructed with $a(\tau_0),a'(\tau_0),a''(\tau_0)$ and $a^{(3)}(\tau_0)$ only. This implies that the constants used in the estimates could depend on the initial conditions and hence do not contain derivatives higher than the third order.

Finally, we observe that
\begin{equation}
  \label{eq:W0}
  Q_0 = \frac{m^2a^2 +(\xi-\frac{1}{6})Ra^2}{8\pi^2}\log(\frac{w(\tau_0)}{a}) - \frac{w(\tau_0)^2}{16\pi^2} +\alpha_1 a^2m^2 + \alpha_2 \left(\xi-\frac{1}{6}\right) a^2 R
\end{equation}
is a function of $a$ and $a''$ which is differentiable if $a>0$. Similarly, we observe that
\begin{equation}
  \label{eq:W0prime}
  \begin{aligned}{Q^d_0} = &
    \left( \frac{a^3H(m^2+(\xi-\frac{1}{6})R)}{4\pi^2} + \frac{a^2(\xi-\frac{1}{6})R'}{8\pi^2}\right)\log(\frac{w(\tau_0)}{a})
    - \frac{a^3 H (m^2 +(\xi-\frac{1}{6})R)}{8\pi^2}
    \\
              & +  \alpha_1 m^2 \partial_\tau (a^2)  +  \alpha_2 \partial_\tau (a^2R)
  \end{aligned}
\end{equation}
which is a function of $a$ and its derivative up to the third order.
This function is differentiable if $a>0$.

\section{Integration of the semiclassical Einstein equation}\label{se:SEE-integration}

In this section we consider the semiclassical Einstein equation \eqref{SCE} in the case of cosmological backgrounds.
Hence, since $\nabla^a \langle \wick{T_{ab}}\rangle_\omega =0$, we just need to consider the semiclassical version of the two equations \eqref{trace-equation} and \eqref{00-equation} where the expectation values $\langle \wick{T}\rangle_\omega = g^{ab}\langle \wick{T_{ab}}\rangle_\omega$ and $\langle\wick{\varrho}\rangle_\omega = \langle \wick{T_{00}}\rangle_\omega$ are used at the place of the corresponding classical quantities.
The expectation values are computed in a quasifree state $\omega$ which was introduced in \eqref{Gaussian-state} and it is characterized by the initial conditions \eqref{eq:init-cond-state} satisfying the regularity conditions \eqref{eq:init-regular1} and \eqref{eq:init-regular2}.

Thanks to the discussions of section \ref{se:states} and \ref{se:pointsplitting}, we have that the semiclassical Einstein equation on FLRW spacetimes is a dynamical problem for the scale factor $a(\tau)$ and for the state $\omega$ described by the following system of equations
\begin{equation}
  \label{SCE-trace}
  \begin{cases}
    - R(a,{a''}) + 4\Lambda = 8\pi G \expvalom{T} \at a,{a'},{a''},a^{(3)},a^{(4)} \ct, \\
    G_{00}(\tau_0) - a^2 \Lambda = 8\pi G \expvalom{T_{00}}\at a_0,{a_0'},{a_0''},a_0^{(3)}\ct,
  \end{cases}
\end{equation}
equipped with some initial conditions for $a$ and for $\omega$.
The \emph{initial conditions} for the scale factor $a$ are fixed at $\tau=\tau_0$ and consist of $(a(\tau_0),a'(\tau_0),a''(\tau_0),a^{(3)}(\tau_0)) = (a_0,a'_0,a''_0,a^{(3)}_0)$ while those for the state are given in terms of the functions $\Phi$, $E$ and $s$ introduced in \eqref{eq:init-cond-state}.

In writing this system of equations we have used the form of $\expvalom{T}$ presented in \eqref{traceT} which we recall here: \[ \expvalom{T} = \left(3\left(\xi - \frac{1}{6}\right) \square - m^{2}\right)\expvalom{\phi^2} + T_{A}  + \beta_1 m^{4} +\beta_2 m^{2} R +\beta_3 \square R.
\]
The anomalous term $T_A$ coincides to $[v_1]/(4\pi^2)$ given in \eqref{traceT}
up the renormalization freedom and has the form
\[
  T_{A} = \frac{1}{4 \pi^{2}}
  \left(\frac{(6\xi - 1)^2 R^2}{288}  + \frac{R_{abcd}
    R^{abcd} - R_{ab}R^{ab}}{720}	\right), \]
while $\beta_i$ are renormalization constants which are universal and thus they are fixed once and forever \cite{Brunetti:2001dx}.
Regarding their physical meaning, we notice that changing $\beta_1$ corresponds to a renormalization of the cosmological constant and changing $\beta_2$ corresponds to a renormalization of Newton's gravitational constant $G$, while the remaining constant $\beta_3$ has no classical interpretation.

We observe that $\expvalom{T}$ contains an explicit contribution which depends locally on the fourth order derivative of $a$. Furthermore, as we shall carefully see later, there is a fourth order derivative contribution which is non-local through $\square \expvalom{\phi^2}$. This non-local contribution is due to $\mathcal{T}_{\tau_0}[V]$ introduced in \eqref{T-Vk} within Proposition \ref{prop:decomposition}.

The second equation needs to be fulfilled at $\tau=\tau_0$ and it is thus a \emph{constraint} on the initial values for the state. We have actually seen in section \ref{se:states} that $\expvalom{T_{00}}$ can be constructed only with the derivatives of the scale factor up to the third order and with the $\Phi, E$ and $s$, which specify the initial values for the state once the renormalization constants are fixed. We may thus introduce the following Definition.
\begin{definition}\label{def:compatible}
	Consider a flat cosmological spacetime, whose scale factor $a$ is characterized by the initial conditions $(a_0,a'_0,a''_0,a^{(3)}_0)$ fixed at $\tau_0$ with $a_0>0$. Let $\omega$ be a quantum state given in \eqref{Gaussian-state}. We say that $\omega$ is \emph{compatible with the initial conditions} if the initial constraint (the second equation in \eqref{SCE-trace})  \[ H(\tau_0)^2 =  \frac{8\pi G}{3}\langle \wick{\varrho} \rangle_\omega (\tau_0) + \frac{\Lambda}{3} \] is satisfied at $\tau=\tau_0$.
\end{definition}
We thus start proving that it is always possible to choose a state compatible with initial conditions.

\begin{proposition}
  \label{prop:Friedman-constraint}
  Let $a_0,a_0',a_0''$ and $a_0^{(3)}$ be initial conditions for $a$ and its derivatives at time $\tau=\tau_0$ chosen in such a way that $\Omega_k(\tau_0)^2$ given in \eqref{mode-eq} is strictly positive.
  Let the coupling to the curvature $\xi\neq \frac{1}{6}$.
  Fix the renormalization constants.
  It is possible to select initial conditions $\Phi, E$ and $s$ in \eqref{eq:init-cond-state} that fix the quantum state $\omega$ given in \eqref{Gaussian-state} in such a way that the state is sufficiently regular in the sense of Definition \ref{def:sufficiently-regular} and
  compatible with initial conditions in the sense of Definition \ref{def:compatible}.
\end{proposition}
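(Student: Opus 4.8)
The plan is to treat the initial constraint of Definition~\ref{def:compatible} as a single scalar equation on the quantum state and to satisfy it, within the class of sufficiently regular states, by an intermediate value argument applied to a one-parameter deformation of a reference adiabatic state. First I would reduce the constraint to one real equation: in conformal time $H(\tau_0)=a_0'/a_0^2$ is fixed by the prescribed data $(a_0,a_0',a_0'',a_0^{(3)})$, so the constraint of Definition~\ref{def:compatible} reads $\expval{\wick{\varrho}}_\omega(\tau_0)=\varrho_\star$, where $\varrho_\star\doteq\tfrac{3}{8\pi G}\bigl(H(\tau_0)^2-\tfrac{\Lambda}{3}\bigr)\in\bR$ is a fixed number. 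Inspecting \eqref{eq:rho}, every contribution to $\expval{\wick{\varrho}}_\omega$ other than the $\vec k$-integral of a mode bilinear minus $C^\cH_\varrho$ depends only on $a$ and its derivatives up to third order and on the (now fixed) renormalisation constants; it therefore adds a fixed constant to $\expval{\wick{\varrho}}_\omega(\tau_0)$, and only that $\vec k$-integral feels the mode initial data $(\Phi,E,s)$ of \eqref{eq:init-cond-state}.

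Next I would construct the family of states. Fix a fourth-order adiabatic state $\omega_{\mathrm{ad}}$ built from $\Omega_k$ of \eqref{mode-eq}; it is well defined because $\Omega_k(\tau_0)^2>0$, it is sufficiently regular in the sense of Definition~\ref{def:sufficiently-regular} (as recalled in the remark after that definition), and it has finite energy density $\varrho_{\mathrm{ad}}\doteq\expval{\wick{\varrho}}_{\omega_{\mathrm{ad}}}(\tau_0)$. For a real $\lambda$ and a smooth bump $h$ supported in $\{|\vec k|\le K\}$, let $\omega_\lambda$ be the pure homogeneous isotropic quasifree state of the form \eqref{Gaussian-state} obtained by keeping $\Phi_\lambda(k)\doteq\Phi_{\mathrm{ad}}(k)$ and replacing $\rho_k'(\tau_0)$ by $\rho_{k,\mathrm{ad}}'(\tau_0)+\lambda h(k)$. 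Then $\Phi_\lambda>0$ and $E_\lambda=(\rho_{k,\mathrm{ad}}'(\tau_0)+\lambda h(k))^2+\tfrac1{4\Phi_{\mathrm{ad}}(k)}\ge\tfrac1{4\Phi_\lambda}$ automatically, so $\omega_\lambda$ is admissible; and because $h$ has compact support, $|\zeta_k(\tau_0)|^2$, $\partial_\tau|\zeta_k|^2|_{\tau_0}$ and the energy-density integrand at $\tau_0$ differ from those of $\omega_{\mathrm{ad}}$ only by compactly supported, hence $L^1(k^2\dif k)$, terms; therefore \eqref{eq:init-regular1}--\eqref{eq:init-regular2} still hold, $\omega_\lambda$ remains sufficiently regular, and its energy density is finite by Proposition~\ref{prop:FV}.

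Finally I would hit $\varrho_\star$. Substituting $\omega_\lambda$ into \eqref{eq:rho} and using that only the shell $|\vec k|\le K$ changes yields an upward parabola,
\[
\expval{\wick{\varrho}}_{\omega_\lambda}(\tau_0)=B_K\lambda^2+A_K\lambda+\varrho_{\mathrm{ad}},\qquad B_K=\frac{1}{2(2\pi)^3a_0^4}\int_{|\vec k|\le K}h^2\dif\vec k>0,
\]
with $A_K=\frac{1}{(2\pi)^3a_0^4}\int_{|\vec k|\le K}h\,g_0\dif\vec k$ and $g_0(k)=\rho_{k,\mathrm{ad}}'(\tau_0)+2a_0H(\tau_0)(6\xi-1)\sqrt{\Phi_{\mathrm{ad}}(k)}$, whose range is $[\varrho_{\mathrm{ad}}-A_K^2/(4B_K),\infty)$. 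Choosing $h$ to approximate $g_0\,\mathbf 1_{\{|\vec k|\le K\}}$ makes $A_K^2/(4B_K)$ comparable to $\int_{|\vec k|\le K}g_0^2\dif\vec k$; here the hypothesis $\xi\ne\tfrac16$ enters, since $\rho_{k,\mathrm{ad}}'(\tau_0)$ decays like $k^{-5/2}$ and is square integrable against $k^2\dif k$ while $\sqrt{\Phi_{\mathrm{ad}}(k)}\sim(2\Omega_k(\tau_0))^{-1/2}$ is not, so (for $H(\tau_0)\ne0$) $g_0\notin L^2(\bR^3,\dif\vec k)$ and $A_K^2/(4B_K)\to\infty$ as $K\to\infty$. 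Hence, for $K$ large enough, $\varrho_\star$ lies in the range, and by continuity of $\lambda\mapsto\expval{\wick{\varrho}}_{\omega_\lambda}(\tau_0)$ the intermediate value theorem provides $\lambda_\star$ with $\expval{\wick{\varrho}}_{\omega_{\lambda_\star}}(\tau_0)=\varrho_\star$; the state $\omega_{\lambda_\star}$ is then sufficiently regular and compatible with the initial conditions, which proves the claim.

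I expect the main obstacle to be exactly this last step. Since sufficient regularity is a condition on the large-$\vec k$ tails of the mode data, any deformation used to tune $\expval{\wick{\varrho}}_\omega(\tau_0)$ is forced to live in a bounded momentum shell, and one must show that such restricted deformations still span an interval of energy densities reaching $\varrho_\star$; this is precisely what the non-integrability of $g_0^2$ delivers, while the positivity bound $E\ge\tfrac1{4\Phi}\ge0$ is respected for free because one moves $\rho_k'(\tau_0)$ rather than $\Phi(k)$. The degenerate case $H(\tau_0)=0$ (in which the $(6\xi-1)$-cross term drops out of $g_0$) would be handled analogously by also deforming $\Phi(k)$, exploiting the non-integrability of $\Omega_k(\tau_0)^{-1}$ against $k^2\dif k$.
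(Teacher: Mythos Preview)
Your approach is essentially the same as the paper's: both start from an adiabatic reference state, perturb only $\rho'(\tau_0)$ by a compactly supported function of $k$ (so that the large-$k$ tails and hence Definition~\ref{def:sufficiently-regular} are preserved automatically), and then exploit that the linear cross term in $\langle\wick{\varrho}\rangle_\omega(\tau_0)$ couples the perturbation to $(6\xi-1)a_0H(\tau_0)\rho$, whose $k^{-1/2}$ decay makes the relevant $k$-integral diverge as the support grows. The only cosmetic difference is packaging: the paper takes the explicit perturbation $sCk^{-5/2}\Pi_{[p_1,p_2]}$ and reads off the $\log(p_2/p_1)$ divergence directly, whereas you phrase the same divergence as $A_K^2/(4B_K)\to\infty$ via Cauchy--Schwarz and then invoke the range of a quadratic; both proofs treat the degenerate case $H(\tau_0)=0$ only by a brief remark.
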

\begin{proof}
  As shown in \eqref{eq:rho} and in view of the form of $C^\mathcal{H}_\varrho$, in order to evaluate $\varrho$ in the state $\omega$, only third derivatives of $a$ are necessary. In particular, let us fix some initial condition for the state \eqref{eq:init-cond-state} which are described by $\Phi, E $ and $s$ sufficiently regular (namely they satisfy \eqref{eq:init-regular1} and \eqref{eq:init-regular2}) and writing $\zeta_k = \rho \mathrm{e}^{\mathrm{i}\theta}$, we get the following real finite result
  \begin{align*}
    \langle\wick{{\varrho}}\rangle_\omega(\tau_0) = & \frac{1}{(2\pi)^3 a(\tau_0)^4} \int_{\mathbb{R}^{3}} \aq \frac{1}{2}\at (1-6\xi) (\rho'- aH \rho)^2 + 6\xi {\rho'}^2+ (k^2+a^2m^2) \rho^2 \ct - C^\cH_{\varrho}(\tau_0, k) \cq \dif \vec{k} + \\  & -\frac{H(\tau_0)^4}{960\pi^2} +\left(\xi-\frac{1}{6} \right)^2\frac{3 H^2(\tau_0)R(\tau_0)}{8\pi^2}  +\tilde{\beta}_1 m^4 - \tilde{\beta}_2 m^2 {G_{0}}^0(\tau_0)+(\tilde{\beta}_3-\frac{\tilde{\beta}_4}{3}) {I_{0}}^0(\tau_0),
  \end{align*}
  where $\rho = \sqrt{\Phi}$ and $\rho' = s\sqrt{E-\frac{1}{4\Phi}}$.

  We show that changing initial conditions for the state, $\langle\wick\varrho\rangle_\omega$ can assume all values over the real line.
  We discuss in some detail the case $H\neq 0$, the other cases can be discussed similarly.
  The initial conditions $\tilde{E},\tilde{\Phi}$ and $\tilde{s}$ for a new state $\tilde\omega$ are chosen in such a way that 
  \[ 
  	\tilde{s} = s , \qquad \tilde{\Phi} = \Phi, \qquad    \tilde{E}= \left(\sqrt{E-\frac{1}{4 \Phi}}  + \frac{C}{k^{\frac{5}{2}}}\Pi_{[p_1,p_2]}(|k|) \right)^{2} + \frac{1}{4 \Phi},
  \]    
  where $C$ is a constant, $\Pi_{[p_1,p_2]}$ is the characteristic function of the interval $[p_1,p_2]$ of the positive real line. Notice that for every choice of the parameter $p_1,p_2$ the constraint
  \[ 
  	\tilde{E} -\frac{1}{4 \tilde{\Phi}} \geq 0 
  \] 
  is fulfilled. Hence, in this way
  \[
  	\tilde{\rho} = \rho, \qquad \tilde{\rho}'= {\rho}'  + s\frac{C}{k^{\frac{5}{2}}}\Pi_{[p_1,p_2]}(|k|).
  \]
  We observe that the correction is such that
  \begin{align*}
    \langle\wick{\varrho}\rangle_{\tilde\omega} & = \langle\wick{\varrho}\rangle_\omega + \frac{1}{(2\pi)^3}\frac{1}{2a^4}
    \int \left( \frac{s^2C^2}{k^5} + s\frac{2C}{k^{\frac{5}{2}}} (\rho'- (1-6\xi )aH \rho ) \right)  \Pi_{[p_1,p_2]}   \dif\vec{k}  \\
                                                & = \langle\wick{\varrho}\rangle_\omega + \frac{1}{2\pi^2}\frac{s^2C^2}{4a^4}  \left( \frac{1}{p_1^2} - \frac{1}{p_2^2} \right) + \frac{1}{(2\pi)^3}\frac{s2C}{2a^4}\int \frac{1}{k^{\frac{5}{2}}} (\rho'- (1-6\xi )aH \rho )  \Pi_{[p_1,p_2]}   \dif\vec{k} \\
                                                & = \langle\wick{\varrho}\rangle_\omega + \frac{1}{2\pi^2}\frac{s^2C^2}{4a^4}  \left(\frac{1}{p_1^2} - \frac{1}{p_2^2} \right) + \frac{s 2C}{2\pi^2}\frac{(6\xi-1 )H}{2a^3}  \log\left( \frac{p_2}{p_1}\right) + \frac{s 2C}{2\pi^2}\frac{1}{2a^4}\int_{p_1}^{p_2} O\left( \frac{1}{k^2} \right)  \Pi_{[p_1,p_2]}   \dif k.
  \end{align*}
  In the last equality we used the following decay properties of the state
  \[ 
  	\rho = \frac{1}{\sqrt{2k}} \left(1+O \left(\frac{1}{k}\right) \right), \qquad \rho' = \frac{1}{\sqrt{2k}} \left(O \left(\frac{1}{k}\right) \right),
  \]
  which hold thanks to the regularity conditions about the $L^1$-integrability that we have imposed on modes $\zeta_k, \zeta'_k$ at the initial time $\tau_0$. Actually, the large-$k$ behaviour of $\rho$ can be obtained from the first condition in \eqref{eq:init-regular1}, whereas the one for $\rho'$ can be derived either from the second condition in \eqref{eq:init-regular1} or from \eqref{eq:init-regular2}. In particular, the contribution proportional to $\log\left( \frac{p_2}{p_1}\right)$ follows from the leading contribution $1/\sqrt{2k}$ inside $\rho$.
  Hence, modifying $p_2$ and the sign of the constant $C$, we have that $\langle\wick{\varrho}\rangle_{\tilde\omega} - \langle\wick{\varrho}\rangle_\omega$  and thus also $\langle\wick{\varrho}\rangle_{\tilde\omega}$ can take all the values of the real line because $\log(p_2/p_1)$ diverges for $p_2\to\infty$ while the other contributions stay finite. We finally observe that the asymptotic behaviour of $\tilde{E}$ and $\tilde{\Phi}$ for large $|k|$ equals that of ${E}$ and ${\Phi}$, hence, the state $\tilde{\omega}$ is sufficiently regular in the sense of Definition \ref{def:sufficiently-regular} because this property holds for $\omega$.
\end{proof}

From now on the state $\omega$ will be considered to be fixed and chosen to be sufficiently regular according to Definition \ref{def:sufficiently-regular} and compatible with the initial conditions for the scale factor $a$ in the sense of Definition \ref{def:compatible}, hence
the second equation in \eqref{SCE-trace} is fulfilled. We are thus left with the analysis of the first equation in \eqref{SCE-trace}. We discuss it in the case of non-conformal coupling $\xi \neq 1/6$, i.e., when the higher order derivative terms $\square R$ and $\square \expvalom{\phi^2}$ cannot be avoided (the conformal coupling case $\xi = 1/6$ with suitable initial conditions given at finite conformal time and with a suitable choice of renormalization constants has already been solved in \cite{Pinamonti:2010is,Pinamonti:2013wya}). In order to do it we shall rewrite the trace of the semiclassical Einstein equations as a system of two equations. The first one is a non-homogeneous free Klein-Gordon equation (with imaginary mass if $\xi<\frac{1}{6}$) on FLRW spacetime while the second is an equation which involves the expectation value of $\wick{\phi^2}$. To this avail, in the first step we show the following: 

\begin{proposition}
  \label{pr:SCE-wave}
  Consider a generic spacetime $(M,g)$, let $R_{abcd}$, $R_{ab}$ and $R$ be respectively the Riemann tensor, the Ricci tensor and the Ricci scalar of the spacetime.
  Over this spacetime consider a real scalar field with mass $m$ and with coupling to the scalar curvature $\xi\neq 1/6$ .
  Let $\omega$ be a state for this real scalar field.
  The traced semiclassical Einstein equation (the first equation in \eqref{SCE-trace}) can be written as the following system of equations
  \begin{equation}
    \label{SCE-wave}
    \begin{cases}
      (-\square + M_c ) F = S, \\
      \expvalom{\phi^2} - c_\xi R = F,
    \end{cases}
  \end{equation}
  where
  \begin{equation}
    \label{eq:C-M-constants}
    c_\xi \doteq \frac{\beta_3}{3(1/6-\xi)}, \qquad
    M_{c} \doteq -\frac{m^2}{3(1/6-\xi)}
  \end{equation}
  and $S$ is a function of the derivatives of $a$ up to the second order
  \begin{equation*}
    S \doteq \frac{1}{3(\xi-1/6)} \mleft(  \beta_1 m^4 -\frac{\Lambda}{2\pi G} + \frac{R}{8\pi G} + \beta_2m^2 R + \beta_3 M_c R + \frac{(6\xi -1)^2 R^2}{1152\pi^2} + \frac{R_{abcd}R^{abcd} - R_{ab}R^{ab}}{2880\pi^2} \mright). 
  \end{equation*}
\end{proposition}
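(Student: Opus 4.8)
The statement is an exact algebraic rewriting, so the plan is to treat the second equation of \eqref{SCE-wave} as the \emph{definition} $F \doteq \expvalom{\phi^2} - c_\xi R$ and to check that, with this substitution, the first equation reproduces the traced semiclassical Einstein equation, i.e.\ the first equation in \eqref{SCE-trace}. The starting point is the explicit form of $\expvalom{T}$ recalled just above the proposition,
\[
  \expvalom{T} = \Bigl(3\bigl(\xi - \tfrac{1}{6}\bigr)\square - m^{2}\Bigr)\expvalom{\phi^2} + T_A + \beta_1 m^4 + \beta_2 m^2 R + \beta_3\square R ,
\]
inserted into $-R + 4\Lambda = 8\pi G\,\expvalom{T}$.

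First I would perform the substitution $\expvalom{\phi^2} = F + c_\xi R$ inside $3(\xi-\tfrac16)\square\expvalom{\phi^2}$, which splits it into $3(\xi-\tfrac16)\square F + 3(\xi-\tfrac16)c_\xi\,\square R$. The key point is the choice $c_\xi = \beta_3/\bigl(3(1/6-\xi)\bigr)$: it gives $3(\xi-\tfrac16)c_\xi = -\beta_3$, so the second piece cancels exactly the local higher-derivative term $\beta_3\square R$ already present in $\expvalom{T}$. Thus, after the substitution, the only fourth-order-derivative contribution left is $3(\xi-\tfrac16)\square F$: both the local $\square R$ and the (partly non-local) $\square\expvalom{\phi^2}$ of the original trace equation have been packaged into the wave operator acting on the single unknown $F$.

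It then remains to collect the remaining terms. The non-derivative part of $-m^2\expvalom{\phi^2}$ becomes $-m^2 F - m^2 c_\xi R$; isolating $\square F$ and dividing by $3(\xi-\tfrac16)\neq 0$ (which is where $\xi\neq 1/6$ is used) puts the equation in the form $(-\square + M_c)F = S$ with $M_c = m^2/\bigl(3(\xi-1/6)\bigr) = -m^2/\bigl(3(1/6-\xi)\bigr)$, as claimed, and
\[
  S = \frac{1}{3(\xi-1/6)}\Bigl( \beta_1 m^4 - \frac{\Lambda}{2\pi G} + \frac{R}{8\pi G} + \beta_2 m^2 R - m^2 c_\xi R + T_A \Bigr).
\]
The final form stated in the proposition is recovered from the two identities $-m^2 c_\xi = \beta_3 M_c$ (immediate from the definitions of $c_\xi$ and $M_c$) and $T_A = (6\xi-1)^2 R^2/(1152\pi^2) + (R_{abcd}R^{abcd} - R_{ab}R^{ab})/(2880\pi^2)$, the latter obtained by inserting $4\cdot 288 = 1152$ and $4\cdot 720 = 2880$ into the expression for $T_A$ quoted above.

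Finally I would observe that $S$ involves only $a$, $a'$, $a''$: the Ricci scalar $R$ is built from the metric and its first two derivatives, and in four dimensions the quadratic invariants $R^2$, $R_{abcd}R^{abcd}$ and $R_{ab}R^{ab}$ contain no higher derivatives either. This is essentially the whole content of the proposition: there is no genuine analytic obstacle, the only points requiring care being the bookkeeping of the renormalization constants $\beta_i$ and the correct matching of the trace-anomaly term $T_A$, where a stray sign or numerical factor would otherwise creep in.
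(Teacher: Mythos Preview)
Your proposal is correct and follows exactly the same route as the paper's proof, which simply states that the result is obtained by algebraic manipulation of the traced equation with $\expvalom{T}$ given by \eqref{traceT}, regrouping the state-dependent terms and isolating the contributions involving $\square$. You have merely spelled out this algebra in detail---including the cancellation of $\square R$ via the choice of $c_\xi$, the identification of $M_c$, and the rewriting $-m^2 c_\xi = \beta_3 M_c$---so nothing further is required.
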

\begin{proof}
  The proof consists only of some algebraic manipulations of the traced semiclassical equation with $\expvalom{T}$ given by \eqref{traceT}; in particular, \eqref{SCE-trace} reads 
  \[
    \begin{aligned}
      -R + 4\Lambda & = 8\pi G \biggl( -3(1/6 - \xi) \square \expvalom{\phi^2} - m^2\expvalom{\phi^2} + \beta_3 \square R + \beta_1 m^4 + \beta_2 m^2 R \\
      				& \quad + \frac{(6\xi -1)^2R^2}{1152\pi^2} + \frac{R_{abcd}R^{abcd} - R_{ab}R^{ab}}{2880\pi^2} \biggr).
    \end{aligned}
  \]
  Eq. \eqref{SCE-wave} follows by regrouping the state-dependent terms and isolating all the terms depending on $\square$.
  Notice that all the functions $c_\xi,S(\xi,m^2,R,R_{ab},R_{abcd})$ and $M_{c}$, whose sign depends on the value of the parameter $\xi$, are well-defined for $\xi \neq 1/6$.
\end{proof}

We now specialize this discussion for a FLRW spacetime. In that case the geometric quantities depend only on $\tau$. Furthermore, if the state for the quantum matter is compatible with the cosmological principle, namely it is homogeneous and isotropic, also the expectation value of $\wick{\phi^2}$ depends only on the conformal time. Hence, in that case $F$ and $S$ are functions of $\tau$, and \eqref{SCE-wave} is such that
\begin{equation}
  \label{eq:ODE}
  P_c F = S,
\end{equation}
where
\begin{equation}
  \label{Pc}
  P_c \doteq \frac{1}{a^3(\tau)} \at \pa_\tau^2 + a^2(\tau) M_c - \frac{1}{6}a^2(\tau)R \ct a(\tau).
\end{equation}
Notice that \eqref{eq:ODE} is a second order differential equation. We can solve it observing that $P_c$ admits unique advanced and retarded fundamental solutions denoted by $\Delta_A^c$ and $\Delta_R^c$ respectively. Hence, if we equip equation \eqref{eq:ODE} with suitable initial conditions at $\tau=\tau_0$, we will have a unique solution that can be written in terms of the retarded fundamental solution.
We collect this observation in the following proposition.        

\begin{proposition}
  \label{prop:Pc}
  Let $a\in C^2[\tau_0,\tau] $ be a real positive function, let $h\in C[\tau_0,\tau]$ and $(f_0,{f}'_0)\in \mathbb{R}^2$.
  The following problem in $C^2[\tau_0,\tau]$
  \begin{equation}
    \label{KG-equation-c}
    \begin{cases}
      P_c f = h, \\ (f,{f}')(\tau_0) = (f_0, {f}'_0),
    \end{cases}
  \end{equation}
  where $P_c$ given in \eqref{Pc} admits a unique solution $f$ on $C^2([\tau_0,\tau];\mathbb{R})$ given in terms of $h$ and the initial data. Furthermore,  $\tilde{f} = a f$ depends linearly on the initial data $(\tilde{f}_0,\tilde{f}'_0)$, and on $\tilde{h} = a^3 h$.
  The following estimate holds
  \begin{equation}
    \label{C-ineq}
    \| \tilde{f} \|_\infty \leq \left( \| \beta \|_\infty + (\tau-\tau_0)^2 \|\tilde{h}\|_\infty \right)\exp \left( (\tau-\tau_0)^2\| \tilde{W} \|_\infty\right),
  \end{equation}
  where
  \[
    \tilde{W} = a^2 M_c - \frac{1}{6}a^2R, \qquad
    \beta(\tau) =  |\tilde{f}(\tau_0)| + (\tau-\tau_0)|{\tilde{f}'}(\tau_0)|.
  \]
  Furthermore $\tilde{f}$ depends continuously on $a$ and in particular denoting by $\delta \tilde{f}$ the functional derivative of $f$ with respect to infinitesimal changes $\delta a$ of $a$  
\[
 \|\delta \tilde{f}\|_\infty \leq (\tau-\tau_0)^2 \at \| \delta \tilde{W} \|_{\infty} \| \tilde{f}\|_{\infty} + \| \delta \tilde{h} \|_{\infty} \ct \exp((\tau - \tau_0)^2 \|\tilde{W}\|_\infty).
\] 
  All these norms are finite in any $[\tau_0,\tau]$ for all finite $\tau > \tau_0$.
\end{proposition}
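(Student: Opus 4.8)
The plan is to reduce the second-order operator $P_c$ to a standard linear ODE by the substitution $\tilde f = a f$, $\tilde h = a^3 h$. Reading off the factorised form \eqref{Pc}, one has $a^3 P_c f = \tilde f'' + \tilde W \tilde f$ with $\tilde W = a^2 M_c - \tfrac16 a^2 R$, so that the Cauchy problem \eqref{KG-equation-c} is equivalent to $\tilde f'' + \tilde W \tilde f = \tilde h$ with $\tilde f(\tau_0) = a(\tau_0) f_0 \eqdef \tilde f_0$ and $\tilde f'(\tau_0) = a'(\tau_0) f_0 + a(\tau_0) f_0' \eqdef \tilde f_0'$. Because $a \in C^2[\tau_0,\tau]$, $a > 0$, $h \in C[\tau_0,\tau]$, and $R$ is built algebraically from $a$ and $a''$ (one has $\tfrac16 a^2 R = a''/a$), both $\tilde W$ and $\tilde h = a^3 h$ are continuous on the compact interval, which already yields the last assertion about finiteness of all sup norms. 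I would then integrate twice and recast the problem as the Volterra integral equation $\tilde f(\tau) = b(\tau) + \int_{\tau_0}^\tau (\tau - \eta)\bigl(\tilde h(\eta) - \tilde W(\eta)\tilde f(\eta)\bigr)\dif\eta$, where $b(\tau) = \tilde f_0 + (\tau - \tau_0)\tilde f_0'$ and $\lvert b(\tau)\rvert \le \beta(\tau)$.

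Existence and uniqueness then follow by routine arguments: the linear part of the Volterra map is the operator $g \mapsto -\int_{\tau_0}^\tau (\tau - \eta)\tilde W(\eta) g(\eta)\dif\eta$, whose $n$-th power has operator norm at most $\bigl((\tau-\tau_0)^2\|\tilde W\|_\infty\bigr)^n/n!$; hence its Neumann series converges (equivalently, a contraction is obtained in a weighted sup norm and the Banach fixed point theorem applies), giving a unique continuous solution $\tilde f$. Since the integrand is continuous, $\tilde f$ is in fact $C^2$ and solves the reduced equation, so $f = \tilde f / a \in C^2[\tau_0,\tau]$ solves \eqref{KG-equation-c}; uniqueness can alternatively be read off directly via Grönwall applied to the difference of two solutions. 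Linearity of the solution map in $(\tilde f_0, \tilde f_0', \tilde h)$ is just the superposition principle together with uniqueness.

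For the quantitative estimate \eqref{C-ineq}, I would take absolute values in the Volterra equation, bound the kernel $(\tau - \eta)$ by $(\tau - \tau_0)$ in both integrals, and arrive at $\lvert\tilde f(s)\rvert \le \beta(s) + (\tau - \tau_0)^2\|\tilde h\|_\infty + (\tau - \tau_0)\|\tilde W\|_\infty \int_{\tau_0}^s \lvert\tilde f\rvert\dif\eta$ for $s \in [\tau_0,\tau]$; since $\beta$ is non-decreasing, Grönwall's inequality with the now \emph{constant} kernel $(\tau - \tau_0)\|\tilde W\|_\infty$ produces precisely the factor $\exp\bigl((\tau - \tau_0)^2\|\tilde W\|_\infty\bigr)$, which is \eqref{C-ineq}. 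For the dependence on $a$, I would perturb $a \mapsto a + \epsilon\,\delta a$ keeping $\tilde f_0, \tilde f_0'$ fixed; since $\tilde W$ and $\tilde h = a^3 h$ depend smoothly on $a$ through their algebraic expressions, the Gateaux derivative $\delta\tilde f$ exists (differentiate the Volterra equation, or use the implicit function theorem) and solves the linearised Cauchy problem $(\delta\tilde f)'' + \tilde W\,\delta\tilde f = \delta\tilde h - (\delta\tilde W)\tilde f$ with vanishing initial data. Applying the bound \eqref{C-ineq} just proved to this problem — with $\beta \equiv 0$ and source $\delta\tilde h - (\delta\tilde W)\tilde f$ — gives exactly the stated estimate for $\|\delta\tilde f\|_\infty$.

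The argument is entirely standard linear ODE theory, so I do not expect a genuine obstacle. The two points needing a little care are: (i) the algebra of the substitution — checking that $P_c$ really factors as $a^{-3}(\partial_\tau^2 + \tilde W)a$, so that the first-order derivative terms cancel and no $a'$-dependence survives in the reduced equation; and (ii) extracting the precise exponent $(\tau-\tau_0)^2\|\tilde W\|_\infty$ from Grönwall, which works only after the $\tau$-dependent kernel $(\tau - \eta)$ has been replaced by the constant $(\tau - \tau_0)$ so that a genuinely constant-kernel Grönwall applies.
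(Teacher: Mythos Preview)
Your proposal is correct and follows essentially the same approach as the paper: the substitution $\tilde f = af$ reduces $P_c f = h$ to $\tilde f'' + \tilde W \tilde f = \tilde h$, the Volterra integral equation $\tilde f(\tau) = \tilde f_0 + (\tau-\tau_0)\tilde f_0' + \int_{\tau_0}^\tau (\tau-\eta)(\tilde h - \tilde W \tilde f)\dif\eta$ is set up, and Gr\"onwall with the kernel $(\tau-\eta)$ bounded by $(\tau-\tau_0)$ yields both \eqref{C-ineq} and, after linearising, the $\delta\tilde f$ estimate. The paper packages the Volterra-plus-Gr\"onwall step into an appendix lemma (Lemma~\ref{le:second-order-ODE} with $k=0$) rather than spelling it out inline, but the content is the same.
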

\begin{proof}
  The problem \eqref{KG-equation-c} can equivalently be written as \[ \tilde{f}''  + \tilde{W} \tilde{f} = \tilde{h},   \] which is a second order linear non-homogeneous differential equations with regular coefficients. Since $a>0$, the problem \eqref{KG-equation-c} admits thus an unique solution. Once written in this way, the problem is of the form given in \eqref{eq:second-order-ode} we my thus apply the results of Lemma \ref{le:second-order-ODE} with $k=0$ from which we get \eqref{C-ineq}.
From Lemma \ref{le:second-order-ODE} and in particular from \eqref{eq:rewert-equation} we also have that 
\[ 
	\tilde{f}  =  - \int_{\tau_0}^\tau  (\tau-\eta)\tilde{W}(\eta)  \tilde{f}(\eta)  \dif \eta  + 
	\int_{\tau_0}^\tau  (\tau-\eta)  \tilde{h}(\eta)  \dif \eta  + \tilde{f}_0+ (\tau-\tau_0) \tilde{f}'_0. 
\]

Hence,
\begin{align*}
    |\delta \tilde{f}(\tau)| & \leq (\tau-\tau_0)  \int_{\tau_0}^\tau \at |\delta \tilde{W}(s)||\tilde{f}(s)| + |\tilde{W}(s)| |\delta \tilde{f}(s)| \ct \dif s + (\tau-\tau_0) \int_{\tau_0}^\tau |\delta \tilde{h}(s)| \dif s \\
                             & \leq \gamma(\tau)  + (\tau-\tau_0)  \int_{\tau_0}^\tau |\tilde{W}(s)||\delta \tilde{f}(s)| \dif s,
  \end{align*}
  where $\gamma(t) = (\tau - \tau_0)^2\at \| \delta \tilde{W} \|_{\infty} \| \tilde{f}\|_{\infty} + \| \delta \tilde{h} \|_{\infty} \ct$, and by Gr\"onwall lemma
  \begin{align*}
    |\delta \tilde{f}(\tau)| & \leq \gamma(\tau) \mathrm{e}^{(\tau-\tau_0) \int_{\tau_0}^\tau |\tilde{W}(s)| \dif s}.
  \end{align*}
  Thus the remaining statements follow.
\end{proof}

Hence we have the following
\begin{theorem}
  \label{theo:sol-SCE-wave}
  Consider a flat cosmological spacetime. Fix five constants $\tau_0$, $a_0$, ${a'}_0, {a''}_0, a^{(3)}_0$, with $a_0 >0$, corresponding at the initial data in $\tau_0$ for \eqref{SCE-wave}, namely $a_0 = a(\tau_0), {a'}_0 = {a'}(\tau_0), {a''}_0 = {a''}(\tau_0), a^{(3)}_0 = a^{(3)}(\tau_0)$. Fix the renormalization constants and let $\xi \neq \frac{1}{6}$.
  Select a state $\omega$ as in \eqref{Gaussian-state} characterized by \eqref{eq:init-cond-state} which is compatible with these initial conditions and thus the second equation in \eqref{SCE-wave} holds.
  It exists a unique $F$ given in terms of $S$ and the initial data which is a solution of the first equation \eqref{SCE-wave} and which depends continuously on the initial data $a_0, a_0', a_0'', a_0^{(3)}$, on the initial data for the state $\omega$ and on $a$.
  Moreover,
  \begin{equation}
    \label{C-ineq-F}
    \| \tilde{F} \|_\infty \leq
    \left( \| \mI_0 \|_\infty + (\tau-\tau_0)^2 \|\tilde{S}\|_\infty \right)\exp \left( (\tau-\tau_0)^2\| \tilde{W} \|_\infty\right),
  \end{equation}
  \begin{equation}
    \label{C-ineq-deltaF}
    \|\delta \tilde{F}\|_\infty \leq (\tau-\tau_0)^2 \at \| \delta \tilde{W} \|_{\infty} \| \tilde{F}\|_{\infty} + \| \delta \tilde{S} \|_{\infty} \ct \exp((\tau - \tau_0)^2 \|\tilde{W}\|_\infty),
  \end{equation}
  where $\tilde{F} = {F}a$, $\tilde{S} = S a^3$ and $\| \mI_0 \|_\infty \doteq |\tilde{F}(\tau_0)| + (\tau-\tau_0) | {\tilde{F}'}(\tau_0)|$ corresponds to the finite norm of $\tilde{F}(\tau)$ at the initial time $\tau_0$ depending on the initial data of $a$ and the state.

  The system of equations \eqref{SCE-trace} reduces to  the second equation in \eqref{SCE-wave}
  \begin{equation}
    \label{SCE-state}
    \expvalom{\phi^2} - c_\xi R =
    {F}(a,R),
  \end{equation}
  where $F$ is the unique solution of \eqref{SCE-wave} given in terms of $S$ and the initial data of $a$ and of the state $\omega$.
\end{theorem}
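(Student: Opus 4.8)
The plan is to convert the first equation in \eqref{SCE-wave}, on an FLRW background, into the second order ordinary differential equation $P_c F = S$ of \eqref{eq:ODE}--\eqref{Pc}; to read off its Cauchy data at $\tau_0$ from the second equation in \eqref{SCE-wave}; to invoke Proposition \ref{prop:Pc}; and finally to deduce the reduction from the equivalence already established in Proposition \ref{pr:SCE-wave}. The pivotal observation is that, if the second equation in \eqref{SCE-wave} is to hold, then at $\tau=\tau_0$
\[
F(\tau_0)=\expvalom{\phi^2}(\tau_0)-c_\xi R(\tau_0),\qquad
F'(\tau_0)=\partial_\tau\expvalom{\phi^2}(\tau_0)-c_\xi R'(\tau_0),
\]
and --- this is the reason the initial conditions for $a$ are prescribed up to the third derivative --- both right-hand sides depend only on $(a_0,a'_0,a''_0,a_0^{(3)})$ and on the state data $\Phi,E,s$, not on the whole profile $a$. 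Indeed, \eqref{eq:rho} exhibits $\expvalom{\phi^2}(\tau_0)$ as a finite number built from $a_0,a''_0$ (through $R_0$) and $\Phi$, while $\partial_\tau\expvalom{\phi^2}(\tau_0)$ follows from the decomposition \eqref{eq:Dphi2-decomposition} together with \eqref{eq:Qd0}, \eqref{eq:W0}, \eqref{eq:W0prime} and the analysis of $Q^d_s,Q^d_c$ in Section \ref{se:states}, which bring in $a$ at $\tau_0$ only up to the third derivative (that derivative entering through $V'(\tau_0)$, i.e.\ $R'(\tau_0)$); the finiteness is exactly the statement that $\omega$ is sufficiently regular in the sense of Definition \ref{def:sufficiently-regular}, and $R(\tau_0),R'(\tau_0)$ are rational functions of $a_0,\dots,a_0^{(3)}$. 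Setting $\tilde F=aF$ this fixes $\tilde F(\tau_0)=a_0F(\tau_0)$ and $\tilde F'(\tau_0)=a'_0F(\tau_0)+a_0F'(\tau_0)$.

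Next I would apply Proposition \ref{prop:Pc} with $h=S$ and the Cauchy data just identified. Its hypotheses are met: $a\in C^2[\tau_0,\tau]$ with $a>0$, and $S$ is continuous, being a polynomial expression in $R,R_{ab},R_{abcd}$ and hence in $a,a',a''$ (no derivative of $a$ above the second occurs in $S$). This delivers a unique $F\in C^2[\tau_0,\tau]$ solving $P_c F=S$ with the prescribed data, written through the retarded fundamental solution $\Delta_R^c$ of $P_c$ and the data, together with the bound \eqref{C-ineq}; taking $\tilde h=\tilde S=a^3S$ in \eqref{C-ineq} yields \eqref{C-ineq-F} with $\|\mI_0\|_\infty=|\tilde F(\tau_0)|+(\tau-\tau_0)|\tilde F'(\tau_0)|$, and the functional-derivative estimate of Proposition \ref{prop:Pc} with $\delta\tilde h=\delta\tilde S$ yields \eqref{C-ineq-deltaF}. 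Continuity of $F$ in $a_0,a'_0,a''_0,a_0^{(3)}$ and in the state data follows because $\tilde F$ depends linearly on $(\tilde F(\tau_0),\tilde F'(\tau_0))$ (Proposition \ref{prop:Pc}), and these values depend continuously on the data by the preceding paragraph; continuity in $a$ is \eqref{C-ineq-deltaF}.

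For the reduction, by Proposition \ref{pr:SCE-wave} the first equation in \eqref{SCE-trace} is equivalent to the system \eqref{SCE-wave}, while its second equation holds because $\omega$ is compatible with the initial conditions (Definition \ref{def:compatible}, guaranteed by Proposition \ref{prop:Friedman-constraint}); hence \eqref{SCE-trace} is equivalent to \eqref{SCE-wave}. For any scale factor satisfying \eqref{SCE-wave}, the function $\expvalom{\phi^2}-c_\xi R$ solves $P_c(\,\cdot\,)=S$ and has at $\tau_0$ precisely the Cauchy data fixed above, so by uniqueness in Proposition \ref{prop:Pc} it coincides with the $F$ constructed here; therefore \eqref{SCE-wave}, and with it the full system \eqref{SCE-trace}, is equivalent to the single equation $\expvalom{\phi^2}-c_\xi R=F(a,R)$, which is \eqref{SCE-state}. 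I expect the only genuinely delicate step to be the bookkeeping in the first paragraph --- verifying that $F(\tau_0)$ and $F'(\tau_0)$ are finite and involve no fourth derivative of $a$, since this is exactly what makes the Cauchy data depend on the prescribed data alone and lets the system collapse to \eqref{SCE-state}; the real analytic difficulty, the loss of derivatives concealed inside $\expvalom{\phi^2}$, is deferred to the study of \eqref{SCE-state} in Section \ref{se:SEE-fixed-point}.
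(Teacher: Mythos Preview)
Your proposal is correct and follows essentially the same approach as the paper: fix the Cauchy data $F(\tau_0),F'(\tau_0)$ from the second equation in \eqref{SCE-wave} at $\tau_0$ (noting these involve only $a_0,\dots,a_0^{(3)}$ and the state data), apply Proposition~\ref{prop:Pc} to obtain the unique $F$ together with the bounds \eqref{C-ineq-F}--\eqref{C-ineq-deltaF}, and conclude the reduction to \eqref{SCE-state}. Your write-up is in fact more explicit than the paper's about why the Cauchy data depend only on derivatives of $a$ up to the third, and your uniqueness argument for the reduction step is spelled out more carefully.
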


\begin{proof}
  We start observing that, thanks to Proposition \ref{prop:Friedman-constraint}, the state can be chosen to be compatible with the first Friedman equation at $\tau_0$.
  The initial data on the scale factor and its first three derivatives allow to construct the corresponding initial data for $F$:
  \begin{equation}
    \label{eq:init-cond-F}
    \begin{aligned}
      \tilde{F}(\tau_0)    & = a_0 \left( \expval{\wick{\phi^2}}_\omega(\tau_0) - c_\xi R_0\right),                                                                                         \\
      {\tilde{F}}'(\tau_0) & = {a}'_0 \left( \expval{\wick{\phi^2}}_{\omega}(\tau_0) - c_\xi R_0\right) + a_0 \left( \pa_\tau \expval{\wick{\phi^2}}_\omega(\tau_0) - c_\xi{R}'_0 \right),
    \end{aligned}
  \end{equation}
  where the expectation values $\expval{\wick{\phi^2}}_\omega$ and $\pa_\tau \expval{\wick{\phi^2}}_{\omega}$ are evaluated at $\tau=\tau_0$, respectively.
  These initial data depends on the modes initial data \eqref{eq:init-cond-state} at $\tau=\tau_0$ and on the initial data of the geometry $a_0, a_0', a_0''$ and $a_0^{(3)}$.
  The unique solution $F$ of the first equation \eqref{SCE-wave} which satisfies the initial conditions \eqref{eq:init-cond-F} and its bounds are obtained in Proposition \ref{prop:Pc}.
  We have thus partially integrated the system of equation \eqref{SCE-wave} and we are left with the second equation in \eqref{SCE-wave}.
\end{proof}

\section{Semiclassical Einstein equation as a fixed point equation}\label{se:SEE-fixed-point}

In section \ref{se:SEE-integration} and in particular thanks to Theorem \ref{theo:sol-SCE-wave} we have reduced the problem of finding solutions of the semiclassical Einstein equation on flat cosmological backgrounds to the problem of finding solutions of \eqref{SCE-state}, which satisfy the desired initial conditions.
We now prove the existence of a unique solution of \eqref{SCE-state}, and thus of \eqref{SCE-trace}, on a small interval of conformal time just after the initial time $\tau = \tau_0$.
In order to have control on the third order derivative of $a$ and to be able to impose $a^{(3)}(\tau_0)=a_0^{(3)}$ we study the time derivative of the equation
\begin{equation}
  \label{eq:time-derivative-phi2}
  \partial_\tau \left( a^2( \expvalom{\phi^2} - c_\xi R - F) \right)  = 0.
\end{equation}
Notice that this equation is equivalent to \eqref{SCE-state} because at $\tau_0$ the equation without derivatives
\[ 
	\left. \expvalom{\phi^2} - c_\xi R - F \right|_{\tau_0} =0
\]
holds thanks to the choice of initial condition made for $F$ in \eqref{eq:init-cond-F}.

Adopting the same strategy presented in \cite{Pinamonti:2010is} and \cite{Pinamonti:2013wya}, we show that, having fixed initial conditions for $a$ and having chosen a state $\omega$ compatible with these initial conditions thanks to the result of Proposition \ref{prop:Friedman-constraint} equation \eqref{SCE-state} can be viewed as a fixed point equation
\begin{equation}
  \label{fixed-point-equation}
  X'=\mathcal{C}[X'], \qquad X'\in C[\tau_0,\tau_1],
\end{equation}
where $X$ can be obtained from $X'$ by direct integration with the condition $X(\tau_0)=X_0$, furthermore $X$ is directly related to the scale factor of the spacetime
\begin{equation}
  \label{X}
  X = \frac{1}{6}a^2R = a''/a.
\end{equation}
The initial conditions for $X_0$ and $X'_0$ are then fixed by the initial conditions of the scale factor $a$
\[
	X_0 \doteq X(\tau_0)=\frac{a''_0}{a_0}, \qquad X'_0 \doteq X'(\tau_0) = \frac{a_0'''}{a_0} -\frac{a_0''a'_0}{a_0^2}.
\]
Since, $\mathcal{C}$ is a suitable map acting on $C[\tau_0,\tau_1]$ which is a Banach space when equipped with the uniform norm, to prove existence and uniqueness of solutions of the analyzed system, we shall show that the map $\mathcal{C}$ is a contraction when restricted on a suitable compact subset
\begin{equation}
  \label{eq:ball-delta}
  \mathcal{B}_\delta \doteq \left\{X'\in C[\tau_0,\tau_1]\mid \|X'-X'_0\| \leq \delta\right\}
\end{equation}
when $\tau_1-\tau_0$ is sufficiently small. Thus, the existence and uniqueness of the solution descends from the application of the Banach fixed point theorem.

The function $X$ is exactly the curvature-like quantity entering the second Friedmann equation or \eqref{trace-equation} and is contained both in the scale factor constructed as the unique solution of
\[
  \begin{cases}
    a''=Xa,          \\
    a'(\tau_0)=a'_0, \\
    a(\tau_0)=a_0
  \end{cases}
\]
and in the potential $V$ via the definition \eqref{potential}
\begin{equation}
  \label{potential-2}
  V(\tau) = m^2(a^2 - a_0^2) + (6\xi - 1) \at X - X_0 \ct.
\end{equation}
Hence, from now on we shall view both of them as functionals of $X$ and thus of $X'$ because $X(\tau_0)=X_0$.
Some useful inequalities satisfied by $a[X]$ are given in Lemma \ref{le:a[x]}.

Different from the case of conformal coupling \cite{Pinamonti:2010is, Pinamonti:2013wya}, we need to better analyze $\expval{\wick{\phi^2}}_\omega$.
Actually, according to Proposition \ref{prop:decomposition}, in the state dependent part of this expectation value there is a non-local term (the linear operator $\operatorT[V]$ given in equation \eqref{T-Vk}) which depends on derivatives of the scale factor higher than the second and in $\partial_\tau\expval{\wick{\phi^2}}$ there is a similar term (the linear operator $\operatorT[V']$ applied on $V'$) which depends on fourth order derivatives of the scale factor.
The presence of these non-local higher order derivatives forbids to solve \eqref{eq:time-derivative-phi2} directly and actually this is the source of the regularity issues found in the formulation of the semiclassical Einstein equation on cosmological spacetimes as a dynamical system \cite{Gottschalk:2018kqt}.
We shall show how to deal with these higher order derivatives which cannot be controlled on $\mathcal{B}_\delta$.
However, the equation $h=\operatorT[V']$ admits an inversion formula continuous in the norm of $\mathcal{B}_\delta$, namely with respect to $\| X' \|_\infty$.
Thus, we can recover control of its continuity in $\mathcal{B}_\delta$.

Thus, in order to prove the existence of solutions $a$ of \eqref{SCE-state} which satisfy the desired initial conditions, a careful analysis of each term of $\partial_\tau (a^2 \langle\wick{\phi^2}\rangle_{\omega})= Q^d_0 + \operatorT_{\tau_0}[V'] + Q_f^d + Q^d_s$ given in \eqref{eq:Dphi2-decomposition} is necessary.
We thus have the following proposition which descends directly from the the decomposition of the state \eqref{eq:Dphi2-decomposition}, furthermore, the continuity of some of the involved operators as functional operators of $V$ and hence $X$ in $\cB_\delta$ is guaranteed by the results of Theorem \ref{theo:sol-SCE-wave} and of Proposition \ref{prop:decomposition}.

\begin{proposition}
  \label{prop:SCE-V}
  Fix the initial data $a_0,a'_0, a''_0$ and $a_0^{(3)}$ for $a$, assume that $\Omega_k^2$ given in \eqref{mode-eq} is strictly positive,  the semiclassical equation \eqref{eq:time-derivative-phi2} can be expressed in terms of the potential $V$ \eqref{potential-2}, $R$ and $a$ as
  \begin{equation}
    \label{SCE-V}
    Q^d_0 + \operatorT_{\tau_0}[V'] + Q_f^d + Q^d_s  =   \partial_\tau \left(a^2 c_\xi R + a^2 {F}(a,R) \right),
  \end{equation}
  where $Q^d_0$ is given in \eqref{eq:W0}, $Q^d_f$ and $\operatorT_{\tau_0}[V']$ are introduced in Proposition \ref{prop:decomposition} and $Q^d_s$ is given in Proposition \ref{prop:FV}.
  $F$ is given in \eqref{SCE-state} and its properties are stated in Theorem \ref{theo:sol-SCE-wave}.
\end{proposition}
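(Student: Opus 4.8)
The plan is to obtain the statement by merely collecting the pieces already prepared in Sections~\ref{se:states} and~\ref{se:SEE-integration}; no genuinely new estimate is needed. First I would use that, since $\omega$ is homogeneous and isotropic, $\langle\wick{\phi^2}\rangle_\omega$ is a function of the conformal time alone, so the decomposition \eqref{eq:Dphi2-decomposition} applies and gives $\partial_\tau(a^2\langle\wick{\phi^2}\rangle_\omega) = Q_s^d + Q_c^d + Q_0^d$. By the very definition $Q_f^d[V] \doteq Q_c^d[V] - \operatorT_{\tau_0}[V']$ introduced in Proposition~\ref{prop:decomposition}, the middle term splits as $Q_c^d = \operatorT_{\tau_0}[V'] + Q_f^d$, whence
\[
\partial_\tau\bigl(a^2\langle\wick{\phi^2}\rangle_\omega\bigr) = Q_0^d + \operatorT_{\tau_0}[V'] + Q_f^d + Q_s^d .
\]

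Next I would rewrite the dynamical equation. By Theorem~\ref{theo:sol-SCE-wave}, once a state compatible with the initial conditions has been fixed — which is possible thanks to Proposition~\ref{prop:Friedman-constraint} — the first equation of \eqref{SCE-wave} has been integrated, and the traced semiclassical equation is equivalent to \eqref{SCE-state}, i.e.\ $\langle\wick{\phi^2}\rangle_\omega - c_\xi R = F(a,R)$. Multiplying by $a^2$ and differentiating in $\tau$ produces exactly \eqref{eq:time-derivative-phi2}; conversely \eqref{eq:time-derivative-phi2} implies \eqref{SCE-state} because the undifferentiated identity holds at $\tau_0$ by the choice of initial data \eqref{eq:init-cond-F} for $F$. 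Reading \eqref{eq:time-derivative-phi2} in the form $\partial_\tau(a^2\langle\wick{\phi^2}\rangle_\omega) = \partial_\tau(a^2 c_\xi R + a^2 F)$ and substituting the decomposition above yields \eqref{SCE-V}.

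It then remains to record that each term in \eqref{SCE-V} is a functional of $V$ (equivalently of $X$, and hence of $X'$ since $X(\tau_0)=X_0$) through $a$ and $R$: $Q_0^d$ is the explicit expression \eqref{eq:W0prime}, built from $a$, $R$ and derivatives up to third order; $\operatorT_{\tau_0}[V']$ is the explicit retarded operator \eqref{T-Vk} applied to $V'$, with $V$ as in \eqref{potential-2}; $Q_f^d$ depends on $V$ through the modes $\chi_k$ and carries the Gateaux differentiability and bounds of Proposition~\ref{prop:decomposition}; $Q_s^d$ depends on $V$ and on the fixed Bogoliubov data by Proposition~\ref{prop:FV}; and the right-hand side depends on $a$, $R$ and, via $F$, continuously on $a$ by Theorem~\ref{theo:sol-SCE-wave}. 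Since the assertion is a pure rearrangement of terms, I do not expect a real obstacle; the only point deserving a line of care is that the auxiliary subtraction $\frac{V'(\tau_0)}{4k_0^3}\cos(2k_0(\tau-\tau_0))$ built into both $Q_c^d$ and $Q_s^d$ (cf.\ Remark~\ref{re:subtraction}) cancels in the sum, as it must, and is already accounted for in the definitions of Section~\ref{se:states}.
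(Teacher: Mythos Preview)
Your proposal is correct and matches the paper's own treatment: the paper does not give a separate proof of this proposition but states that it ``descends directly from the decomposition of the state \eqref{eq:Dphi2-decomposition}'' together with Theorem~\ref{theo:sol-SCE-wave} and Proposition~\ref{prop:decomposition}, which is exactly the assembly you carry out. Your added remark about the auxiliary $\cos(2k_0(\tau-\tau_0))$ subtraction cancelling between $Q_c^d$ and $Q_s^d$ is a nice explicit check of what Remark~\ref{re:subtraction} leaves implicit.
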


Inside Eq.~\eqref{SCE-V}, the most problematic contribution is the operator map $\operatorT_{\tau_0}[V']$ defined in \eqref{T-Vk}, which contains the essence of the non-local nature of the semiclassical equation \eqref{SCE}, because to compute $\operatorT_{\tau_0}[V'](\tau)$ we need to know $V''(\eta)$ for every $\eta\in [\tau_0,\tau]$. Thus, this operators introduces the main difficulty of this equation: in fact, we shall prove that it is not continuous in $C(\aq\tau_{0}, \tau\cq)$ with respect to the uniform norm for any $\tau > \tau_0$.
In the next section we shall thus study the operator \[ \operatorT_{\tau_0}[f](\tau)= -\frac{1}{8\pi^2}\int_{\tau_0}^{\tau} {f'}(\eta) \log(\tau - \eta) \dif \eta  \] for $f\in C^{1}[\tau_0,\tau]$.

\begin{remark}\label{re:problem}
  It is of interest to point out the difference between equation \eqref{SCE-state} and the analogous equation of the conformal coupling case $\xi = 1/6$ \cite{Pinamonti:2013wya}:
  \begin{equation}
    \label{eq:conf-coupled}
    \frac{\dif H}{\dif\tau} = \frac{a}{H_{c}^{2}-H^{2}}\left(H^{4}-2 H_{c}^{2}
    H^{2} + 240 \pi^{2}\left(m^{2} \expvalom{\phi^2} + \beta m^{2} R + 4 \tilde{\Lambda}\right)\right),
  \end{equation}
  where $H_{c}^{2} \doteq 1440 \pi^{2} /(8 \pi \mathrm{G})=180 \pi / \mathrm{G}$ and $\tilde{\Lambda} \doteq \Lambda/(8\pi G)$.
  Actually, in the case of conformal coupling, it is possible to cancel the terms with derivatives higher than the second by a choice of the renormalization constants.
  Furthermore, in the conformally coupled case it is possible to obtain estimates for the renormalized Wick square which involves only up to first order derivatives of $a$ (see \cite{Pinamonti:2010is, Pinamonti:2013wya}).
  Equation \eqref{eq:conf-coupled} is written in normal form, namely, the term with the highest derivative of the scale factor is isolated at the left hand side of the equation.
  Thanks to this fact equation \eqref{eq:conf-coupled} can be solved directly.
  Instead, in the case of generic coupling, the presence of a non-local term which contains third order derivatives of the scale factor $a^{(3)}$ inside $\expvalom{\phi^2}$, through the linear operator $\operatorT_{\tau_0}[V]$ in \eqref{T-Vk},
     forbids to prove existence of solutions directly for \eqref{SCE-state}.
  In a similar way, the non-local term which contains fourth order derivatives of the scale factor $a^{(4)}$ inside $\partial_\tau(a^2\expvalom{\phi^2})$ through $\operatorT_{\tau_0}[V']$ forbids a direct analysis of 
\eqref{eq:time-derivative-phi2}.
\end{remark}

\subsection{Properties of the operator $\operatorT_{\tau_0}$ and its inversion formula}

In this section we shall analyze the properties of the operator $\operatorT_{\tau_0}$ defined in \eqref{T-Vk}. 
In particular, a way to overcome the problem of the loss of derivatives of $\operatorT_{\tau_0}[f]$ envisaged in Remark \ref{re:problem} is to study an inversion formula for $h=\operatorT_{\tau_0}[f]$ and to prove the continuity of the associated inverse operator $\operatorT_{\tau_0}^{-1}$. Preliminarily, we notice that, up to a translation of the function $f_{\tau_0}(x) = f(x+\tau_0)$, 
\[ 
\operatorT_{\tau_0}[f](x+\tau_0)= \operatorT_0[f_{\tau_0}](x).
\]
We shall thus study the following  operator $\operatorT : C_1^{\infty}[0,r] \rightarrow C[0,r]$, with $r>0$,
  \begin{equation}
    \label{T}
    \operatorT[f](x) \doteq -\frac{1}{8\pi^2}\int_\mathbb{R^+} f^\prime(y) \theta(x-y) \log(x-y) \dif y = -\frac{1}{8\pi^2}\int_{0}^{x} f^\prime(y) \log (x-y) \dif y
  \end{equation}
  which equals  $\operatorT_0$ on $C^{1}[0,\tau-\tau_0]$. Clearly $\operatorT$ is bounded in the $C^1$ sense.
Indeed, since $\log x$ is integrable in $x=0$, we find
\[
  {\|\operatorT[f]\|}_\infty
  \leq \sup_{x \in [0,r]} \int_0^x |f'(y)| \, |\log(x-y)| \dif y
  \leq {\|f'\|}_\infty \int_0^r |\log(x-y)| \dif y
  \lesssim {\|f'\|}_\infty,
\]
where we denote by ${\|X\|}_\infty$ the Banach norm on the space of continuous function $C[0,r]$.
However, $\operatorT$ is not bounded in the $C^0$ sense.
In fact, even if we assume smoothness and compact support, $\operatorT[f]$ cannot be bounded by ${\|f\|}_\infty$.

\begin{proposition}
  \label{prop:T}
  The restriction of $\operatorT$ to $C^\infty_0[0,r]$ is not bounded in the sense of $C[0,r]$.
\end{proposition}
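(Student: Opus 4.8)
The plan is to prove unboundedness by exhibiting an explicit sequence $f_n \in C^\infty_0[0,r]$ with $\|f_n\|_\infty$ bounded while $\|\operatorT[f_n]\|_\infty \to \infty$. Fix once and for all a bump $\psi \in C^\infty_0(0,r)$ with $0 \le \psi \le 1$ and $\psi \equiv 1$ on some closed subinterval $[\alpha,\beta] \subset (0,r)$, and set $f_n(y) \doteq \psi(y)\sin(ny)$, so that $\|f_n\|_\infty \le 1$ for every $n$. It then suffices to show that, on $[\alpha,\beta]$, the function $\operatorT[f_n]$ grows logarithmically in $n$. Writing $f_n' = \psi'\sin(n\cdot) + n\,\psi\cos(n\cdot)$, the first summand contributes to $\operatorT[f_n](x)$ a term bounded by $\frac{1}{8\pi^2}\|\psi'\|_\infty\int_0^r|\log t|\,\dif t$, uniformly in $n$ and $x$, because $\log$ is integrable near $0$. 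The genuinely dangerous summand is $-\frac{n}{8\pi^2}\int_0^x \psi(y)\cos(ny)\log(x-y)\,\dif y$, and everything reduces to understanding it.

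After the substitution $t = x-y$ and the angle-addition formula for $\cos(n(x-t))$ this dangerous term equals $-\frac{n}{8\pi^2}\bigl(\cos(nx)A_n(x) + \sin(nx)B_n(x)\bigr)$, where $A_n(x) = \int_0^x \psi(x-t)\cos(nt)\log t\,\dif t$ and $B_n(x) = \int_0^x \psi(x-t)\sin(nt)\log t\,\dif t$. In each of $A_n, B_n$ I would write $\psi(x-t) = \psi(x) + [\psi(x-t)-\psi(x)]$; the remainder integral has integrand bounded by $\|\psi'\|_\infty\,t|\log t|$ and admits an $L^1$ weak derivative (dominated near $t=0$ by $|\psi'(x)||\log t| + O(1)$), so a single integration by parts yields an $O(1/n)$ bound, uniformly for $x \in [\alpha,\beta]$. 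Thus $A_n,B_n$ are governed, up to $O(1/n)$, by $\psi(x)\int_0^x\cos(nt)\log t\,\dif t$ and $\psi(x)\int_0^x\sin(nt)\log t\,\dif t$ respectively.

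The heart of the argument is the asymptotics of these two oscillatory log-integrals. Substituting $u = nt$ and using $\log(u/n) = \log u - \log n$ together with $\int_0^{nx}\cos u\,\dif u = \sin(nx)$ and $\int_0^{nx}\sin u\,\dif u = 1-\cos(nx)$, the task reduces to the large-$N$ behaviour of $\int_0^N\cos u\log u\,\dif u$ and $\int_0^N\sin u\log u\,\dif u$. An integration by parts in $u$ together with the classical facts $\int_0^N\frac{\sin u}{u}\,\dif u \to \frac{\pi}{2}$ and $\mathrm{Ci}(N) \to 0$ gives $\int_0^N\cos u\log u\,\dif u = \sin N\,\log N + O(1)$ and $\int_0^N\sin u\log u\,\dif u = -\cos N\,\log N + O(1)$ (for the sine case one first splits at $u=1$ to isolate the cancellation of the two divergences at $u=0$). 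Feeding this back and recombining, the $\log n$ \emph{cancels} in the cosine case, leaving $\int_0^x\cos(nt)\log t\,\dif t = O(1/n)$, but it \emph{survives} in the sine case: $\int_0^x\sin(nt)\log t\,\dif t = -\frac{\log n}{n} + O(1/n)$, uniformly for $x\in[\alpha,\beta]$. Hence $A_n(x) = O(1/n)$ and $B_n(x) = -\psi(x)\frac{\log n}{n} + O(1/n)$, so that $n\cos(nx)A_n(x) = O(1)$, $n\sin(nx)B_n(x) = -\psi(x)\sin(nx)\log n + O(1)$, and therefore $\operatorT[f_n](x) = \frac{\psi(x)\sin(nx)}{8\pi^2}\log n + O(1)$, with the $O(1)$ uniform in $x \in [\alpha,\beta]$ and in $n \ge 2$.

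To finish, for $n$ large enough that $\beta - \alpha > \pi/n$ there is $x_n \in [\alpha,\beta]$ with $\sin(nx_n) = 1$; since $\psi(x_n) = 1$ this gives $\|\operatorT[f_n]\|_\infty \ge \frac{\log n}{8\pi^2} - C$ with $C$ independent of $n$, while $\|f_n\|_\infty \le 1$, contradicting any estimate of the form $\|\operatorT[f]\|_\infty \le C\|f\|_\infty$ on $C^\infty_0[0,r]$. I expect the main obstacle to be the uniform (in $x$) asymptotic analysis of the oscillatory log-integrals, and in particular pinning down the asymmetry between the sine and the cosine cases: the surviving $\log n$ in the sine integral is precisely the ``loss of derivatives'' announced in Remark~\ref{re:problem}, and it is consistent both with the $C^1$-boundedness of $\operatorT$ established just before this statement and with the later continuity of $\operatorT_{\tau_0}^{-1}$ in the uniform norm.
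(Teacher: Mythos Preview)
Your proof is correct, but it takes a genuinely different route from the paper. The paper's argument is much shorter: it considers smooth approximations $f_\varepsilon$ of the indicator function of an interval $[x_1,x_2]\subset(0,r)$, built with a single mollifier so that $\|f_\varepsilon\|_\infty=1$, and observes that $\operatorT[f_\varepsilon](x)$ converges pointwise to $-\log(x-x_1)+\log(x-x_2)$ for $x>x_2$ and to $-\log(x-x_1)$ for $x_1<x<x_2$, which is unbounded near $x_1$; the norm therefore blows up. In other words, the paper localises in \emph{position} and exploits the logarithmic singularity of the kernel directly.

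Your approach localises in \emph{frequency}: you test $\operatorT$ against high-frequency wave packets $\psi(y)\sin(ny)$ and extract the $\log n$ growth via asymptotics of oscillatory log-integrals. This costs you a fair amount of analysis (the integration-by-parts remainder bounds, the sine/cosine asymmetry, uniformity in $x$), but what it buys is an explicit quantification of the derivative loss: your computation shows precisely that $\operatorT$ amplifies amplitude by $\sim\log n$ at frequency $n$, which is the Fourier-side counterpart of the kernel's logarithmic singularity and matches the $(\gamma+\log s)$ symbol appearing later in Proposition~\ref{prop:T-inverse}. The paper's proof is cleaner for the bare unboundedness statement; yours gives more structural information and connects naturally to the subsequent inversion formula.
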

\begin{proof}
  We consider the action of $\operatorT$ on a sequence of smooth functions $f_\varepsilon$ in the limit $\varepsilon \rightarrow 0^+$.
  For an arbitrary interval $[x_1,x_2] \subset (0,r)$, let
  \[
    f_\varepsilon(x) \doteq \theta_\varepsilon(x_1-x)-\theta_\varepsilon(x_2-x),
    \quad
    \theta_\varepsilon(x) \doteq \int_{-\infty}^{x} \frac{1}{\varepsilon} \varphi\at\frac{y}{\varepsilon}\ct \dif y,
  \]
  where $\varphi \in C^\infty_0[-1,1]$ is a positive mollifier (in particular, ${\|\varphi\|}_1=1$).
  By construction, for sufficiently small $\varepsilon$, we have $f_\varepsilon \in C^\infty_0[0,r]$ and ${\|f_\varepsilon\|}_\infty = 1$.
  However, for almost every $x$, we get
  \[
    \lim_{\varepsilon \to 0^+} \operatorT[f_\varepsilon](x) =
    \begin{dcases}
      -\log(x-x_1) + \log(x-x_2), & x > x_2,       \\
      -\log(x-x_1),               & x_1 < x < x_2, \\
      0,                          & x < x_1.
    \end{dcases}
  \]
  Hence, for all $M > 0$ there exists $\varepsilon^* > 0$ such that ${\|\operatorT[f_\varepsilon]\|}_\infty \geq M$ for all $\varepsilon \in (0,\varepsilon^*)$.
  Since ${\|f_\varepsilon\|}_\infty = 1$, this proves that $\operatorT$ is not continuous with respect to the uniform norm.
\end{proof}
For this reason, $\operatorT_{\tau_0}[V]$ in $\expvalom{\phi^2}$  and $\operatorT_{\tau_0}[V']$ in $\partial_\tau(a^2\expvalom{\phi^2})$ lose derivative and a direct analysis of \eqref{SCE-state} and of \eqref{eq:time-derivative-phi2}
is not possible.
A way to overcome this problem is to study an inversion formula for $h=\operatorT[f]$ for functions defined on the interval $[0,r]$.
Actually, we shall see that the inverse operator appearing in the inversion formula is more regular than $\operatorT$.
Hence, we shall use the analogous inversion formula for $\operatorT_{\tau_0}$ which is then obtained by a translation.

\begin{proposition}
  \label{prop:T-inverse}
   Consider $\operatorT$ introduced in \eqref{T}.
    The inversion formula for $h=\operatorT[f]$ is
  \begin{equation}
    \label{eq:inversion-formula}
    f(x) = f(0) +   \int_0^{x}
    K(x-y) h(y) \dif y,
  \end{equation}
  with (locally integrable) kernel
  \begin{equation}
    \label{eq:T-1_kernel}
    K(x) \doteq - 4\pi \mathrm{i} \int_{\alpha-\mathrm{i}\infty}^{\alpha+\mathrm{i}\infty} \frac{\mathrm{e}^{s x}}{\gamma+\log{s}} \dif s,
    \quad
    \alpha > \mathrm{e}^{-\gamma},
  \end{equation}
  where $\gamma$ is the Euler-Mascheroni constant. Hence, the restriction of $\operatorT$ to $D = \{ f \in C^1[0,r] \mid f(0) = 0 \}$ is such that $\operatorT : D\to D$ and it admits a unique inverse. It is given by
  \begin{equation}
    \label{eq:T-1}
    \operatorT^{-1}[h](x) = \int_0^x K(x-y) h(y) \dif y.
  \end{equation}
The operator $\operatorT^{-1}$ extends to a linear bounded operator on $C[0,r]$ for $r>0$ and
  \begin{gather}
    \label{eq:T-continuity}
    {\|\operatorT^{-1}[h]\|}_\infty \leq C_\infty(r) {\|h\|}_\infty,
  \end{gather}
  where $C_\infty(r) > 0$ depends continuously on $r$ and vanishes in the limit $r \to 0$.
\end{proposition}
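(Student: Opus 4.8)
The plan is to regard $\operatorT$ as a Volterra convolution operator and to invert it on the half-line. Writing $\ell(x)\doteq\log x$, for $f\in D$ one has $\operatorT[f]=-\tfrac{1}{8\pi^2}(f'*\ell)$ on $[0,r]$, so — extending functions by zero to $[0,\infty)$ and using $\int_0^\infty \mathrm e^{-sx}\log x\,\dif x=-(\gamma+\log s)/s$ for $\Re s>0$ \cite{abramowitz1965handbook}, together with $\widehat{f'}(s)=s\widehat f(s)$ (valid because $f(0)=0$) — the Laplace transform of $h=\operatorT[f]$ is $\widehat h(s)=\tfrac{1}{8\pi^2}(\gamma+\log s)\widehat f(s)$. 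Inverting the symbol gives $\widehat f(s)=8\pi^2\widehat h(s)/(\gamma+\log s)$, and the convolution theorem yields $f=h*K$ with $K\doteq\mathcal L^{-1}\bigl[8\pi^2/(\gamma+\log s)\bigr]$; restoring the constant $f(0)$ this is exactly \eqref{eq:inversion-formula}, and the Bromwich representation of $K$ is \eqref{eq:T-1_kernel}. The abscissa $\alpha>\mathrm e^{-\gamma}$ is dictated by the requirement that the unique real zero $s_\star=\mathrm e^{-\gamma}$ of $\gamma+\log s$ lie to the left of the contour (the only other singularity of the integrand being the logarithmic branch point at $s=0$). I would then verify that this $K$ is a two‑sided inverse of $\operatorT$ on $D$ by a direct computation — checking via Fubini that $K*\ell$ is such that $(K*\ell)'=-8\pi^2\delta_0$, equivalently $\widehat K(s)\cdot\bigl(-(\gamma+\log s)/s\bigr)\cdot s=-8\pi^2$ — and injectivity of the Laplace transform on continuous functions of exponential type; in particular $\operatorT$ is injective on $D$ with inverse \eqref{eq:T-1}.

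\textbf{Regularity of the kernel and the continuity estimate (the main obstacle).} The heart of the proof is to show that $K\in L^1_{\mathrm{loc}}[0,\infty)$ and that $\int_0^r|K(x)|\,\dif x$ is finite, continuous in $r$, and vanishes as $r\to0^+$; since the symbol decays only like $1/\log|s|$, the Bromwich integral is merely conditionally convergent, so I would first rewrite $K$ in an absolutely convergent form by collapsing the Bromwich contour onto a Hankel contour wrapped around the branch cut $(-\infty,0]$ of $\log$, picking up the residue of $\mathrm e^{sx}/(\gamma+\log s)$ at $s=s_\star=\mathrm e^{-\gamma}$, which equals $\mathrm e^{-\gamma}\mathrm e^{\mathrm e^{-\gamma}x}$ because $\tfrac{\dif}{\dif s}(\gamma+\log s)=1/s$. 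Using $\log(t\,\mathrm e^{\pm\mathrm i\pi})=\log t\pm\mathrm i\pi$ to evaluate the jump across the cut, this yields, for $x>0$,
\[
  K(x)=8\pi^2\,\mathrm e^{-\gamma}\,\mathrm e^{\mathrm e^{-\gamma}x}-8\pi^2\int_0^\infty\frac{\mathrm e^{-xt}}{(\gamma+\log t)^2+\pi^2}\,\dif t
\]
(the sign of the branch‑cut term is irrelevant for the bound below). The first term is smooth and bounded on $[0,r]$, contributing $O(r)$ to $\int_0^r|K|$. For the second term, set $g(t)\doteq\bigl((\gamma+\log t)^2+\pi^2\bigr)^{-1}$, a bounded continuous function with $g(t)=O\bigl((\log t)^{-2}\bigr)$ both as $t\to0^+$ and as $t\to\infty$; by Tonelli,
\[
  \int_0^r\Bigl(\int_0^\infty\mathrm e^{-xt}g(t)\,\dif t\Bigr)\dif x=\int_0^\infty g(t)\,\frac{1-\mathrm e^{-rt}}{t}\,\dif t ,
\]
and splitting the $t$-integral at $t=1/r$, using $\tfrac{1-\mathrm e^{-rt}}{t}\le\min\{r,1/t\}$ together with $\int_1^N g\sim N/(\log N)^2$ and $\int_N^\infty g(t)/t\,\dif t\sim 1/\log N$, bounds this by $O\bigl(1/\log(1/r)\bigr)$. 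Hence $K\in L^1_{\mathrm{loc}}[0,\infty)$ and $C_\infty(r)\doteq\int_0^r|K(x)|\,\dif x=O(r)+O\bigl(1/\log(1/r)\bigr)$ is continuous in $r$ and tends to $0$ as $r\to0^+$; this is precisely what makes $\operatorT^{-1}$, unlike $\operatorT$ itself, continuous in the uniform norm.

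\textbf{Conclusion.} With $K\in L^1_{\mathrm{loc}}$, the formula $\operatorT^{-1}[h](x)=\int_0^x K(x-y)h(y)\,\dif y$ is well defined for every $h\in C[0,r]$ and, by dominated convergence, defines a continuous function of $x$; thus $\operatorT^{-1}$ extends to a linear operator on $C[0,r]$, and
\[
  |\operatorT^{-1}[h](x)|\le\|h\|_\infty\int_0^x|K(x-y)|\,\dif y\le\|h\|_\infty\int_0^r|K(u)|\,\dif u=C_\infty(r)\,\|h\|_\infty
\]
gives boundedness and \eqref{eq:T-continuity}. Finally the translation identity $\operatorT_{\tau_0}[f](\,\cdot+\tau_0)=\operatorT_0[f(\,\cdot+\tau_0)]$ transfers all of this verbatim to $\operatorT_{\tau_0}$. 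I expect the two delicate points to be the contour deformation (justifying the collapse of the Bromwich contour onto the Hankel contour and the absolute convergence of the resulting representation of $K$) and the $t\to\infty$ tail estimate producing the $1/\log(1/r)$ decay of $C_\infty$.
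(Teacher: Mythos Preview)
Your derivation of the inversion formula via the Laplace transform is essentially identical to the paper's: both compute $\mathcal L\{\log\}(s)=-(\gamma+\log s)/s$, obtain $\widehat h(s)=\tfrac{1}{8\pi^2}(\gamma+\log s)\bigl(\widehat f(s)-f(0)/s\bigr)$, and read off $K=\mathcal L^{-1}\bigl[8\pi^2/(\gamma+\log s)\bigr]$.

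For the regularity of $K$ and the bound \eqref{eq:T-continuity} you take a genuinely different route. The paper pushes the Bromwich contour only as far as the imaginary axis, picking up the residue $8\pi^2\mathrm e^{-\gamma}\mathrm e^{\mathrm e^{-\gamma}x}$ and leaving a Fourier-type integral $\mathcal K(x)=4\pi\int_{-\infty}^\infty \mathrm e^{\mathrm i kx}/(\gamma+\log(\mathrm i k))\,\dif k$; it then splits $\mathcal K$ into a cosine part $I$ and a sine part $J$ and proves, via two separate lemmas involving repeated integration by parts and a dyadic splitting of the $k$-integral, that each is continuous off $0$, decays like $|x|^{-1}$, and is locally integrable near $0$. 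You instead collapse the contour all the way onto the Hankel contour around $(-\infty,0]$, obtaining $K(x)=8\pi^2\mathrm e^{-\gamma}\mathrm e^{\mathrm e^{-\gamma}x}\pm 8\pi^2\int_0^\infty \mathrm e^{-xt}g(t)\,\dif t$ with $g(t)=\bigl((\gamma+\log t)^2+\pi^2\bigr)^{-1}$, and then estimate $\int_0^r|K|$ directly by Tonelli and the elementary tail bound $\int_{1/r}^\infty g(t)/t\,\dif t=O(1/\log(1/r))$. This is more economical: the Laplace representation is absolutely convergent for $x>0$, so no oscillatory-integral machinery is needed, and you even get an explicit rate $C_\infty(r)=O(1/\log(1/r))$. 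The paper's approach, in exchange, also yields the pointwise decay $\mathcal K(x)=O(|x|^{-1})$ at infinity, which is more than the proposition requires. One small correction: the jump of $1/(\gamma+\log s)$ across the cut gives a \emph{plus} sign in front of the Laplace integral, not minus---but, as you note, the sign is irrelevant for the bound on $|K|$.
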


\begin{proof}
  We denote by $\mathcal{L}\{f\}$ the Laplace transform of a bounded function $f \in C_b[0,\infty)$,
  \[
    \mathcal{L}\{f\}(s) = \int_0^\infty \mathrm{e}^{-s \tau} f(\tau) \dif \tau.
  \]
  The convolution theorem\footnote{Here $g*f(\tau) = \int_0^\tau g(t) f(\tau-t) \dif t$} for the Laplace transform gives that $\mathcal{L}\{g\}\mathcal{L}\{f\} = \mathcal{L}\{g*f\}$.
  Furthermore, $\mathcal{L}\{f'\}(s) = s\mathcal{L}\{f\}(s) - f(0)$ and thus
  \[
    \mathcal{L}\{\operatorT[f]\}(s) = -\frac{\mathcal{L}(\log)(s)}{8\pi^2} \left( s\mathcal{L}\{f\}(s) - f(0) \right).
  \]
  Let us thus compute
  \[
    \mathcal{L}\{\log\}(s) = \int_0^\infty \mathrm{e}^{-st} \log t \dif t = \int_0^\infty \mathrm{e}^{-st} \log(s t) \dif t - \int_0^\infty \mathrm{e}^{-st} \log s \dif t
    = -\frac{\gamma + \log s}{s}.
  \]
  Hence, we obtain
    \[
    \mathcal{L}\{\operatorT[f]\}(s) = \frac{(\gamma+ \log(s) )}{8\pi^2} \left( \mathcal{L}\{f\}(s)- \frac{f(0)}{s} \right),
    \qquad \mathcal{L}\{f\}(s) =\frac{f(0)}{s} + \frac{8\pi^2}{\gamma+\log(s)}\mathcal{L}\{h\}(s),
  \]
  where $h = \operatorT[f]$. Hence, since  $K$ is the inverse Laplace transform of $8\pi^2(\log(s) + \gamma)^{-1}$, again by the convolution theorem of the Laplace transform we have proved \eqref{eq:inversion-formula}.
  We observe by direct inspection that the restriction of $\operatorT$ to $D$ is closed in $D$. Hence, we also have that the restriction of $\operatorT$ to $D$ admits an unique inverse given in \eqref{eq:T-1}. This finishes the first part of the proof.
  
  For the second part of the proof, note that the function $(\gamma + \log s)^{-1}$ has a simple pole at $s = \mathrm{e}^{-\gamma}$ with residue $\mathrm{e}^{-\gamma}$ and it has a branch cut for $\Re(s) < 0$.
  Furthermore, for $\Re(s) \in (0,\alpha)$ and $x \in \RR$, the function $\mathrm{e}^{sx} (\gamma + \log s)^{-1}$ vanishes in the limit $\Im(s) \to \infty$. Hence, by the Cauchy residue theorem and after a change of variables,
  \begin{align*}
    K(x)  = 8\pi^2 \mathrm{e}^{\mathrm{e}^{-\gamma} x -\gamma} +4\pi\int_{-\infty}^{\infty}   \mathrm{e}^{\mathrm{i} k x}  \frac{1}{\gamma+\log{(\mathrm{i} k)}} \dif k.
  \end{align*}
  To obtain the desired continuity, we need to analyze
  \begin{equation}
    \label{eq:kappa}
    \mathcal{K}(x) \doteq {4\pi} \int_{-\infty}^\infty \frac{\mathrm{e}^{\mathrm{i} k x}}{\gamma + \log(\mathrm{i} k)} \dif k,
  \end{equation}
  which is the Fourier transform of a Schwartz distribution with integral kernel $(\gamma + \log \mathrm{i} k)^{-1}$. The desired continuity follows from the observation that $\mathcal{K}$, given in \eqref{eq:kappa}, is a locally integrable function on $\mathbb{R}$ which is continuous on $\mathbb{R}\setminus\{0\}$ and decays as $|x|^{-1}$ for large $|x|$. The detailed proof of the last statement can be found in the following Lemma \ref{le:kappa} whose proof descends from Lemma \ref{le:I}  and Lemma \ref{le:J}. Hence, thanks to the decay properties of $\mathcal{K}$ and the fact that both $\mathcal{K}$ and thus also $K$ are absolutely integrable near $0$, we have that $\operatorT^{-1}$ extends to a linear bounded operator on $C[0,r]$ for $r>0$ and furthermore
  \[
    {\|\operatorT^{-1}[h]\|}_\infty
    \leq {\|h\|}_\infty \int_0^r |K(x)| \dif x \leq C_\infty(r) {\|h\|}_\infty,
  \]
    where $C_\infty(r)$ depends continuously on $r$ and vanishes in the limit $r\to0$.
\end{proof}

Before introducing the three technical Lemmas used to complete the proof of Proposition \ref{prop:T-inverse} we make the following observations.
Up to the application of a translation, analogous results of Proposition \ref{prop:T-inverse} holds for $\operatorT_{\tau_0}$ and $\operatorT^{-1}_{\tau_0}$.
We observe that an important property of the inversion formula \eqref{eq:inversion-formula} is that it respects causality. Actually $\operatorT^{-1}$ and thus also $\operatorT_{\tau_0}^{-1}$ is a retarded product. 
In the next section, thanks to the continuity shown in Proposition \ref{prop:T-inverse}, we shall be able to prove that a unique solution of \eqref{eq:time-derivative-phi2} exists.

In the proof of the following three Lemmas we use the notation $\log^2 x$ for the square of the logarithm of $x$, i.e., $(\log x)^2$.
Furthermore, the symbol $f \lesssim h$ means that it exists a constant $C$ such that $f\leq Ch$. 

\begin{lemma}\label{le:kappa}
  The function given in \eqref{eq:kappa},
  \begin{equation*}
    \mathcal{K}(x) = 4\pi \int_{-\infty}^\infty \frac{\mathrm{e}^{\mathrm{i} k x}}{\gamma+\log(\mathrm{i} k)} \dif k,
  \end{equation*}
  is continuous for $x \neq 0$, locally integrable near $0$ and bounded outside any interval containing $0$.
\end{lemma}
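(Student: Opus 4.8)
The plan is to split the integral defining $\mathcal{K}(x)$ into a piece coming from bounded frequencies and a piece coming from large frequencies, since the only source of slow decay (and of the singularity at $x=0$) is the tail $|k| \to \infty$, where the denominator $\gamma + \log(\mathrm{i} k)$ grows only logarithmically. Concretely, fix a smooth cutoff $\psi$ with $\psi = 1$ on $[-1,1]$ and $\psi = 0$ outside $[-2,2]$, and write $\mathcal{K} = \mathcal{K}_{\mathrm{low}} + \mathcal{K}_{\mathrm{high}}$, where $\mathcal{K}_{\mathrm{low}}(x) = 4\pi \int \psi(k)\, \mathrm{e}^{\mathrm{i} k x}(\gamma+\log(\mathrm{i} k))^{-1}\dif k$ and $\mathcal{K}_{\mathrm{high}}$ carries the complementary factor $1-\psi(k)$. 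Since $(\gamma + \log(\mathrm{i} k))^{-1}$ is locally integrable near $k=0$ (the logarithmic singularity there is integrable, and the pole of $(\gamma+\log s)^{-1}$ at $s=\mathrm{e}^{-\gamma}$ lies off the imaginary axis so $\gamma+\log(\mathrm{i} k)$ never vanishes for real $k$), the low-frequency part $\mathcal{K}_{\mathrm{low}}$ is the Fourier transform of a compactly supported $L^1$ function; hence it is bounded and continuous on all of $\mathbb{R}$, and this piece causes no trouble. All the analysis is therefore concentrated on $\mathcal{K}_{\mathrm{high}}$.

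For $\mathcal{K}_{\mathrm{high}}$ I would first establish continuity away from $x=0$ and the $|x|^{-1}$ decay simultaneously by integration by parts in $k$. On the support of $1-\psi$ the integrand is smooth, and $\frac{\dif}{\dif k}(\gamma+\log(\mathrm{i} k))^{-1} = -\frac{1}{k(\gamma+\log(\mathrm{i} k))^{2}}$, which is $O(|k|^{-1}\log^{-2}|k|)$ and hence integrable at infinity. Writing $\mathrm{e}^{\mathrm{i} k x} = \frac{1}{\mathrm{i} x}\frac{\dif}{\dif k}\mathrm{e}^{\mathrm{i} k x}$ and integrating by parts once produces the bound $|\mathcal{K}_{\mathrm{high}}(x)| \lesssim |x|^{-1}$ for $|x| \geq 1$, and also shows that for $x$ in any compact set away from $0$ the resulting integral converges absolutely and depends continuously on $x$ (dominated convergence). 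A second integration by parts, valid for $|x|\geq 1$, improves this to $O(|x|^{-1})$ with the constant uniform, which is what the lemma asserts for large $|x|$; combined with the first step this gives continuity on $\mathbb{R}\setminus\{0\}$ and the stated decay.

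The main obstacle is the behaviour near $x = 0$: there $\mathcal{K}_{\mathrm{high}}$ need not be bounded, and one must show it is nonetheless locally integrable. Here the cleanest route is to extract the leading asymptotics of $(\gamma + \log(\mathrm{i} k))^{-1}$ as $|k|\to\infty$. One has $(\gamma+\log(\mathrm{i} k))^{-1} = (\log|k|)^{-1} + O\big((\log|k|)^{-2}\big)$ for large $|k|$, so the genuinely singular contribution to $\mathcal{K}(x)$ near $x=0$ is controlled by $\int_{|k|\geq 2} \mathrm{e}^{\mathrm{i} k x}(\log|k|)^{-1}\dif k$, while the remainder has an extra power of $\log|k|$ in the denominator and is handled by the integration-by-parts estimate above uniformly down to $x=0$. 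For the model integral $\int_{|k|\geq 2}\cos(kx)(\log k)^{-1}\dif k$ one substitutes $k = u/|x|$, getting $|x|^{-1}\int_{2|x|}^{\infty}\cos u\,(\log(u/|x|))^{-1}\dif u$; splitting the $u$-integral at $u=1$ and using that on $[2|x|,1]$ the amplitude $(\log(u/|x|))^{-1}$ is positive and decreasing (so Dirichlet's test applies) one shows this is $O\big(|x|^{-1}(\log(1/|x|))^{-1}\big)$ as $x\to 0$. Since $\int_0^{1/2} x^{-1}(\log(1/x))^{-1}\dif x$ diverges only like $\log\log$, this is still not quite integrable — so in fact the sharper statement is that the singular part is $O\big(|x|^{-1}(\log(1/|x|))^{-2}\big)$, obtained by one further integration by parts in $u$ picking up the derivative of $(\log(u/|x|))^{-1}$, namely $-u^{-1}(\log(u/|x|))^{-2}$; this extra logarithm makes $\int_0^{1/2}|\mathcal{K}(x)|\dif x < \infty$. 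I would present this last estimate carefully, as it is the crux of the lemma; the remaining assertions then follow by assembling $\mathcal{K} = \mathcal{K}_{\mathrm{low}} + \mathcal{K}_{\mathrm{high}}$ and invoking Lemma \ref{le:I} and Lemma \ref{le:J} for the detailed bookkeeping of the two pieces of $\mathcal{K}_{\mathrm{high}}$.
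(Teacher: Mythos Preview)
Your approach is more elaborate than the paper's and partially redundant with it. The paper's proof is a direct reduction: using $\log(\mathrm{i} k)=\log|k|\pm \mathrm{i}\pi/2$ one writes
\[
\mathcal{K}(x)=8\pi\int_0^\infty \cos(kx)\,\frac{\gamma+\log k}{(\gamma+\log k)^2+\pi^2/4}\,\dif k
+4\pi^2\int_0^\infty \sin(kx)\,\frac{1}{(\gamma+\log k)^2+\pi^2/4}\,\dif k,
\]
rescales $k\mapsto k\mathrm{e}^\gamma$ to absorb $\gamma$, and then simply applies Lemmas~\ref{le:I} and~\ref{le:J}, which already contain exactly the estimates you are trying to derive (the $|x|^{-1}$ decay via one integration by parts, and the local integrability at $0$ via a second integration by parts followed by splitting the $k$-domain, yielding the $|x|^{-1}(\log|x|)^{-2}$ bound). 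Your low/high-frequency cutoff is unnecessary overhead: the logarithmic singularity of the symbol at $k=0$ is harmless and is already absorbed in those two lemmas. If you are going to invoke Lemmas~\ref{le:I} and~\ref{le:J} at the end anyway, the preceding direct analysis of $\mathcal{K}_{\mathrm{high}}$ is redundant.

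There is also a genuine gap in your treatment of the remainder. You claim that the $O\bigl((\log|k|)^{-2}\bigr)$ correction to $(\gamma+\log(\mathrm{i} k))^{-1}$ is ``handled by the integration-by-parts estimate above uniformly down to $x=0$'', but this is not true: one integration by parts on that remainder still produces $|x|^{-1}$ times an absolutely convergent integral, hence an $O(|x|^{-1})$ bound, which is not locally integrable at $0$. The remainder needs the same second integration by parts and domain-splitting argument as the leading term; at that point you are effectively reproving Lemmas~\ref{le:I} and~\ref{le:J}. The cleanest fix is to drop the asymptotic extraction entirely and follow the paper's route: decompose into the cosine and sine integrals above and cite the two technical lemmas.
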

\begin{proof}
  We divide the $k$-integral in \eqref{eq:kappa} into two parts, obtaining
  \begin{align*}
    \mathcal{K}(x)
    & = 4\pi \int_0^\infty \mleft( \frac{\cos(k x) + \mathrm{i} \sin(kx)}{\gamma + \log k + \mathrm{i}\pi/2} + \frac{\cos(k x) - \mathrm{i} \sin(kx)}{\gamma + \log k - \mathrm{i}\pi/2} \mright) \dif k \\
    & = 8\pi \int_0^\infty \cos(kx) \frac{(\gamma + \log k)}{(\gamma + \log k)^2 + \pi^2/4} \dif k + 4\pi^2 \int_0^\infty \sin(kx) \frac{1}{(\gamma + \log k)^2 + \pi^2/4} \dif k.
  \end{align*}
  We thus observe that the local integrability of $\mathcal{K}$ is equivalent to the local integrability of the functions $I$ and $J$ defined in \eqref{eq:I1} and \eqref{eq:J1}, respectively, after reabsorbing the constant $\gamma$ through the rescalings $k \mapsto k \mathrm{e}^\gamma$ and $x \mapsto x \mathrm{e}^{-\gamma}$.
  The statement then follows from Lemmas \ref{le:I} and \ref{le:J}, where the local integrability and further properties of $I$ and $J$ are established.
\end{proof}

\begin{lemma}
  \label{le:I}
  The function
  \begin{equation}
    \label{eq:I1}
    I(x) \doteq \int_0^\infty \cos(kx) \frac{\log k}{\log^2 k + c} \dif k,
    \quad c > 0,
  \end{equation}
  is a continuous functions for $x \neq 0$, it is locally integrable near $0$ and it decays as $|x|^{-1}$ for large $|x|$.
\end{lemma}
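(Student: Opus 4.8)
The plan is to reduce the entire statement to a single integration by parts and then to one delicate oscillatory estimate. Write $g(k)\doteq\frac{\log k}{\log^2 k+c}$, so that $I(x)=\int_0^\infty\cos(kx)\,g(k)\dif k$. Since $c>0$, the function $g$ is smooth on $(0,\infty)$, bounded, vanishes at $k=0$, at $k=1$ and as $k\to\infty$, and $g'(k)=\frac{c-\log^2 k}{k(\log^2 k+c)^2}$ behaves like $-\frac1{k\log^2 k}$ both as $k\to 0^+$ and as $k\to\infty$; in particular $g'\in L^1(0,\infty)$. First I would check that for $x\neq 0$ the improper integral defining $I(x)$ converges (Dirichlet's test, using that $g$ is eventually monotone decreasing to $0$) and that an integration by parts over $[0,\infty)$ is legitimate, giving
\[
  I(x)\;=\;-\frac1x\int_0^\infty g'(k)\,\sin(kx)\dif k,
\]
with the integral on the right now absolutely convergent. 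This identity immediately yields two of the three assertions: $|I(x)|\le\|g'\|_{L^1}/|x|$ is the claimed $|x|^{-1}$ decay, and continuity of $x\mapsto\int_0^\infty g'(k)\sin(kx)\dif k$ (dominated convergence) gives continuity of $I$ for $x\neq 0$; note also that $I$ is even.

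The remaining and only substantial point is local integrability near $x=0$. Here the subtlety is that the naive scaling heuristic $I(x)\approx\int_0^{1/|x|}g$ suggests $I(x)=O\big(\tfrac1{|x||\log x|}\big)$, which is \emph{not} integrable at $0$; one must uncover a cancellation. Concretely I would prove $\int_0^\infty g'(k)\sin(kx)\dif k=O\big((\log x)^{-2}\big)$ as $x\to 0^+$, which forces $I(x)=O\big(\tfrac1{x\log^2 x}\big)$, and this is integrable near $0$ since $\int_0^{1/2}\frac{\dif x}{x\log^2 x}=(\log 2)^{-1}<\infty$. To obtain the sharper bound, split $\int_0^\infty g'(k)\sin(kx)\dif k=\int_0^2+\int_2^\infty$. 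On $(0,2]$ the function $kg'(k)$ is bounded, so $|\sin(kx)|\le k|x|$ makes that piece $O(|x|)$. On $[2,\infty)$ write $g'(k)=-\frac1{k\log^2 k}+e(k)$, where an elementary computation gives $e(k)=\frac{3c\log^2 k+c^2}{k\log^2 k\,(\log^2 k+c)^2}=O\big(\tfrac1{k\log^4 k}\big)\in L^1$; an elementary split of the $e$-integral (at $k=|x|^{-1}$, using $|\sin(kx)|\le\min\{1,k|x|\}$) shows $\int_2^\infty e(k)\sin(kx)\dif k=O\big((\log x)^{-3}\big)$.

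The crux is therefore $\int_2^\infty\frac{\sin(kx)}{k\log^2 k}\dif k$. The substitution $t=kx$ turns it (for $x>0$) into $\int_{2x}^\infty\frac{\sin t}{t\,(\log t+L)^2}\dif t$ with $L\doteq-\log x\to+\infty$, and I would split this at $t=1$. On $[1,\infty)$ one integration by parts against $\sin t=-(\cos t)'$ leaves a boundary term $\cos 1/L^2$ and an integrand that is $O\big(t^{-2}L^{-2}\big)$, so this part is $O(L^{-2})$. On $[2x,1]$ the bound $|\sin t|\le t$ reduces matters to $\int_{2x}^1(\log t+L)^{-2}\dif t=e^{-L}\int_{\log 2}^{L}\frac{e^s}{s^2}\dif s$, and splitting the $s$-integral at $s=L/2$ (lower part $O(e^{-L/2})$, upper part $\le 4L^{-2}e^L$) shows this is $O(L^{-2})$ as well. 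Collecting everything gives $\int_0^\infty g'(k)\sin(kx)\dif k=O(|x|)+O(L^{-2})=O\big((\log x)^{-2}\big)$, hence the Lemma. I expect this last $O(L^{-2})$ estimate to be the main obstacle: it is where the cancellation lives, and it seems to require the explicit change of variables $t=kx$ together with the endpoint asymptotics of an exponential-integral-type integral, rather than any soft argument.
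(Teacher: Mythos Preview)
Your proof is correct. The first integration by parts, giving $I(x)=-\tfrac{1}{x}\int_0^\infty g'(k)\sin(kx)\dif k$ with $g'\in L^1(0,\infty)$, coincides with the paper's argument for continuity away from $0$ and the $|x|^{-1}$ decay. The two arguments diverge only in the treatment of local integrability near $x=0$.

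The paper integrates by parts a \emph{second} time to reach an expression of the form $I(x)=\tfrac{1}{x^2}\int_0^\infty(1-\cos(kx))\,h(k)\dif k$, then rescales $k\mapsto k/x$ and splits the resulting $k$-integral at $k=\sqrt{x}$, bounding the two pieces by $O(x^{-1/2})$ and $O\bigl(\tfrac{1}{x\log^2 x}\bigr)$ respectively. You instead stay at the level of the first integration by parts and attack the oscillatory integral $\int_0^\infty g'(k)\sin(kx)\dif k$ directly: you isolate the leading asymptotic $-1/(k\log^2 k)$ of $g'$ on $[2,\infty)$, dispose of the $O(1/(k\log^4 k))$ remainder by an elementary split at $k=1/|x|$, and then estimate the model integral $\int_2^\infty\frac{\sin(kx)}{k\log^2 k}\dif k$ via the substitution $t=kx$ and a split at $t=1$, arriving at the uniform bound $I(x)=O\bigl(\tfrac{1}{x\log^2 x}\bigr)$. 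The paper's route is more mechanical---just keep integrating by parts until the rescaled integrand falls in $L^1$---and never needs the explicit asymptotic decomposition of $g'$; your route is a hands-on stationary-phase-style estimate that isolates the cancellation explicitly and yields the sharp envelope in one shot, at the price of the exponential-integral computation you flagged at the end. Both are valid and land on the same integrable bound.
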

\begin{proof}
  As the following proof can be easily generalized to arbitrary $c > 0$, we study only the case $c = 1$.
  To prove continuity outside $0$ and the decay for large values of $x$, we integrate by parts obtaining
  \begin{equation}
    \label{eq:I2}
    I(x) = \int_0^\infty \frac{\sin(kx)}{x} \frac{\log^2 k - 1}{k (\log^2 k + 1)^2} \dif k.
  \end{equation}
  Hence we have that
  \[
    |I(x)| \leq \frac{1}{|x|} \int_0^\infty \frac{|\log^2 k - 1|}{k (\log^2 k + 1)^2} \dif k = \frac{1}{|x|} \int_{-\infty}^\infty \frac{|l^2-1|}{(l^2+1)^2} \dif l = \frac{2}{|x|}
  \]
  and continuity can be proved by the dominated convergence theorem. 

  In order to prove the integrability near $x=0$, we integrate by parts another time in \eqref{eq:I2} to obtain
  \[
    I(x) = \int_0^\infty \frac{1-\cos(kx)}{x^2} \frac{1}{k^2} \mleft( \frac{\log^2 k - 2\log k - 1}{(\log^2 k + 1)^2} + \frac{4 \log(k) (\log^2 k - 1)}{(\log^2 k + 1)^3} \mright) \dif k,
  \]
  and assume that $x \in (0, \varepsilon)$ for $\varepsilon$ sufficiently small (the case $x<0$ can be treated analogously).
  After changing the variable of integration ($k \mapsto k x$), we get
  \[
    I(x) = \frac{1}{x} \int_0^\infty \frac{1-\cos k}{k^2} \frac{f(l)}{l^2+1} \dif k,
  \]
  where $l \doteq \log k - \log x$ and \[ f(l) \doteq \frac{l^2-2l-1}{l^2+1} + \frac{4l(l^2-1)}{(l^2+1)^2} \] is a continuous bounded function.

  We split the integral into two parts $I(x) = I_1(x)+I_2(x)$, where
  \[
    I_1(x) \doteq \frac{1}{x} \int_0^{\sqrt{x}} \frac{1-\cos k}{k^2} \frac{f(l)}{l^2+1} \dif k,
    \quad
    I_2(x) \doteq \frac{1}{x} \int_{\sqrt{x}}^\infty \frac{1-\cos k}{k^2} \frac{f(l)}{l^2+1} \dif k,
  \]
  and discuss local integrability near $0$ separately for $I_1$ and $I_2$.
  Since $(l^2+1)^{-1} \leq 1$, $|1-\cos k| \leq \frac{k^2}{2}$ and $|f(l)| \lesssim 1$,
  \[
    |I_1(x)| \lesssim \frac{1}{x} \int_0^{\sqrt{x}} \frac{1-\cos k}{k^2}\dif k \lesssim \frac{1}{\sqrt x},
  \]
  which is integrable in $(0, \varepsilon)$.
  At the same time we have that
  \[
    |I_2(x)| \lesssim \frac{1}{x \log^2 x} \int_{\sqrt{x}}^\infty \frac{1-\cos k}{k^2} \dif k \leq \frac{1}{x \log^2 x} \int_0^\infty \frac{1-\cos k}{k^2} \dif k \lesssim \frac{1}{x \log^2 x},
  \]
  where we used $|f(l)| \lesssim 1$ and
  \[
    (\log k - \log x)^2 + 1 > (\log k - \log x)^2 \geq \frac{1}{4} \log^2 x
  \]
  because $\log x < 0$ and, on the domain of $k$-integration, $\log k \geq \frac{1}{2} \log x$.
  Consequently also $I_2$ is integrable in $(0, \varepsilon)$ for small $\varepsilon$.
\end{proof}

\begin{lemma}
  \label{le:J}
  The function
  \begin{equation}
    \label{eq:J1}
    J(x) \doteq \int_0^\infty \sin(kx) \frac{1}{\log^2 k + c} \dif k,
    \quad c > 0,
  \end{equation}
  is a continuous functions for $x \neq 0$, it is locally integrable near $0$ and it decays as $|x|^{-1}$ for large $|x|$.
\end{lemma}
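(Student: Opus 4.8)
The plan is to follow the proof of Lemma~\ref{le:I} almost verbatim, the only change being that the amplitude $(\log^2 k + c)^{-1}$ of $J$ decays one power of $\log$ faster than the amplitude of $I$, so that a single integration by parts does the work that required two for $I$; as before $c>0$ plays no essential role, so I argue for general $c>0$. Set $g_0(k) \doteq (\log^2 k + c)^{-1}$, a $C^1$ function on $(0,\infty)$ vanishing as $k\to 0^+$ and as $k\to\infty$, with $g_0'(k) = -2\log k\,[k(\log^2 k+c)^2]^{-1}$. The substitution $k=\mathrm e^u$ gives $\int_0^\infty |g_0'(k)|\dif k = \int_{-\infty}^\infty 2|u|(u^2+c)^{-2}\dif u = 2/c$, so $g_0'\in L^1(0,\infty)$. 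Since the antiderivative $\frac{1-\cos(kx)}{x}$ of $\sin(kx)$ vanishes at $k=0$ and $g_0$ vanishes at both endpoints, one integration by parts (the improper integral defining $J$ therefore converging for $x\neq0$) yields
\[
  J(x) = -\frac1x \int_0^\infty (1 - \cos(kx))\, g_0'(k) \dif k,
\]
now an absolutely convergent integral.

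Continuity of $J$ on $\mathbb R\setminus\{0\}$ then follows from the dominated convergence theorem applied to this integral, with $x$-independent dominating function $2|g_0'|\in L^1(0,\infty)$. The decay at large $|x|$ is immediate: $|J(x)|\le |x|^{-1}\, 2\|g_0'\|_{L^1(0,\infty)} = 4/(c|x|)$. As in Lemma~\ref{le:I}, the only delicate point is the local integrability near $x=0$.

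For that, take $0<x<\varepsilon<1$ (the case $x<0$ follows since $J$ is odd) and substitute $k\mapsto k/x$ to obtain
\[
  J(x) = \frac1x \int_0^\infty (1-\cos k)\, \frac{2(\log k + |\log x|)}{k\,((\log k + |\log x|)^2 + c)^2} \dif k .
\]
Split the integral over $(0,\sqrt x)$, $(\sqrt x,1)$ and $(1,\infty)$. On $(0,\sqrt x)$ use $1-\cos k\le k^2/2$ and the uniform bound $2|v|(v^2+c)^{-2}\le M(c)$ on the kernel: the integral over this range is $\le M(c)\,x/4$, so its $1/x$-multiple is bounded. On $(\sqrt x,1)$ one has $\log k + |\log x|\ge\tfrac12|\log x|>0$, so the kernel is $\le 2(\log k+|\log x|)^{-3}\le 16\,|\log x|^{-3}$; combined with $1-\cos k\le k^2/2$ this gives a contribution $\lesssim (x\,|\log x|^3)^{-1}$. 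On $(1,\infty)$ use $|1-\cos k|\le2$ and the kernel bound $\le 2(\log k+|\log x|)^{-3}$; the substitution $u=\log k$ reduces $\int_1^\infty \dif k\,[k(\log k+|\log x|)^3]^{-1}$ to $\tfrac12|\log x|^{-2}$, yielding a contribution $\lesssim (x\log^2 x)^{-1}$. Altogether $|J(x)|\lesssim 1 + (x\,|\log x|^3)^{-1} + (x\log^2 x)^{-1}$ on $(0,\varepsilon)$, and each of these terms is integrable there.

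The main obstacle is precisely the estimate on the large-$k$ region: bounding the kernel there merely by its size at $k=\sqrt x$ would give only $|J(x)|\lesssim (x\,|\log x|)^{-1}$, which is not integrable near $0$; one must keep the genuine $(\log k+|\log x|)^{-3}$ decay of the kernel. This is exactly the point where the faster decay of the amplitude of $J$ — as compared with $I$, whose amplitude forced the second integration by parts in Lemma~\ref{le:I} — is used, and it is what makes the present proof one integration by parts shorter.
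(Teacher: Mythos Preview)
Your proof is correct and follows essentially the same approach as the paper: a single integration by parts to exhibit the $|x|^{-1}$ decay and to set up dominated convergence for continuity, followed by a splitting argument for local integrability near $0$. The paper works directly in the original variable and splits at $k=1/\sqrt{x}$ into two pieces, computing the far piece exactly via $l=\log k$, whereas you first rescale $k\mapsto k/x$ (mirroring the paper's treatment of Lemma~\ref{le:I}) and then split into three pieces; these are cosmetic differences, and your commentary on why one fewer integration by parts is needed here than in Lemma~\ref{le:I} is a correct and helpful observation.
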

\begin{proof}
  We proceed similarly as in the previous proof, studying only the case $c = 1$.
  We integrate by parts to obtain
  \[
    J(x) = \int_0^\infty \frac{1-\cos(kx)}{x} \frac{2 \log k}{k (\log^2 k + 1)^2} \dif k
  \]
  and thus
  \[
    |J(x)| \leq \frac{4}{|x|} \int_0^\infty \frac{|\log k|}{k (\log^2 k + 1)^2} \dif k = \frac{4}{|x|} \int_{-\infty}^\infty \frac{|l|}{(l^2 + 1)^2} \dif l = \frac{4}{|x|}.
  \]
  Continuity of $J(x)$ for $x \neq 0$ can then be proved by the dominated convergence theorem.

  We are left with the proof of integrability near $x = 0$.
  For this purpose we assume that $x \in (0, \varepsilon)$ for $\varepsilon$ sufficiently small (the case $x < 0$ can be treated analogously).
  Dividing the domain of integration of $J(x)$ into $(0,1/\sqrt{x})$ and $(1/\sqrt{x},\infty)$, we obtain $J(x) = J_1(x) + J_2(x)$ with
  \begin{align*}
    J_1(x) & = \int_0^{1/\sqrt{x}} \frac{1-\cos(kx)}{x} \frac{2 \log k}{k (\log^2 k + 1)^2} \dif k, \\
    J_2(x) & = \int_{1/\sqrt{x}}^\infty \frac{1-\cos(kx)}{x} \frac{2 \log k}{k (\log^2 k + 1)^2} \dif k.
  \end{align*}
  Noting that $|1-\cos(kx)| \leq \frac12 (kx)^2$ and $\log(k) (\log^2 k + 1)^{-2} \lesssim 1$, we estimate $J_1$ as
  \[
    |J_1(x)| \lesssim \int_0^{1/\sqrt{x}} \frac{1-\cos(kx)}{kx} \dif k \leq \int_0^{1/\sqrt{x}} \frac{kx}{2} \dif k = \frac14.
  \]
  At the same time we find for $J_2$ that
  \[
    |J_2(x)| \leq \frac{4}{x} \int_{1/\sqrt{x}}^\infty \frac{|\log k|}{k (\log^2 k + 1)^2} \dif k = \frac{4}{x} \int_{-\frac12 \log x}^\infty \frac{|l|}{(l^2 + 1)^2} \dif l = \frac{8}{4x + x \log^2 x}.
  \]
  Hence, we can conclude that $J$ is integrable in $(0, \varepsilon)$ for small $\varepsilon$.
\end{proof}

\subsection{Existence and uniqueness of local weak solutions}\label{se:sol}

In this section we shall present the main result of this paper, namely the existence and uniqueness of solutions of the semiclassical Einstein equation \eqref{SCE-trace} for a fixed arbitrary coupling parameter $\xi\neq 1/6$.
We shall use all the results previously obtained in order to translate the original semiclassical equation in the form given in \eqref{eq:time-derivative-phi2} into an of the form \eqref{fixed-point-equation}. We shall use the continuity property of the inverse operator $\operatorT_0^{-1}$ given in \eqref{eq:T-1} and proved in Proposition \ref{prop:T-inverse} in order to define a suitable contraction map.

We preliminarily observe that in an interval of time $[\tau_0,\tau]$ it is possible to control $a,{a'}$ and $V$ by means of $X={a''}/a$ and thus by means of $X'$.
See Lemma \ref{le:a[x]} given in the appendix for further details.
In particular, we have that \[ \|V\|_\infty \leq C \left(1+\|X-X_0\|_\infty \right), \]  where $X_0=X(\tau_0)$ and $C$ is a suitable constant. The first step is to rewrite \eqref{SCE-V} in terms of the dynamic variable $X'$, in order to obtain the explicit expression of the map $\cC$.
\begin{lemma}
  \label{lem:cT}
  Given the initial data $(a_0,{a'}_0,X_0,{X}'_0)$, chosen in such a way that $a_0>0$ and $\Omega_k^{2}(\tau_0)$ in \eqref{mode-eq} is strictly positive, and a state $\omega$ which is regular and compatible with this initial conditions, the semiclassical equation \eqref{SCE-V} can be written in the form of a fixed-point equation on $C[\tau_0,\tau_1]$
  \begin{equation}
    \label{SCE-fixed-point}
    X'=\mathcal{C}[X'],
  \end{equation}
  where
  \begin{align*}
    \mathcal{C}[X']  = & X'_0 -  \frac{2m^2}{(6\xi-1)}(a[X]a'[X]- a_0a_0') \\  & - \frac{1}{(6\xi-1)}\operatorT^{-1}_{\tau_0}\left[Q^d_0[X] + Q_f^d[X] + Q^d_s[X]   -  \left( 6 c_\xi X' + \partial_\tau \left(a[X]^2 {F}(a[X],R[X]) \right)\right)\right],
  \end{align*}
  with $X[X'](\tau) = X_0 + \int^\tau_{\tau_0} X'(\eta) \dif \eta$. Each $X' \in  C[\tau_0,\tau_1]$ determines a spacetime $(\cM,g[X])$ where $\cM = [\tau_0,\tau_1] \times \mathds{R}^3$ and where $g[X]$ is the FLRW metric enjoying the initial conditions and constructed out of the scale factor $a[X] (\tau)$.
\end{lemma}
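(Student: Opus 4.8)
The plan is to obtain the fixed-point form \eqref{SCE-fixed-point} from \eqref{SCE-V} by isolating the non-local operator $\operatorT_{\tau_0}[V']$ and inverting it by means of Proposition \ref{prop:T-inverse}. First, since $a^2R=6X$ by \eqref{X}, the right-hand side of \eqref{SCE-V} equals $6c_\xi X'+\partial_\tau\bigl(a^2 F(a,R)\bigr)$, so that \eqref{SCE-V} becomes $\operatorT_{\tau_0}[V']=-G$ with
\[
  G \defeq Q^d_0[X]+Q_f^d[X]+Q^d_s[X]-\Bigl(6c_\xi X'+\partial_\tau\bigl(a[X]^2 F(a[X],R[X])\bigr)\Bigr).
\]
Differentiating the potential \eqref{potential-2} gives $V'=2m^2 a a'+(6\xi-1)X'$, hence $V'(\tau_0)=2m^2 a_0a'_0+(6\xi-1)X'_0$. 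Applying the inversion formula \eqref{eq:inversion-formula} — in the translated version valid for $\operatorT_{\tau_0},\operatorT_{\tau_0}^{-1}$ — to $h=\operatorT_{\tau_0}[V']=-G$ yields $V'=V'(\tau_0)-\operatorT_{\tau_0}^{-1}[G]$; substituting $V'=2m^2aa'+(6\xi-1)X'$ and solving for $X'$ produces precisely
\[
  X'=X'_0-\frac{2m^2}{6\xi-1}\bigl(a[X]a'[X]-a_0a'_0\bigr)-\frac{1}{6\xi-1}\operatorT_{\tau_0}^{-1}[G],
\]
which is the claimed formula for $\mathcal{C}[X']$.

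Second, I would verify that $\mathcal{C}$ is a well-defined self-map of $C[\tau_0,\tau_1]$ for $\tau_1-\tau_0$ small. Given $X'\in C[\tau_0,\tau_1]$, set $X(\tau)=X_0+\int_{\tau_0}^\tau X'(\eta)\dif\eta$; the Cauchy problem $a''=Xa$, $a(\tau_0)=a_0>0$, $a'(\tau_0)=a'_0$ has a unique solution $a=a[X]\in C^3[\tau_0,\tau_1]$ which stays positive for $\tau_1$ close to $\tau_0$ (with the quantitative control of Lemma \ref{le:a[x]}), so that $R[X]=6X/a^2$, the frequency $\Omega_k^2$ in \eqref{mode-eq} (strictly positive on a short interval by continuity from its value at $\tau_0$), and $V[X]$ from \eqref{potential-2} are all determined, with $V[X]\in C^1$ and $V[X](\tau_0)=0$. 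Then $Q^d_0[X]$ is a continuous function of $a$ and its derivatives up to the third order by \eqref{eq:W0prime}; $Q_f^d[X]$ and $Q^d_s[X]$ are continuous by Propositions \ref{prop:decomposition} and \ref{prop:FV}; $F(a[X],R[X])$ together with $\partial_\tau\bigl(a^2F\bigr)$ are continuous by Proposition \ref{prop:Pc} and Theorem \ref{theo:sol-SCE-wave}; and $X'$ itself is continuous. Hence $G\in C[\tau_0,\tau_1]$, and by the bounded extension of $\operatorT_{\tau_0}^{-1}$ to $C[\tau_0,\tau_1]$ established in Proposition \ref{prop:T-inverse} we get $\operatorT_{\tau_0}^{-1}[G]\in C[\tau_0,\tau_1]$, so $\mathcal{C}[X']\in C[\tau_0,\tau_1]$. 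The last assertion of the statement is part of this same construction: the scale factor $a[X]$ together with the induced FLRW metric \eqref{flat-FLRW} defines the spacetime $(\cM,g[X])$ with $\cM=[\tau_0,\tau_1]\times\mathbb{R}^3$ enjoying the prescribed initial data.

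I expect the only genuinely delicate point to be the passage $\operatorT_{\tau_0}[V']=-G\Rightarrow V'=V'(\tau_0)-\operatorT_{\tau_0}^{-1}[G]$, which requires $V'$ to lie in the domain on which Proposition \ref{prop:T-inverse} inverts $\operatorT_{\tau_0}$: for a strong solution, with $a\in C^4$ as noted in Remark \ref{rem:scale-factor}, this is automatic, whereas for the weak solutions subsequently produced by the contraction argument the identity $X'=\mathcal{C}[X']$ is adopted as the defining equation, and this is internally consistent on $C[\tau_0,\tau_1]$ precisely because $\operatorT_{\tau_0}^{-1}$ sends continuous functions to continuous functions. All the remaining work is the bookkeeping of which derivatives of $a$ enter each term, already carried out in Sections \ref{se:states}--\ref{se:SEE-integration}.
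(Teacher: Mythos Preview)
Your proof is correct and follows essentially the same route as the paper: rewrite \eqref{SCE-V} as $\operatorT_{\tau_0}[V']=h$, apply the inversion formula of Proposition \ref{prop:T-inverse} to obtain $V'=V'(\tau_0)+\operatorT_{\tau_0}^{-1}[h]$, and then use $V'=2m^2aa'+(6\xi-1)X'$ to solve for $X'$. Your additional verification that $\mathcal{C}$ is a well-defined self-map of $C[\tau_0,\tau_1]$, and your remark about the domain of the inversion for strong versus weak solutions, go somewhat beyond the paper's brief proof but are accurate and to the point.
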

\begin{proof}
  Each $X'\in C[\tau_0,\tau_1]$ determines a FLRW spacetime in the following way.
  First of all $X$ is obtained from $X'$ integrating in time and fixing $X(\tau_0)=X_0$. Thus, $X$ is a functional of $X'$. Then, $a[X]$ is obtained from $X$ as the unique solution of $a''-Xa = 0$ which satisfies the initial conditions $a(\tau_0)=a_0$ and $a'(\tau_0)=a'_0$.
  Finally, knowing $X$ and $a$ we observe that $V$ can be obtained from \eqref{potential-2}.
  It is thus a functional of $X$ and hence of $X'$.

  Equations \eqref{eq:time-derivative-phi2} and \eqref{SCE-V} have the form \[ \operatorT_{\tau_0}[V'] = h, \] where \[ h = -Q^d_0[X] - Q_f^d[X] - Q^d_s[X]   +  \left( 6 c_\xi X' + \partial_\tau \left(a[X]^2 {F}(a[X],R[X]) \right)\right).
  \]
  We invert this equation adapting the analysis given in Proposition \ref{prop:T-inverse} to obtain
  \[
    V'=V'_0+\operatorT_{\tau_0}^{-1}[h].
  \]
  The operator $\operatorT_{\tau_0}^{-1}$ equals $\operatorT_0^{-1}$ given in \eqref{eq:T-1} up to a translation, furthermore, the continuity satisfied by $\operatorT_{\tau_0}^{-1}$ coincides with the continuity of $\operatorT_0^{-1}$ discussed in equation \eqref{eq:T-continuity} of Proposition \ref{prop:T-inverse}. Finally, we rewrite it with respect to the variable $X = \frac{{a''}}{a}$.
\end{proof}

Once the semiclassical Einstein equation is given in this form, we can prove the following
\begin{proposition}
  \label{prop:contraction}
  Fix the initial conditions for $a$ in such a way that $a_0>0$ and $\Omega_k^2(\tau_0)$ given in \eqref{mode-eq} is positive. Consider a state $\omega$ which is sufficiently regular and compatible with the initial conditions for $a$. 
  Fix $\delta >0$ and let $\cB_\delta$ given in \eqref{eq:ball-delta} the closed ball in the Banach space $C[\tau_0,\tau_1]$  with finite $\tau_1 > \tau_0$, centred in $X_c'(\tau) \doteq X'_0$.
  For $\tau_1$ sufficiently small, the map $\mathcal{C}$ introduced in Lemma \ref{lem:cT} with $\xi\neq 1/6$ is a contraction map on $\cB_{\delta}$.
  Hence, there exists a unique fixed point of the equation $X'=\mathcal{C}[X']$, in $\mathcal{B}_\delta$, which represents a solution of the semiclassical Einstein equation.
\end{proposition}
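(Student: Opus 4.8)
The plan is to apply the Banach fixed point theorem to the map $\mathcal{C}$ of Lemma \ref{lem:cT} on the set $\mathcal{B}_\delta$ of \eqref{eq:ball-delta}, which is a closed ball of the Banach space $C[\tau_0,\tau_1]$ and hence a complete metric space. Two things must be checked for $\tau_1-\tau_0$ sufficiently small: (i) $\mathcal{C}$ maps $\mathcal{B}_\delta$ into itself, and (ii) $\mathcal{C}$ is Lipschitz on $\mathcal{B}_\delta$ with constant $L<1$. Before either, I would note that $\mathcal{C}[X']$ is indeed a continuous function of $\tau$: by Lemma \ref{le:a[x]} each $X'\in\mathcal{B}_\delta$ yields $a[X]\in C^3$ with $a[X]>0$ on $[\tau_0,\tau_1]$ (shrinking $\tau_1$ if necessary), the functionals $Q^d_0,Q^d_f,Q^d_s$ and $\partial_\tau(a^2F(a,R))$ are continuous by \eqref{eq:W0prime}, Propositions \ref{prop:decomposition}, \ref{prop:FV} and Theorem \ref{theo:sol-SCE-wave}, and $\operatorT^{-1}_{\tau_0}$ maps $C[\tau_0,\tau_1]$ to itself by Proposition \ref{prop:T-inverse}. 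Since $\Omega_k^2(\tau_0)>0$ and $\Omega_k^2$ depends continuously on $a$ and $a''$, shrinking $\tau_1$ also keeps $\Omega_k^2>0$ on all of $[\tau_0,\tau_1]$ uniformly over $\mathcal{B}_\delta$, so the quantitative estimates of Sections \ref{se:states}--\ref{se:SEE-integration} apply with constants uniform in $X'\in\mathcal{B}_\delta$.

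The heart of the argument is that every building block of $\mathcal{C}[X']$ is Lipschitz in $X'$ with a Lipschitz constant that tends to $0$ as $\tau_1\to\tau_0$. Indeed, $\|V\|_\infty$ and $\|V'\|_\infty$ are bounded by $C(1+\delta)$ via the potential representation \eqref{potential-2} and Lemma \ref{le:a[x]}; the term $-\tfrac{2m^2}{6\xi-1}(a a'-a_0a_0')$ vanishes at $\tau_0$ and has Gateaux derivative $O(\tau_1-\tau_0)$ by Lemma \ref{le:a[x]}; the Gateaux derivatives of $Q^d_0[X]$, $Q^d_f[X]$, $Q^d_s[X]$ and of $\partial_\tau(a^2F(a,R))$ are bounded, uniformly on $\mathcal{B}_\delta$, by constants of the form $C'_\delta$ that depend continuously on $\tau_1$ and vanish as $\tau_1\to\tau_0$, which is exactly what Proposition \ref{prop:decomposition}, Proposition \ref{prop:FV}, equation \eqref{eq:W0prime} and the bound \eqref{C-ineq-deltaF} of Theorem \ref{theo:sol-SCE-wave} provide once composed with the $X'$-dependence of $a, a', V$ and $R=6X/a^2$ through the chain rule of Remark \ref{re:functionalderivative}; and the linear-in-$X'$ term $6c_\xi X'$ contributes the operator $-\tfrac{6c_\xi}{6\xi-1}\operatorT^{-1}_{\tau_0}$, whose operator norm on $C[\tau_0,\tau_1]$ is $\tfrac{6|c_\xi|}{|6\xi-1|}C_\infty(\tau_1-\tau_0)$ by \eqref{eq:T-continuity}, again vanishing as $\tau_1\to\tau_0$. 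Finally the outer application of $\operatorT^{-1}_{\tau_0}$ in the definition of $\mathcal{C}$ only multiplies all these Lipschitz constants by the bounded factor $C_\infty(\tau_1-\tau_0)$ of \eqref{eq:T-continuity}.

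Collecting these estimates one obtains
\[
  \|\mathcal{C}[X_1']-\mathcal{C}[X_2']\|_\infty \leq L(\tau_1-\tau_0)\,\|X_1'-X_2'\|_\infty,
  \qquad L(\tau_1-\tau_0)\xrightarrow[\tau_1\to\tau_0]{}0,
\]
so for $\tau_1$ close enough to $\tau_0$ we have $L(\tau_1-\tau_0)\leq\tfrac12$, which is (ii). For (i), write $X_c'\equiv X_0'$ for the centre of $\mathcal{B}_\delta$; then $\|\mathcal{C}[X']-X_0'\|_\infty \leq \|\mathcal{C}[X']-\mathcal{C}[X_c']\|_\infty + \|\mathcal{C}[X_c']-X_0'\|_\infty \leq \tfrac12\delta + \|\mathcal{C}[X_c']-X_0'\|_\infty$, and since $\mathcal{C}[X_c']-X_0'$ is continuous and vanishes at $\tau_0$ (the $\operatorT^{-1}_{\tau_0}$-term is an integral starting at $\tau_0$ and $a a'-a_0a_0'$ vanishes there), it can be made $\leq\tfrac12\delta$ by shrinking $\tau_1$; hence $\mathcal{C}(\mathcal{B}_\delta)\subseteq\mathcal{B}_\delta$. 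The Banach fixed point theorem then gives a unique $X'\in\mathcal{B}_\delta$ with $X'=\mathcal{C}[X']$, and reconstructing $a[X]$ from it yields the asserted solution of \eqref{SCE-trace}.

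The step I expect to be the main obstacle is the control of $\partial_\tau\bigl(a^2F(a,R)\bigr)$ as a Lipschitz functional of $X'$ with small constant: $F$ is only defined implicitly as the solution of the Cauchy problem \eqref{KG-equation-c}, so its dependence on $X'$ runs through the integral representation of Proposition \ref{prop:Pc}, through the initial data \eqref{eq:init-cond-F} (which involve $\expvalom{\phi^2}$ and $\partial_\tau\expvalom{\phi^2}$ at $\tau_0$), and through the source $S$ and the curvature $R=6X/a^2$; one must check that differentiating in $\tau$ does not reintroduce a fourth derivative of $a$ --- it does not, because the only non-local $\square\expvalom{\phi^2}$-type contribution has been isolated into $\operatorT_{\tau_0}$, inverted, and moved to the other side, while $S$ contains $a$ only up to second order --- and that each resulting Lipschitz constant genuinely carries a positive power of $\tau_1-\tau_0$, or at worst is bounded and then killed by the factor $C_\infty(\tau_1-\tau_0)$ from the outer inversion. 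This bookkeeping, rather than any single hard estimate, is what makes the contraction property work.
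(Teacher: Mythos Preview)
Your proposal is correct and follows essentially the same route as the paper: both arguments exploit that the $aa'-a_0a_0'$ term carries an explicit factor of $\tau_1-\tau_0$, while the $\operatorT^{-1}_{\tau_0}[\mathcal{F}]$ term is small because the operator norm $C_\infty(\tau_1-\tau_0)$ of $\operatorT^{-1}_{\tau_0}$ vanishes as $\tau_1\to\tau_0$, with $\mathcal{F}$ merely bounded (not small) in Lipschitz norm on $\mathcal{B}_\delta$. One minor overstatement: the Gateaux derivatives of $Q^d_0$ and of the $6c_\xi X'$ piece do \emph{not} vanish as $\tau_1\to\tau_0$ (e.g.\ $Q^d_0$ contains $\alpha_2\partial_\tau(a^2R)=6\alpha_2 X'$, whose variation is $6\alpha_2\,\delta X'$), so the contraction really does rely on the outer factor $C_\infty(\tau_1-\tau_0)$ --- exactly the fallback you already flag in your last paragraph, and exactly how the paper handles it. Your triangle-inequality argument for the self-map property (bounding $\|\mathcal{C}[X']-X_0'\|$ via $\|\mathcal{C}[X']-\mathcal{C}[X_c']\|+\|\mathcal{C}[X_c']-X_0'\|$) is a clean variant of the paper's direct estimate \eqref{eq:con1}.
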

\begin{proof}
  First of all, we observe that for every $X'\in \mathcal{B}_\delta$ we assign an $X(\tau)= X_0 +\int_{\tau_0}^\tau X'(\eta)\dif \eta$ and consequently a scale factor $a[X]$. $\Omega_k^2$ given in \eqref{mode-eq} is continuous in time, since it is positive at $\tau_0$, it stays positive in a short interval of time. Furthermore, for $X'\in \mathcal{B}_\delta$, $|a^2R(\tau)| \leq (\tau-\tau_0)\|X'\|_\infty$, hence if $\tau_1$ is sufficiently small $\Omega_k^2$ is positive in $[\tau_0,\tau_1]$ uniformly for $X'\in\mathcal{B}_\delta$. 
  
  The strategy of the proof is the following. We observe that $\mathcal{C}$ is a linear combination of compositions of continuous functions or functionals of $a$, $V$ and $X$. $a$ and $V$ are Gateaux differentiable with respect to $X$ at $X(\tau)= X_0 +\int_{\tau_0}^\tau X'(\eta)\dif \eta$, furthermore their derivative satisfy the inequalities derived in Lemma \ref{le:a[x]}, thus all these Gateaux differential are continuous.
  Hence, the proof of this proposition follows from the continuity of $\operatorT_{\tau_0}^{-1}$ given in Proposition
  \ref{prop:T-inverse} and observing that if $(\tau_1-\tau_0)$ tends to $0$ then the constant $C_\infty(\tau_1-\tau_0)$ given in \eqref{eq:T-continuity} tends to $0$.
  The thesis follows from the continuity of all the other operators, functionals or functions involved.

  To fix some details of the proof we proceed as follows. Consider the following constants $c_1= -2m^2/(6\xi-1)$ and $c_2= (6\xi-1)^{-1}$, the map $\mathcal{C}$ has the form
  \[
    X'= \mathcal{C}[X']= X_0' + c_1 (a-a_0)a' + c_1 a_0(a' -a_0') - c_2 \operatorT^{-1}_{\tau_0}\left[\mathcal{F}\right],
  \]
  where
  \begin{equation}\label{eq:def-F}
    \mathcal{F}= Q^d_0 + Q_f^d + Q^d_s -  \left( {6 c_\xi X'} + \partial_\tau \left(a^2 {F}(a,R) \right)\right)
  \end{equation}
  is a linear combination of continuous functionals of $X$. If $X \in \mathcal{B}_\delta$, we have that
  \begin{equation}
  \label{eq:con1-a}
    \begin{aligned}
      \|\mathcal{C}[X']-X_0' \|_\infty & \leq |c_1|  ( \|a-a_0\|_\infty \|a'\|_\infty + a_0 \|a'-a_0'\|_\infty) + |c_2| C_\infty(\tau_1-\tau_0) \| \mathcal{F}\|_\infty,
    \end{aligned}
  \end{equation}
  where we used the estimates given in Lemma \ref{le:a[x]} and in Proposition \ref{prop:T-inverse}. We now observe that for $X'\in \mathcal{B}_{\delta}$, it is possible to prove that $\mathcal{F}$ is bounded by $X'$. More precisely, using the results obtained above we can bound every component of $\mathcal{F}$ given in \eqref{eq:def-F}. We have actually established in Proposition \ref{prop:decomposition} that $Q^d_f$ depends continuously on $V'$ and in Proposition \ref{prop:FV} that $\|Q^d_s\|_\infty$ can be controlled by $\|V'\|_\infty$ for $V'$ in a suitable compact domain of $C[\tau_0,\tau_1]$ because $Q^d_s$ is Gateaux differentiable. At the same time $Q^d_0$ given in \eqref{eq:Qd0} is a function of $a$ and its derivative up to the third order and of $V$ and $V'$.   
  We also have that $V'$ depends continuously on $X'$ with respect to the uniform topology as can be seen from the Definition of $V$, see e.g. \eqref{potential-2}, and thanks to the results of Lemma \ref{le:a[x]}. 
  Furthermore, as established in Theorem \ref{theo:sol-SCE-wave}, $F$ is a solution of the first equation in \eqref{SCE-wave} and hence it can be controlled by $X'$ again together with its time derivative, similarly to the results established in Proposition \ref{prop:Pc}. Combining all these observations we have that if $\tau_1$ is chosen sufficiently small, it exists $C_\delta$ which permits to further bound the right hand side of \eqref{eq:con1-a}, hence we have that
  \begin{equation}
  \label{eq:con1}
      \|\mathcal{C}[X']-X_0' \|_\infty  \leq \left( (\tau_1-\tau_0) + |c_2| C_\infty(\tau_1-\tau_0)   \right) C_\delta.
  \end{equation}
  In particular we recall that the constant $C_\infty(\tau_1-\tau_0)$ depends continuously on the difference $\tau_1 - \tau_0$ and it vanishes for $\tau_1=\tau_0$.
  
  Furthermore, for $X_1,X_2\in \mathcal{B}_\delta$,
  \[
    \mathcal{C}[X'_2]-\mathcal{C}[X'_1] =
    c_1 (aa'[X_2]-aa'[X_1]) - c_2 \operatorT^{-1}_{\tau_0}\left[\mathcal{F}[X_2]-\mathcal{F}[X_1]\right]
  \]
  and
  \[
    \|\mathcal{C}[X'_2]-\mathcal{C}[X'_1]\|_\infty \leq
    |c_1| \| aa'[X_2]-aa'[X_1]\|_\infty + |c_2| C_\infty(\tau_2-\tau_0) \| \mathcal{F}[X_2]-\mathcal{F}[X_1]\|_\infty.
  \]
  Considering the convex linear combination of $X_1$ and $X_2$, $X_s = (1-s) X_1 + s X_2 = X_1 + s(\delta X)$ where $\delta X = X_2-X_1$, using the definition of directional derivative, see Remark \ref{re:functionalderivative}, we have that
  \[
    \mathcal{F}[X_2]-\mathcal{F}[X_1] =  \int_0^1     \frac{\dif  \mathcal{F}[X_s] }{\dif s}  \dif s = \int_0^1  \delta\mathcal{F}[X_s,\delta X] \dif s.
  \]
  The space $\mathcal{B}_\delta$ is a convex space, hence $X_s\in\mathcal{B}_\delta $ for every $s$.
  To control the functional derivative of $\mathcal{F}$ given in \eqref{eq:def-F}, we analyze the functional derivatives of its component. 
  In view of Remark \ref{re:functionalderivative} and having control on how $V$ depends on $X$ and on $a$, see e.g. \eqref{potential-2}, the boundedness of the functional derivative of $Q^d_f$ with respect to $X'$ descends from Lemma \ref{le:a[x]} and from the bounds established in Proposition \ref{prop:decomposition}. Similarly, the boundedness of $Q^d_s$ descends from Proposition \ref{prop:FV} and that of $Q^d_0$ directly from its Definition in \eqref{eq:Qd0}. Finally, the boundedness of the functional derivative of the time derivative of $a^2 F$ can be obtained arguing as in Proposition \ref{prop:Pc}, see also the explicit results stated in Theorem \ref{theo:sol-SCE-wave}.
  Collecting all these observations, we have that, for $X_1,X_2\in \mathcal{B}_\delta$, there exists a suitable constant ${C}$ such that $\| \delta\mathcal{F}[X_s,\delta X] \|_{\infty} \leq  {C}\|\delta X'\|_\infty$. Hence, 
  \[
    \| \mathcal{F}[X_2]-\mathcal{F}[X_1] \|_\infty \leq C \|X_2' - X_1'\|_\infty.
  \]
  Furthermore, operating in a similar way for the first contribution in the difference $\mathcal{C}[X'_2]-\mathcal{C}[X'_1]$, we have
  \begin{align*}
    aa'[X_2]-aa'[X_1]
    &= a[X_2](a'[X_2]-a'[X_1])+(a[X_2]-a[X_1]) a'[X_1] \\
    &= a[X_2] \int_0^1 \dif s \delta a'[X_s,\delta X] + a'[X_1] \int_0^1 \dif s \delta a[X_s,\delta X],
  \end{align*}
  where $\delta a$ is the functional derivative of $a$. Hence, using estimates similar to those of Lemma \ref{le:a[x]}, we get
  \[
    \| aa'[X_2]-aa'[X_1] \|_\infty \leq (\tau_1-\tau_0) C \|X_2'-X_1'\|_\infty,
  \]
  where $C$ is a suitable constant.
  Combining these results and using the continuity of $\operatorT_{\tau_0}^{-1}$ obtained in Proposition \ref{prop:T-inverse}, we have that for a suitable constant $C$ which does not depend on $\tau_1-\tau_0$ for $\tau_1-\tau_0< \epsilon$:
  \begin{equation}
    \label{eq:con2}
    \| \mathcal{C}[X'_2]-\mathcal{C}[X'_1]\|_\infty \leq ((\tau_1-\tau_0) + C_\infty(\tau_1-\tau_0)) C \|X'_2-X'_1\|_\infty.
  \end{equation}
  We thus have that for $\tau_1-\tau_0$ sufficiently small the action of $\mathcal{C}$ is internal in $\mathcal{B}_\delta$ thanks to \eqref{eq:con1} and at the same time $\mathcal{C}$ is a contraction map thanks to \eqref{eq:con2}.
\end{proof}

\begin{theorem}
  \label{theo:contractio}
  Let $(a_0, {a'}_0, {a''}_0, a^{(3)}_0)$ be some initial data for the functional equation \eqref{fixed-point-equation} given at $\tau_0$ with $a_0>0$ and such that $\Omega_k^2(\tau_0)$ given in \eqref{mode-eq} is strictly positive.
  Consider a quasifree state $\omega$, which is sufficiently regular and
  compatible with these initial conditions. There exist a non-empty interval $[\tau_0,\tau_1]$ and a closed ball $\cB_\delta =\{X'\in C[\tau_0,\tau_1] \mid X'(\tau_0)=X'_0\}$ of radius $\delta >0$ such that, for sufficiently small $\tau_1$, a unique solution to \eqref{fixed-point-equation} exists.
\end{theorem}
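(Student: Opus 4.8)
The plan is to obtain the statement as an essentially immediate consequence of the fixed-point reformulation established in Lemma~\ref{lem:cT}, the contraction estimates of Proposition~\ref{prop:contraction}, and the Banach fixed-point theorem. First I would recall the chain of equivalences already proved: by Theorem~\ref{theo:sol-SCE-wave}, once the state $\omega$ is chosen compatible with the initial data the constraint \eqref{eq:SCE-constraint} holds automatically and the system \eqref{SCE-trace} reduces to the single equation \eqref{SCE-state}; by the choice of initial conditions for $F$ in \eqref{eq:init-cond-F} this is equivalent to its differentiated version \eqref{eq:time-derivative-phi2}; and by Proposition~\ref{prop:SCE-V} together with Lemma~\ref{lem:cT} the latter is equivalent to the fixed-point equation \eqref{SCE-fixed-point}, $X'=\mathcal{C}[X']$ on $C[\tau_0,\tau_1]$, with $X=a''/a$ and $\mathcal{C}$ the explicit map built from $Q^d_0$, $Q^d_f$, $Q^d_s$, $F$ and the inverse operator $\operatorT^{-1}_{\tau_0}$. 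Proving existence and uniqueness of a solution of \eqref{fixed-point-equation} is therefore all that remains.

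Next I would fix $\delta>0$ arbitrarily and consider the closed ball $\cB_\delta\subset C[\tau_0,\tau_1]$ of \eqref{eq:ball-delta} centred at the constant function $X'_c(\tau)=X'_0$. Since $C[\tau_0,\tau_1]$ equipped with the uniform norm is a Banach space and $\cB_\delta$ is closed, $\cB_\delta$ is a complete (non-empty) metric space. By Proposition~\ref{prop:contraction} there is $\epsilon>0$ such that whenever $0<\tau_1-\tau_0<\epsilon$ one has simultaneously: that $\Omega_k^2$ given in \eqref{mode-eq} remains strictly positive on $[\tau_0,\tau_1]$ uniformly for $X'\in\cB_\delta$, so that every object entering $\mathcal{C}$ is well defined; that $\mathcal{C}(\cB_\delta)\subseteq\cB_\delta$, via the bound \eqref{eq:con1} together with $C_\infty(r)\to0$ as $r\to0$; and that $\mathcal{C}$ has Lipschitz constant $((\tau_1-\tau_0)+C_\infty(\tau_1-\tau_0))\,C<1$, via \eqref{eq:con2}. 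Hence, after shrinking $\tau_1$ if necessary, $\mathcal{C}\colon\cB_\delta\to\cB_\delta$ is a contraction.

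Then I would apply the Banach fixed-point theorem to produce a unique $X'\in\cB_\delta$ with $X'=\mathcal{C}[X']$, which is exactly the asserted unique solution of \eqref{fixed-point-equation} on the non-empty interval $[\tau_0,\tau_1]$. Reading the equivalences backwards, integrating with $X(\tau_0)=X_0$ and solving $a''=Xa$ with $a(\tau_0)=a_0$, $a'(\tau_0)=a'_0$ reconstructs a scale factor realising the prescribed initial data $(a_0,a'_0,a''_0,a_0^{(3)})$ and solving the semiclassical Einstein equation \eqref{SCE-trace} on $[\tau_0,\tau_1]$; uniqueness in $\cB_\delta$ is part of the Banach statement.

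Since the quantitative content is packaged into Proposition~\ref{prop:contraction}, the proof is short, and the only point demanding care is the order of the choices: one must first fix $\delta$ and only afterwards pick $\tau_1-\tau_0$ small enough to guarantee \emph{at once} the positivity of $\Omega_k^2$ throughout $[\tau_0,\tau_1]$ (uniformly over $\cB_\delta$), the internality $\mathcal{C}(\cB_\delta)\subseteq\cB_\delta$, and a contraction constant strictly below one. This is possible because the relevant constants $C_\delta$, $C$ and $C_\infty$ depend continuously on $\tau_1-\tau_0$ and $C_\infty(r)\to0$ as $r\to0$; this last fact, which ultimately rests on the regularity of $\operatorT^{-1}_{\tau_0}$ proved in Proposition~\ref{prop:T-inverse}, is the crucial input and is exactly what makes the direct obstruction (the loss of derivatives in $\operatorT_{\tau_0}$ from Proposition~\ref{prop:T}) harmless.
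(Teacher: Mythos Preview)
Your proposal is correct and follows exactly the paper's approach: it invokes Proposition~\ref{prop:contraction} to obtain that $\mathcal{C}$ is a contraction on $\cB_\delta$ for sufficiently small $\tau_1-\tau_0$, and then applies the Banach fixed-point theorem. The paper's own proof is in fact a two-sentence version of precisely this argument, so your more detailed unpacking of the chain of equivalences and the order of the choices is entirely in line with (and more explicit than) what the paper does.
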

\begin{proof}
  The existence of a regular quasifree state compatible with the initial conditions for $a$ is established in Proposition \ref{prop:Friedman-constraint}.
  On account of Proposition \eqref{prop:contraction}, the proof is an application of the Banach fixed point theorem to the contraction map $\mathcal{C}$ on $\cB_{\delta}$.
\end{proof}

\begin{remark}
  The scale factor $a$ corresponding to the unique solution obtained in Theorem \ref{theo:contractio} is an element on $C^3[\tau_0,\tau_1]$.
  We do not have direct control on its fourth order derivative.
  Having third order derivative of $a$ at disposal, we can thus directly check the validity of the first Friedmann equation at any time in $[\tau_0,\tau_1]$, but that regularity is not sufficient to control the traced semiclassical Einstein equation in the form \eqref{trace-equation} at $\tau$ larger than $\tau_0$.
  For this reason, the obtained solution is only a mild solution of the semiclassical problem.
  To improve this result, namely to obtain a unique solution $a \in C^4[\tau_0,\tau_2]$ for some $\tau_0<\tau_2<\tau_1$, there is the need of a better control on the state.
  But this could be achieved imposing constraints on the initial conditions of the modes in \eqref{eq:init-cond-state} stronger than those given in \eqref{eq:init-regular1} and in \eqref{eq:init-regular2}.
\end{remark}

\begin{remark}
  Combining the results of Proposition \ref{prop:Friedman-constraint}, of Theorem \ref{prop:Friedman-constraint} and of Theorem \ref{theo:contractio} we have proved that it is possible to find a unique solution of the semiclassical Einstein equation.
  The control on $\langle\wick{\phi^2}\rangle_\omega$ provided by Proposition \ref{prop:decomposition} and the analysis about the continuity of $\operatorT_{\tau_0}^{-1}$ yield the continuity of the obtained solution with respect to the initial conditions for the scale factor.
  Actually, $\operatorT_{\tau_0}^{-1}$ does not depend on the initial conditions.
  The unique solution $F$ obtained in \ref{prop:Friedman-constraint} depends continuously on its initial data and the estimates of Proposition \ref{prop:Friedman-constraint} permit to control how the initial data for $F$ depend on the initial data of the scale factor.
\end{remark}

\section{Conclusion}

In this paper we have studied the backreaction of a quantum linear scalar field coupled with gravity on cosmological spacetimes.
We have shown that a unique solution on a small interval of time exists once some initial conditions at finite time $\tau = \tau_0$ are fixed.
Having established the existence and uniqueness of solutions, it is now meaningful to look for numerical algorithms to find approximate solutions.
However, as we have seen in this paper, in order to have a meaningful fixed point equation, the semiclassical equation needs to be rewritten in a non-standard form and only after this step it is possible to apply the Banach fixed point theorem.
Hence, to find numerical solutions, a possibility is to recursively apply the contraction map $\mathcal{C}$ to some initial spacetime since the convergence of this methods is thus guaranteed.
A recursive procedure obtained by a direct application of the semiclassical equation will hardly be convergent because of the loss of derivatives present in the expectation values of the field observables.

There are still open questions  in particular a discussion about the existence and uniqueness of global (maximal) solutions, as carried out in \cite{Pinamonti:2013wya} and \cite{Gottschalk:2018kqt}, is missing. In this framework, Ostrogradsky's theorem merits a remark, since higher-order derivative terms in $\expvalom{T_{ab}}$ could represent a source of instability inside the equations. In \cite{Koksma:2008jn} for instance, this problem is pointed out referring to the trace-anomaly term. For an outline about Ostrogradsky's theorem, see for instance \cite{Woodard:2006nt, Woodard:2015zca}.
A discussion about the limits of validity of the solutions of semiclassical equations and the role played by their non-classical terms is present in \cite{Flanagan:1996gw}. Moreover, a ``reduction of order'' prescription to select physically reasonable solutions is proposed, following the so-called reduced Simon-Parker theory \cite{Simon:1991red, ParkerSimon:1993} (see also \cite{SiemieniecOzieblo:1999fz}). As already remarked in \cite{Gottschalk:2018kqt}, it could be interesting to investigate how that prescription could be applied after rewriting the semiclassical equations in these non-standard forms.

A prime generalization of our analysis can be carried out on non-flat Robertson-Walker spacetimes. We expect that the same techniques can be adopted to study the system of equations \eqref{SCE-wave-intro} even in this case, after imposing similar conditions for a sufficiently regular state (actually, adiabatic states can be constructed also for this class of spacetimes). 

Finally, we expect that the problem with the higher derivatives is present also for other different choices of backgrounds, e.g. spherically symmetric spacetimes. We thus expect that also in the analysis of black hole evaporation on four-dimensional spacetimes the semiclassical equations need to be rewritten in an appropriate way before looking for approximate solutions (the two-dimensional case is well-discussed in \cite{Ashtekar:2010qz, Ashtekar:2010hx}).

\subsection*{Acknowledgments}
We would like to thank Hanno Gottschalk for the many discussions we had on the problem of the semiclassical Einstein equation in cosmology and for his careful reading of an earlier version of this paper.

\appendix

\section{Second order differential equations}

In the text we have often obtained equations for $f\in C^{n}[\tau_0,\infty)$ with $n\geq 2$ of the form
\begin{equation}\label{eq:second-order-ode}
\begin{cases}
	f''+ (k^2+ W)f = h, \\
	(f(\tau_0),f'(\tau_0))  = (f_0,f'_0), 
\end{cases}
\end{equation}
where $k$ is some constant, $W$ and $h$ are known functions in $C^{n}[\tau_0,\infty)$ and where $f_0,f_0'$ are suitable constants expressing initial conditions for $f$ at $\tau_0$. 
By standard results we know that a unique solution $f$ of \eqref{eq:second-order-ode} exists. 
In the next lemma we derive some useful properties of the solution of \eqref{eq:second-order-ode}.

\begin{lemma}
  \label{le:second-order-ODE}
  Let $f\in C^n[\tau_0,\infty)$ be the unique solution of \eqref{eq:second-order-ode}. Hence, for $k\geq 0$
  \begin{equation}
  \label{eq:rewert-equation}
 	{f}  =  - \Delta_R^k* ({W} {f})   +  \Delta_R^0*   {h} + {f}_0 \cos(k(\tau-\tau_0)) + {f}'_0 \frac{\sin(k(\tau-\tau_0))}{k},
  \end{equation}
  where $\Delta_R^k(\tau) = \frac{\sin(k\tau)}{k} \theta(\tau)$ for $k\geq0$ is the retarded fundamental solutions of $\dif^2/\dif\tau^2 +k^2$ and in particular at $k=0$ $\Delta_R^0(\tau) = \tau \theta(\tau)$. Furthermore, the convolution $*$ is computed on the interval $[\tau_0,\infty)$. Then, the following estimate holds for $k\geq 0$, $\tau\geq \tau_0$:
  \begin{equation}
    \label{eq:second-order-ODE-ineq}
    |{f}(\tau)| \leq \left( |f_0| + (\tau-\tau_0)|f'_0|+ (\tau-\tau_0)^2 \|{h}\|_\infty \right)\exp \left( (\tau-\tau_0)^2\| {W} \|_\infty\right).
  \end{equation}
    Furthermore, for $k>0$ we have for $\tau\geq \tau_0$
  \begin{equation}
  	\label{eq:second-order-ODE-ineq-k}
  	|f(\tau)| \leq \left( |f_0| + \frac{1}{k}|f'_0|+ \frac{1}{k} \int_{\tau_0}^\tau  |{h}(\eta)| \dif \eta  \right)\exp    
   \left( \frac{1}{k}\int_{\tau_0}^\tau  |W|  \dif \eta \right).
  \end{equation}
\end{lemma}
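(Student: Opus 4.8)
The plan is to turn \eqref{eq:second-order-ode} into an equivalent Volterra integral equation and then close the estimates with Grönwall's inequality. Existence and uniqueness of $f$ being standard for a linear second-order equation with continuous coefficients, I would first establish \eqref{eq:rewert-equation}. Rewriting \eqref{eq:second-order-ode} as $f'' + k^2 f = h - W f$ and applying Duhamel's principle for the operator $\partial_\tau^2 + k^2$: its retarded fundamental solution is $\Delta_R^k(\tau) = k^{-1}\sin(k\tau)\,\theta(\tau)$ for $k>0$, with limit $\Delta_R^0(\tau) = \tau\,\theta(\tau)$, and $(\partial_\tau^2 + k^2)\Delta_R^k = \delta$; meanwhile $f_0\cos(k(\tau-\tau_0)) + f_0' k^{-1}\sin(k(\tau-\tau_0))$ solves the homogeneous equation with Cauchy data $(f_0,f_0')$ at $\tau_0$. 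One then verifies \eqref{eq:rewert-equation} by checking directly that its right-hand side solves \eqref{eq:second-order-ode}: differentiating twice and using $(\partial_\tau^2 + k^2)\Delta_R^k = \delta$ reproduces the source $h - Wf$, while the value and first $\tau$-derivative at $\tau_0$ return $(f_0,f_0')$ because the convolutions over the empty interval and their $\tau$-derivatives vanish there; uniqueness then identifies it with $f$. All convolutions are finite since $W$, $h$, $f$ are continuous on every compact subinterval of $[\tau_0,\infty)$.

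For \eqref{eq:second-order-ODE-ineq} I would take absolute values in \eqref{eq:rewert-equation}, using $|\cos|\leq 1$ and the elementary estimate $|\Delta_R^k(s)| = |k^{-1}\sin(ks)| \leq s$ for $s\geq 0$ (which also covers $\Delta_R^0(s)=s$), together with $|k^{-1}\sin(k(\tau-\tau_0))|\leq \tau-\tau_0$. Fixing the right endpoint $T>\tau_0$, restricting to $\tau\in[\tau_0,T]$, and bounding $\tau-\eta\leq T-\tau_0$ and $\int_{\tau_0}^\tau|h|\leq(T-\tau_0)\|h\|_\infty$, this yields
\[
 |f(\tau)| \leq |f_0| + (T-\tau_0)|f_0'| + (T-\tau_0)^2\|h\|_\infty + (T-\tau_0)\|W\|_\infty\int_{\tau_0}^\tau |f(\eta)|\dif\eta .
\]
Grönwall's inequality with the constant coefficient $(T-\tau_0)\|W\|_\infty$ then gives $|f(T)|\leq\bigl(|f_0|+(T-\tau_0)|f_0'|+(T-\tau_0)^2\|h\|_\infty\bigr)\exp\bigl((T-\tau_0)^2\|W\|_\infty\bigr)$, which is \eqref{eq:second-order-ODE-ineq} after renaming $T$ as $\tau$.

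For the sharper bound \eqref{eq:second-order-ODE-ineq-k} when $k>0$ I would run the same argument with the kernel estimate replaced by $|\Delta_R^k(s)|\leq 1/k$ for every $s\geq 0$ (and $|k^{-1}\sin(k(\tau-\tau_0))|\leq 1/k$), obtaining
\[
 |f(\tau)| \leq |f_0| + \tfrac1k|f_0'| + \tfrac1k\int_{\tau_0}^\tau|h(\eta)|\dif\eta + \tfrac1k\int_{\tau_0}^\tau|W(\eta)|\,|f(\eta)|\dif\eta .
\]
Since the first three terms are nondecreasing in $\tau$, the integral form of Grönwall's inequality with variable coefficient $k^{-1}|W(\eta)|$ yields \eqref{eq:second-order-ODE-ineq-k} directly. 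I do not foresee a genuine obstacle: the only point requiring a little care is that in \eqref{eq:second-order-ODE-ineq} the factor $(\tau-\eta)\|W\|_\infty$ multiplying $f(\eta)$ depends on the upper limit and sits outside the integral, which is why one first freezes the endpoint $T$, uses that the non-integral terms are monotone in $\tau$, applies the constant-coefficient Grönwall lemma on $[\tau_0,T]$, and only afterwards lets $T$ range over $(\tau_0,\infty)$. Everything else is a routine application of Duhamel's formula and Grönwall's lemma.
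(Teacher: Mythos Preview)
Your proof is correct and follows essentially the same approach as the paper: establish the Volterra integral equation via Duhamel's principle (the paper phrases this as convolving both sides of \eqref{eq:second-order-ode} with $\Delta_R^k$ and integrating by parts twice), then close the pointwise bounds with Gr\"onwall's inequality. You are in fact more explicit than the paper about freezing the endpoint $T$ to handle the $(\tau-\tau_0)$ factor sitting outside the integral in the first estimate; the paper simply invokes the nondecreasing-$\alpha$ form of Gr\"onwall without spelling this out.
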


\begin{proof}
Equation \eqref{eq:rewert-equation} can be obtained computing the convolution of both sides of 
\eqref{eq:second-order-ode} with $\Delta_R^k$ on $[\tau_0,\infty)$ and integrating by parts a couple of times.
We obtain the desired estimates applying Gr\"onwall lemma in the form which states that, if $u(\tau)\leq \alpha(\tau) +\int_{\tau_0}^\tau \beta(\eta) u(\eta) \dif \eta  $ for $\beta$ a non negative function on $[\tau_0,\infty)$ and $\alpha$ a non decreasing function on  $[\tau_0,\infty)$, then it holds that $u(\tau) \leq \alpha(\tau) \exp \int_{\tau_0}^\tau \beta(\eta) \dif \eta$.
In particular, from equation \eqref{eq:rewert-equation} we get for $\tau\geq \tau_0$
\[
  |f(\tau)|  \leq  (\tau-\tau_0) \int_{\tau_0}^\tau | {W}(\eta) {f}(\eta)| \dif \eta + (\tau-\tau_0) \int_{\tau_0}^\tau   |{h}(\eta)| \dif \eta + |{f}_0| + |{f}'_0| (\tau-\tau_0),  
\]
or for $k>0$
\[
  |f(\tau)| = \frac{1}{k} \int_{\tau_0}^\tau |{W}(\eta) {f}(\eta)| \dif \eta + \frac{1}{k} \int_{\tau_0}^\tau |{h}(\eta)| \dif \eta + |{f}_0| + \frac{1}{k}  |{f}'_0|, 
\]
hence by Gr\"onwall lemma we get the desired estimate for $|f(\tau)|$ stated in \eqref{eq:second-order-ODE-ineq} and 
in \eqref{eq:second-order-ODE-ineq-k}.
\end{proof}

\begin{lemma}
  \label{le:a[x]}
  Let $a\in C^{2}[\tau_0,\tau_1]$ be the unique solution of $a'' = Xa$ with $a'(\tau_0)=a'_0$ and $a(\tau_0)=a_0$ with $X\in C^{1}[\tau_0,\tau_1]$.
  Then the following inequalities hold:
  \begin{align*}
    \|a - a_0\|_\infty    & \leq (\tau_1-\tau_0) \left(   |a_0'| + |a_0| \frac{(\tau_1-\tau_0)}{2} \|X\|_\infty \right) \exp \left(\frac{(\tau_1-\tau_0)^2}{2} \|X\|_\infty \right), \\ \| a'-a_0'\|_\infty & \leq \frac{(\tau_1-\tau_0)^2}{2} \left(a_0\|X\|_\infty + \|a\|_\infty \|X'\|_\infty \right) \exp \left(\frac{(\tau_1-\tau_0)^2}{2} \|X\|_\infty \right),
    \\
    \| \delta a \|_\infty & \leq \frac{(\tau_1-\tau_0)^2}{2}  \| a\|_\infty
    \exp \left(\frac{(\tau_1-\tau_0)^2}{2} \|X\|_\infty \right)\|\delta X\|_\infty,
    \\
    \|\delta a'\|_\infty  & \leq \frac{(\tau_1-\tau_0)^2}{2}\left(\| (a\delta X)'\|_\infty+\| X'\delta a \|_\infty \right)
    \exp \left(\frac{(\tau_1-\tau_0)^2}{2} \|X\|_\infty \right),
  \end{align*}
  where $\delta a[X,\delta X]$ denotes the functional derivatives with respect to infinitesimal changes $\delta X \in C^{1}[\tau_0,\tau_1]$ and where the uniform norms are computed on the interval $(\tau_0,\tau_1)$.
\end{lemma}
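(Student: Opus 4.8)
The plan is to turn the Cauchy problem $a''=Xa$, $(a,a')(\tau_0)=(a_0,a_0')$ into a Volterra integral equation and then run Gr\"onwall-type estimates. This is precisely the situation of Lemma~\ref{le:second-order-ODE} with $k=0$, $W=-X$ and $h=0$: evaluating \eqref{eq:rewert-equation} at $k=0$ (so that $\Delta_R^0(\tau)=\tau\theta(\tau)$ and $\sin(k(\tau-\tau_0))/k\to\tau-\tau_0$) gives
\[
  a(\tau)=a_0+a_0'(\tau-\tau_0)+\int_{\tau_0}^{\tau}(\tau-\eta)\,X(\eta)\,a(\eta)\dif\eta ,
\]
and differentiating once in $\tau$ (the boundary contribution vanishes because the kernel $(\tau-\eta)$ is zero at $\eta=\tau$) yields $a'(\tau)=a_0'+\int_{\tau_0}^{\tau}X(\eta)\,a(\eta)\dif\eta$. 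The estimate \eqref{eq:second-order-ODE-ineq} of Lemma~\ref{le:second-order-ODE} already furnishes the bound on $\|a\|_\infty$ that occurs on the right-hand sides of the last three inequalities.

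\emph{First inequality.} Subtract $a_0$ from the integral identity, write $a=a_0+(a-a_0)$ in the integrand and split off the $a_0$-piece, whose contribution is the explicit term $|a_0|\,\|X\|_\infty\int_{\tau_0}^{\tau}(\tau-\eta)\dif\eta=|a_0|\,\|X\|_\infty(\tau-\tau_0)^2/2$; what remains is a linear Volterra inequality for $u(\tau)\doteq|a(\tau)-a_0|$ with the weighted kernel $(\tau-\eta)$. Applying Gr\"onwall in the refined form adapted to this kernel — $u(\tau)\le c+b\int_{\tau_0}^{\tau}(\tau-\eta)u(\eta)\dif\eta$ implies $u(\tau)\le c\cosh\!\big(\sqrt{b}\,(\tau-\tau_0)\big)\le c\,\mathrm{e}^{b(\tau-\tau_0)^2/2}$, via $\cosh x\le\mathrm{e}^{x^2/2}$ — produces exactly the stated inequality, with the factor $\tfrac12$ both in the prefactor and in the exponent. (Replacing $(\tau-\eta)$ by $\tau_1-\tau_0$ and using the ordinary Gr\"onwall lemma gives the same bound up to an irrelevant constant, which would also suffice for all later uses.)

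\emph{Derivative and Gateaux-derivative bounds.} For $\|a'-a_0'\|_\infty$ I would start from $a'(\tau)-a_0'=\int_{\tau_0}^{\tau}X a\dif\eta$, integrate by parts against the primitive $\eta\mapsto(\eta-\tau)$ of $\dif\eta$ (so that the boundary term at $\eta=\tau$ drops), expand $(Xa)'=X'a+Xa'$, insert $a'=a_0'+(a'-a_0')$, and close the resulting Volterra inequality by Gr\"onwall; the inhomogeneity is $\tfrac12(\tau_1-\tau_0)^2\big(a_0\|X\|_\infty+\|a\|_\infty\|X'\|_\infty\big)$ once $\|a\|_\infty$ is controlled as above. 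For the functional derivatives I would differentiate the equation itself: $\delta a\doteq\delta a[X,\delta X]$ solves the \emph{linearised} problem $(\delta a)''=X\,\delta a+(\delta X)\,a$ with \emph{vanishing} initial data, since $a_0,a_0'$ are prescribed independently of $X$. This is again of the form \eqref{eq:second-order-ode} with $W=-X$, source $h=(\delta X)\,a$, $f_0=f_0'=0$, so \eqref{eq:rewert-equation} gives $\delta a(\tau)=\int_{\tau_0}^{\tau}(\tau-\eta)\big[X\,\delta a+(\delta X)\,a\big]\dif\eta$; the source contributes $\|a\|_\infty\|\delta X\|_\infty(\tau-\tau_0)^2/2$ and the refined Gr\"onwall applied to the kernel part gives the third inequality. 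Differentiating this identity once more gives $(\delta a)'(\tau)=\int_{\tau_0}^{\tau}\big[X\,\delta a+(\delta X)\,a\big]\dif\eta$; integrating by parts against $\eta\mapsto(\eta-\tau)$ and noting that the surviving boundary term at $\tau_0$ vanishes — because $X(\tau_0)=X_0$ is fixed, hence $\delta X(\tau_0)=0$, and $\delta a(\tau_0)=0$ — leaves $\int_{\tau_0}^{\tau}(\tau-\eta)\big[X(\delta a)'+X'\delta a+(a\,\delta X)'\big]\dif\eta$, and one more Gr\"onwall step produces the fourth inequality.

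\emph{Main obstacle.} None of the steps is deep; the work is bookkeeping. The one point requiring genuine care is the choice of primitive in the two integrations by parts used for the derivative estimates — one must pick the antiderivative of $\dif\eta$ that annihilates the boundary term at the right endpoint, and then check that the surviving boundary term at $\tau_0$ either vanishes (for $\delta a'$, thanks to the fixed initial data and $\delta X(\tau_0)=0$) or can be absorbed together with the remaining terms (for $a'-a_0'$). After that, every inequality is closed by the same Gr\"onwall argument and the elementary bound $\cosh x\le\mathrm{e}^{x^2/2}$.
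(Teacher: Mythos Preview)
Your proposal is correct and follows essentially the same route as the paper: convert $a''=Xa$ into the Volterra equation via Lemma~\ref{le:second-order-ODE} at $k=0$, split $a=a_0+(a-a_0)$ for the first bound, recover the $(\tau-\eta)$ kernel for the derivative estimates by integration by parts, and take the Gateaux derivative of the integral identity for $\delta a$ and $\delta a'$. Your use of the refined Gr\"onwall bound $u\le c\cosh(\sqrt{b}(\tau-\tau_0))\le c\,\mathrm e^{b(\tau-\tau_0)^2/2}$ makes the factors $1/2$ in the prefactors and exponents more transparent than the paper's terse ``by Gr\"onwall'', and your explicit observation that the boundary term in the $\delta a'$ estimate vanishes because $\delta X(\tau_0)=0$ (the initial value $X_0$ being fixed) is a point the paper's displayed formula for $\delta a'$ uses without comment.
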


\begin{proof}
We apply Lemme \eqref{le:second-order-ODE} to the equation $a''=Xa$, namely for $k=0$. In that case the retarded fundamental solution $\Delta_R(\tau) =\tau\theta(\tau)$, hence from \eqref{eq:rewert-equation} we get 
  \begin{equation}
    \label{eq:a(tau)}
    a(\tau) = a_0+(\tau-\tau_0) a_0'+ \int_{\tau_0}^\tau (\tau-\eta) X(\eta )a(\eta) \dif \eta.
  \end{equation}
  We have that
  \[
  	|a - a_0| \leq (\tau-\tau_0) |a_0'| + \int_{\tau_0}^\tau (\tau-\eta) | X(\eta )||a(\eta)-a_0|\dif \eta + |a_0| \frac{|(\tau-\tau_0)|^2}{2} \|X\|_\infty.
  \]
  Gr\"onwall inequality gives the first inequality. The bound for the first derivative can be obtained in a similar way starting from the first derivative of equation \eqref{eq:a(tau)}
  \[
	  a'(\tau)-a_0' =  \int_{\tau_0}^\tau (\tau-\eta) \left(X(\eta )a_0+X'(\eta )a(\eta) \right) \dif \eta + \int_{\tau_0}^\tau (\tau-\eta) X(\eta )(a'(\eta)-a'_0) \dif \eta, 
  \] 
  then writing the corresponding local inequality and finally applying again Gr\"onwall inequality. The inequalities for the functional derivatives are obtained computing the first functional derivatives of \eqref{eq:a(tau)}, which read
  \begin{align*}
    \delta a & = \int_{\tau_0}^\tau (\tau-\eta) \left( \delta X(\eta )a(\eta) + X(\eta )\delta a(\eta) \right)  \dif \eta, \\ \delta a' & = \int_{\tau_0}^\tau (\tau-\eta) \left( \delta X(\eta )a(\eta))' + X'(\eta )\delta a(\eta)+ X(\eta )\delta a'(\eta) \right)  \dif \eta.
  \end{align*}
  Eventually, we get the desired results after operating as before.
\end{proof}

\newcommand{\SortNoop}[1]{}

\end{document}